\documentclass[11pt]{article}

\usepackage{hyperref}

\usepackage{amsmath,amssymb,amsthm,mathtools}
\usepackage{tikz}
\usepackage[capitalize,noabbrev]{cleveref}
\usepackage{fullpage}
\usepackage{authblk}
\usepackage{xspace}

\usepackage[absolute]{textpos}

\newtheorem{theorem}{Theorem}[section]
\newtheorem{lemma}[theorem]{Lemma}
\newtheorem{corollary}[theorem]{Corollary}
\newtheorem{proposition}[theorem]{Proposition}

\newtheorem{claim}{Claim}
\newtheorem{definition}[theorem]{Definition}
\newtheorem{observation}[theorem]{Observation}
\newtheorem{question}[theorem]{Question}

\newcommand\abs[1]{\lvert #1\rvert}

\newenvironment{proofof}[1]{\noindent{\bfseries Proof of #1.}}{\hfill\qed\medskip}

\newcommand{\Oh}{\mathcal{O}}

\newcommand{\A}{\mathbb{A}}

\newcommand{\F}{\mathcal{F}}
\newcommand{\G}{\mathbb{G}}

\newcommand{\N}{\mathbb{N}}

\renewcommand{\P}{\mathcal{P}}

\newcommand{\say}[1]{``#1''} %

\usepackage{xcolor}
\definecolor{brightpink}{RGB}{255,0,204} 
\definecolor{bordeaux}{RGB}{100,0,50}

\newcommand\optional[1]{ {}}
\usepackage{soul}

\newcommand{\yes}{\textsc{Yes}}
\newcommand{\no}{\textsc{No}}

\newcommand{\PLS}{\textsf{PLS}}

\newcommand{\prover}{\mathsf{P}}
\newcommand{\verifier}{\mathsf{V}}
\newcommand{\id}{\mathsf{id}}

\newcommand{\tw}{\mathsf{tw}}
\newcommand{\width}{\mathsf{\omega}}

\newcommand{\margin}{\mathsf{margin}}
\newcommand{\comp}{\mathsf{component}}
\newcommand{\adh}{\mathsf{adhesion}}

\newcommand{\str}{\mathsf{str}}
\newcommand{\parent}{\mathsf{p}}

\newcommand{\roots}{\mathsf{roots}}

\newcommand{\dist}{\mathsf{dist}}

\newcommand{\mso}{\textsf{MSO}$_2$}
\newcommand{\msoo}{\textsf{MSO}$_1$}
\newcommand{\arity}{\mathsf{ar}}
\newcommand{\edge}{\mathsf{edge}}
\newcommand{\vtx}{\mathsf{vtx}}
\newcommand{\edg}{\mathsf{edg}}

\newcommand{\inc}{\mathsf{inc}}
\newcommand{\true}{\textsf{true}}
\newcommand{\false}{\textsf{false}}

\begin{document}

\newcommand\relatedversion{}
\renewcommand\relatedversion{\thanks{The full version of the paper can be accessed at}} %

\title{A tight meta-theorem for LOCAL certification of \mso{} properties within bounded treewidth graphs}%

\date{}

\author[1]{Linda Cook\thanks{Supported by the Institute for Basic Science (IBS-R029-C1) and the gravitation programme NETWORKS (NWO grant no.\ 024.002.003) of the Dutch Ministry of Education, Culture and Science (OCW) and a Marie Skłodowska-Curie Action of the European Commission (COFUND grant no.\ 945045).}}
\author[2]{Eun Jung Kim}
\author[3]{Tomáš Masařík\thanks{Supported by the Polish National Science Centre SONATA-17 grant
number 2021/43/D/ST6/03312.}}
\affil[1]{Korteweg-de Vries Institute of Mathematics,
University of Amsterdam, The Netherlands
}
\affil[2]{School of Computing, KAIST, Daejeon, South Korea / CNRS, France.}
\affil[3]{Institute of Informatics, Faculty of Mathematics, Informatics and Mechanics,\newline University of Warsaw, Warsaw, Poland}

\maketitle

\begin{abstract}
	Distributed networks are prone to errors so verifying their output is critical. Hence, we develop LOCAL certification protocols for graph properties in which nodes are given certificates that allow them to check whether their network as a whole satisfies some fixed property while only communicating with their local network. Most known LOCAL certification protocols are specifically tailored to the problem they work on and cannot be translated more generally. Thus we target general protocols that can certify \emph{any} property expressible within a certain logical framework. We consider Monadic Second Order Logic (MSO$_2$), a powerful framework that can express properties such as non-$k$-colorability, Hamiltonicity, and $H$-minor-freeness. Unfortunately, in general, there are MSO$_2$-expressible properties that cannot be certified without huge certificates. For instance, non-3-colorability requires certificates of size $\Omega(n^2/\log n)$ on general $n$-vertex graphs (Göös, Suomela 2016). Hence, we impose additional structural restrictions on the graph. Inspired by their importance in centralized computing and Robertson-Seymour Graph Minor theory, we consider graphs of bounded treewidth. 

We provide a LOCAL certification protocol for certifying \emph{any}  MSO$_2$-expressible property on graphs of bounded treewidth and, consequently, a LOCAL certification protocol for certifying bounded treewidth. That is for each integer $k$ and each MSO$_2$-expressible property $\Pi$ we give a LOCAL Certification protocol to certify that a graph satisfies $\Pi$ \emph{and} has treewidth at most $k$ using certificates of size $\mathcal{O}(\log n)$ (which is asymptotically optimal). Our LOCAL certification protocol requires only one round of distributed communication, hence it is also \emph{proof-labeling scheme.} 

Our result improves upon work by Fraigniaud, Montealegre, Rapaport, and Todinca (Algorithmica 2024),  Bousquet, Feuilloley, Pierron (PODC 2022), and the very recent work of Baterisna and Chang.
 
\end{abstract}

\section{Introduction}
Distributed communication networks are increasingly vital in practical applications and consequently, the demand for distributed graph algorithms has likewise increased. Communication is a bottleneck in distributed systems, inspiring the need for algorithms where individuals only access local information about their network.
In the 1980s, Linial introduced the LOCAL model of computation to capture this notion of locality \cite{linial-OG-LOCAL-paper-1987, linial-1992-LOCAL-OG-journal-paper}.
 Since then the LOCAL model has become a standard model for distributed graph algorithms. 
A LOCAL algorithm is a distributed graph algorithm running in a \emph{constant} synchronous number of rounds regardless of the network size.
\optional{ possibly we can just reference a definition of the LOCAL model.)}
 
Distributed networks are liable to errors and changes so certifying their solutions and network properties is critical. This motivates the need for \say{LOCAL certification protocols} (also called locally checkable proofs). These are protocols in which vertices are given “certificates of a graph property” that allow the vertices to check whether their network has this global property while only communicating within their local network. 

Informally, a LOCAL certification for a graph property $\mathcal{P}$ is a pair of algorithms called Prover and Verifier. Prover is a centralized algorithm which, upon a graph $G \in \mathcal{P}$, outputs a \say{certificate} for every vertex $v \in V(G)$. Verifier is a LOCAL algorithm that runs on each vertex $v$ simultaneously with some certificate on the local network around $v$ as input.
The key property of a LOCAL certification is that if certificates are corrupted, some vertex in the network should be able to detect this.
That is, in a LOCAL certification of a graph property $\mathcal{P}$ if $G$ does not satisfy $\mathcal{P}$ then for any certificate assignment some vertex in $G$ will output \no \ when running the Verifier algorithm on the local instance.
On the other hand, if $G$ satisfies $\mathcal{P}$ then Prover will provide a truthful certificate, so to speak so that every vertex will output \yes \ when running the verification algorithm on the local instance.  (We postpone all formal definitions until \cref{sec:prelim}.)

Motivated by the fact that distributed networks may need to check their global properties often, LOCAL certification protocols are required to be both time and memory efficient.
Time efficiency is encapsulated by the fact that the Verifier is a LOCAL  (i.e., constant round distributed) algorithm.
When a LOCAL certification protocol requires only exchanging certificates between neighbors, it is called a \emph{proof labeling scheme} \cite{PLS}. Note, a proof labeling scheme runs in a \emph{single} round of distributed communication. 
Our LOCAL certification protocols in this paper are proof labeling schemes.
Since time efficiency is hard-wired into the definition of LOCAL certification,  the performance of a LOCAL certification protocol is measured by its memory efficiency.
The \emph{size} of a LOCAL certification is the size of a maximum certificate assigned by Prover to a vertex  (as a function of the number of vertices in the network).  The primary goal in LOCAL certification is to minimize the \emph{size} of a certification. 
A survey by Feuilloley~\cite{feuilloley2021introduction} can be consulted for further background on LOCAL certification and an exposition of its relationship with other areas including \emph{self-stabilization}.

It is well-known that \emph{every} graph property has a trivial LOCAL certification using  $\mathcal{O}(n^2)$ bits, where $n$ is the total number of vertices  \cite{feuilloley2021introduction}. 
Many interesting graph properties can be certified with just $\mathcal{O}(\log n)$ bits, including  planarity  \cite{feuilloley2020planar}, 
or more generally embeddability on \emph{any} fixed surface \cite{feuilloley2023boundedgenus, louis-graphsonsurfaces}, 
(minimum weight) spanning tree \cite{GoosSuomela16-locally-checkable-proofs, PLS, KormanKutten-spanningtrees}, 
maximum matching \cite{GoosSuomela16-locally-checkable-proofs, CENSORHILLEL-approximatePLS},  and bounded treedepth \cite{treedepth}.
On the other hand, many simple graph properties require $\Omega (\log n)$ bits.
For instance, in \cite{GoosSuomela16-locally-checkable-proofs} G\"{o}\"{o}s and Suomola showed that a LOCAL certification protocol for acyclicity, a spanning tree, or for having an odd number of vertices requires at least  $\Omega(\log n)$ bits.
In the same paper, the authors identified $\mathcal{O}(\log n)$ certificates as a particularly important case. $\mathcal{O}(\log n)$ has since become the gold-standard for certificate size in LOCAL certification;
In the literature, LOCAL certifications are called \emph{compact} if they are of size $\mathcal{O}(\log n)$.

Most LOCAL certification protocols are specifically tailored to the problem they work on and cannot be translated more generally \cite{treedepth}.
Hence, it is desirable to target general protocols in which \emph{all} properties expressible within a particular framework can be certified with compact certificates.
In this paper, we consider general LOCAL certification protocols that can be used to certify any property expressible within \say{monadic second-order logic} (\mso). \mso-expressible properties include non-$k$-colorability, 
non-Hamiltonicity, containing an $H$-minor for any fixed graph $H$,  the existence of a subcubic spanning forest, and, of course, the complements of these properties. 
Some \mso-expressible properties are known \emph{not} to allow a compact LOCAL certification.
For example, certifying non-3-colorability already requires certificates of size $\Omega(n^2 / \log n)$ \cite{GoosSuomela16-locally-checkable-proofs}. 
Hence, we need to impose additional structural conditions on the network in order to obtain compact certifications.

Inspired by their importance in algorithms and in graph structure, we consider bounded treewidth graphs in this paper.
Treewidth is a parameter measuring  \say{close} a graph is to be a tree and having bounded treewidth has many practical and theoretical implications. 
(see e.g.,  Epstein's recent introductory article to treewidth  \cite{treewidth-intro}).
In a seminal result in \emph{centralized} computing, Courcelle showed there is a polynomial-time algorithm to test for any property expressible in \mso \ on graphs of bounded treewidth \cite{Courcelle90}. 
In this paper, we obtain a \emph{tight} LOCAL certification analog to this result.
Our main result is a compact proof labeling scheme for having \emph{any} \mso-expressible  property in addition to having bounded treewidth. 
As is standard in distributed computing all graphs considered here are \emph{finite} and \emph{connected}.
\begin{theorem}\label{main}
For every integer $k$ and for every \mso-sentence $\psi$ on graphs, there is a proof labeling scheme $(\prover, \verifier)$ of size $O(\log n)$ 
for the class of graphs which has  treewidth at most $k$ 
and satisfies the sentence $\psi.$
\end{theorem}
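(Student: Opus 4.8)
The plan is to certify, in $\mathcal{O}(\log n)$ bits, a rooted tree decomposition of width at most $k$ and then run a Courcelle-style bottom-up automaton over that decomposition so that each node's certificate records the automaton state of its subtree (plus auxiliary consistency data). First I would have the Prover compute a tree decomposition $(T,\{B_t\})$ of width $\le k$ (possible since $G$ has treewidth $\le k$), and normalize it to a \emph{nice} tree decomposition of polynomial size rooted at some node $r$. Each vertex $v$ of $G$ receives: (i) a pointer structure describing a spanning BFS tree of $G$ rooted at $r$ (standard $\mathcal{O}(\log n)$ certification of a spanning tree, used to anchor distances and to let the verifier check global connectivity and node identifiers), (ii) the identifier of one or a few nodes $t$ of $T$ with $v\in B_t$, together with the identifier of $\parent(t)$, the contents of $B_t$ and $B_{\parent(t)}$ (each of size $\le k+1$, so $\mathcal{O}(\log n)$ bits), and the depth of $t$ in $T$; (iii) the \mso{} automaton state $q_t$ of the subtree $T_t$, for the relevant nodes $t$. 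The Verifier, at each vertex, checks locally that the claimed bags are consistent (shared vertices between $B_t$ and $B_{\parent(t)}$ behave correctly, every edge of $G$ incident to $v$ is covered by some bag, the ``vertices of $G$ form a subtree of $T$'' condition holds via the depth labels), that the spanning-tree pointers are consistent, and that the automaton transition at $t$ is correct given the states claimed at the children of $t$ and the introduce/forget/join type of $t$; finally the vertex owning the root $r$ checks that $q_r$ is an accepting state for $\psi$.

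The subtlety is that the decomposition tree $T$ is not the communication graph $G$, so a node $t$ of $T$ and its parent need not correspond to adjacent vertices of $G$; a verifier sitting at $v$ cannot directly ``see'' the child states it must aggregate. I would handle this exactly as in the earlier certification work the paper cites: use the separator property of bags. Since every bag has size $\le k+1$ and every vertex in a bag lies on the spanning tree, the Prover can route, along the spanning tree, the small amount of information (child automaton states, parent-bag contents) to the vertices that need it, and the Verifier checks this routing is locally consistent. Concretely, each edge of $T$ corresponds to an \emph{adhesion set} of $\le k$ vertices that separates $G$; I certify for each such adhesion set a path in the spanning tree connecting its vertices to the corresponding bag vertices on the other side, with counters so that a corrupted Prover cannot fake a non-existent node of $T$ or a non-existent subtree. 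Each such path and counter costs $\mathcal{O}(\log n)$ bits.

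The key steps, in order: (1) certify a rooted spanning tree of $G$ with distances to the root, $\mathcal{O}(\log n)$ bits, standard; (2) certify a nice rooted tree decomposition $(T,\{B_t\})$ of width $\le k$ of polynomial size — this certifies $\tw(G)\le k$ — by distributing bag contents, parent pointers, and depths, and checking the three tree-decomposition axioms locally using the separator/adhesion routing of step (1); (3) fix a finite tree automaton $\mathcal{A}_\psi$ equivalent to $\psi$ on width-$(k+1)$ tree decompositions (Courcelle's theorem, a constant-size object depending only on $k$ and $\psi$); (4) have the Prover annotate each node $t$ with its state $q_t=\mathcal{A}_\psi$-run on $T_t$, and have the Verifier check every transition locally and check acceptance at the root. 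Correctness of the ``no'' direction follows because any valid certificate forces a genuine width-$\le k$ tree decomposition of $G$ (bag axioms are locally checked and the counters prevent phantom nodes) and a genuine accepting run of $\mathcal{A}_\psi$, hence $G\models\psi$ and $\tw(G)\le k$; correctness of the ``yes'' direction is by the Prover honestly producing all of the above. The size bound holds because every piece handed to a vertex is either a bag ($\le k+1$ identifiers), a constant automaton state, a constant number of $T$-node identifiers with depths, or a spanning-tree/adhesion path descriptor — all $\mathcal{O}(\log n)$.

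The main obstacle I expect is step (2): making the tree decomposition itself locally verifiable with $\mathcal{O}(\log n)$ bits, in particular ruling out a cheating Prover who presents a ``tree decomposition'' whose underlying graph $T$ is not actually a tree, or which omits real vertices/edges of $G$, or which is exponentially large. The spanning-tree anchoring plus depth labels plus per-adhesion counters are what make this work, but getting the bookkeeping right — so that locality genuinely suffices to detect every inconsistency, and so that the decomposition can be assumed polynomial-size (invoking a known polynomial bound on the size of a nice tree decomposition of a bounded-treewidth graph) — is the technical heart of the argument. Everything downstream (running the automaton) is then a routine adaptation of Courcelle's theorem to the certification setting.
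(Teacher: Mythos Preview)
Your overall architecture---certify a decomposition, then run a Courcelle automaton over it---is the same as the paper's. You also correctly flag step (2) as ``the technical heart of the argument.'' But your proposal does not actually solve step (2), and this is a genuine gap, not just missing bookkeeping.

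The problem is \emph{congestion}. You write ``each such path and counter costs $\mathcal{O}(\log n)$ bits,'' but that is $\mathcal{O}(\log n)$ bits \emph{per path}. A single vertex $v$ of $G$ may lie on many of the routing paths you set up (one for each edge of $T$ whose adhesion-to-adhesion route passes through $v$ in the spanning tree), and nothing in your scheme bounds how many. Without such a bound the certificate at $v$ is not $\mathcal{O}(\log n)$. This is precisely why prior work (\cite{logn-squared-treewidth-2024}) only achieved $\mathcal{O}(\log^2 n)$.

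The paper's contribution is exactly here. It works with an \emph{elimination tree} $F$ on $V(G)$ (so nodes of the decomposition \emph{are} vertices of $G$, eliminating your ``who owns node $t$'' issue), and introduces the notion of \emph{locally verifiable treewidth}: $F$ together with an oriented path system in $G$ that supplies a $G$-path from each $v$ to its $F$-parent when they are not adjacent, subject to the constraint that every vertex lies on at most a bounded number of these paths. The hard theorem (\cref{thm:tw2lvtw}) is that treewidth $\le k$ implies locally verifiable treewidth $\le f(k)$ for some $f(k)\in 2^{2^{O(k^2)}}$; its proof goes through the Bojańczyk--Pilipczuk guidance-system machinery and is genuinely nontrivial. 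The paper also points out (citing \cite{squirrels-tree-minors-2024}) that for every function $f$ there are graphs $G$ with no width-$\le f(\tw(G))$ decomposition whose tree is even a \emph{minor} of $G$, so simple embedding strategies are doomed.

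One further point: the paper does \emph{not} directly certify a width-$k$ decomposition. Because the low-congestion path system is only guaranteed at the blown-up width $f(k)$, the scheme certifies locally verifiable treewidth $\le f(k)$ and then encodes ``$\tw(G)\le k$'' as an \mso{} sentence (via the finite forbidden-minor characterization) to be checked by the automaton alongside $\psi$. Your plan to certify width $\le k$ directly would need to explain why a congestion-bounded routing exists at the \emph{exact} optimal width, and there is no reason to believe it does.
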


As a consequence, we obtain:

\begin{corollary}\label{thm:treewidthPLS}
	For any positive integer $k$, there is a proof labeling scheme of size  $O(\log n)$ for graphs of bounded treewidth.
\end{corollary}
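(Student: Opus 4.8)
The plan is to reduce, via the automata-theoretic form of Courcelle's theorem, to two tasks that we solve simultaneously with a single $O(\log n)$-bit label per vertex. Note first that $\tw(G)\le k$ is itself \mso-expressible, so certifying \say{$\tw(G)\le k$ and $G\models\psi$} coincides with certifying a single \mso-sentence $\psi'$; the verifier must nonetheless reject every graph of treewidth exceeding $k$, so we genuinely have to exhibit and check a width-$\le k$ tree decomposition. The two tasks are: (i) certify a rooted tree decomposition $(T,\{B_t\})$ of $G$ of width at most $k$, presented so that each vertex can read the relevant part locally (this already witnesses $\tw(G)\le k$); and (ii) certify an accepting run on $(T,\{B_t\})$ of the finite bottom-up tree automaton $\mathcal A=\mathcal A_{k,\psi'}$ supplied by Courcelle's theorem. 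Task (ii) costs only $O(1)$ further bits per vertex once (i) is in place: the prover writes $\mathcal A$'s state at every node, and each node checks that its state is the image, under $\mathcal A$'s transition, of its children's states together with its own bag and edge data. So the whole difficulty lies in (i), together with the recurring need for a node to compare data with its neighbours in $T$.

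For (i) I would first normalise the decomposition: contract every edge $t\,\parent(t)$ with $B_t\subseteq B_{\parent(t)}$; split each node that introduces more than one vertex into a path introducing one vertex at a time; prepend above the root a staircase introducing the original root bag one vertex at a time, so that the new root bag is a singleton; and binarise each node of large degree through a caterpillar whose internal nodes are charged to distinct children. The result is a width-$\le k$ rooted binary tree decomposition on $O(n)$ nodes in which every vertex $v$ of $G$ is responsible for only $O(1)$ nodes, among them the node $\node(v)=\top(v)$ at which $v$ is introduced. Validity of such a decomposition is the conjunction of: every bag has size at most $k+1$; for every edge $uv$ of $G$, $u\in B_{\node(v)}$ or $v\in B_{\node(u)}$ (a local restatement of the edge axiom); the parent pointers form a single rooted tree; and every $T_v:=\{t:v\in B_t\}$ is connected, which in normalised form reads $B_t\setminus\{v_t\}\subseteq B_{\parent(t)}$ for all $t$, where $v_t$ is the vertex introduced at $t$ (or none, if $t$ is a caterpillar node). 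The prover hands each vertex $v$: for every node $t$ it is responsible for, the identifier of $t$ — encoded together with a DFS in/out interval and a depth, so that ancestry in $T$ can be read off from identifiers alone — the identifier of $\parent(t)$, the bag $B_t$ as a list of vertex identifiers each annotated with the identifier of its own introducing node, and $\mathcal A$'s states at $t$ and at its at most two children; together with, aggregated along a companion spanning tree of $G$, the number $n$ and a subtree-count certifying that $T$ has exactly one root.

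The first two validity conditions are checked from the label of $v$ and of its $G$-neighbours alone — for the edge axiom, $u$ is a neighbour, so $v$ reads $B_{\node(u)}$ directly. The delicate part, which I expect to be the main obstacle, is that $T$ is \emph{not} a subgraph of $G$: the vertex responsible for a node $t$ has no $G$-local access to the vertex responsible for $\parent(t)$, so neither the textbook certification of a tree (parent pointer plus depth, each node reading its parent's depth) nor a direct check of $B_t\setminus\{v_t\}\subseteq B_{\parent(t)}$ is available. I would overcome this by making every cross-node assertion verifiable from data already stored at one vertex: ancestry comparisons use the DFS intervals baked into the identifiers, with no communication; the parent pointer of $\node(v)$ is routed through a designated element of $B_{\node(v)}\setminus\{v\}$, which lies in $v$'s own bag and whose annotation $v$ therefore carries, giving a depth-decreasing, hence acyclic, vertex-tree whose single root is pinned down by the spanning-tree count; and the annotations let $v$ check that every $w\in B_{\node(v)}\setminus\{v\}$ has its introducing node a strict ancestor of $\node(v)$. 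The technical heart is then a combinatorial argument, by induction on depth in $T$, that this system of purely local constraints already forces all four validity conditions and forces the locally-checked transitions to compose into a genuine accepting run of $\mathcal A$; equivalently, that no label assignment passing every local check can exist unless $\tw(G)\le k$ and $G\models\psi$. Every stored item is $O(\log n)$ bits — identifiers, intervals, depths and $n$ are $O(\log n)$; a bag with its annotations is $O(k\log n)=O(\log n)$ for fixed $k$; automaton states are $O(1)$; each vertex carries $O(1)$ such bundles — so the scheme has size $O(\log n)$, which is optimal since already certifying that $G$ is a tree requires $\Omega(\log n)$ bits. \Cref{thm:treewidthPLS} is the special case $\psi=\true$.
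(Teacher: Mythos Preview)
Your high-level plan — certify a width-$\le k$ tree decomposition and run a tree automaton on it — is the same as prior work and as the paper, and you correctly single out the real obstacle: the decomposition tree $T$ is not a subgraph of $G$, so a node of $T$ cannot talk to its $T$-parent in one round. But your proposed fix does not close this gap.

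The crux is the sentence \say{routed through a designated element $w\in B_{\node(v)}\setminus\{v\}$, which lies in $v$'s own bag and whose annotation $v$ therefore carries}. Carrying an annotation is not the same as carrying \emph{verified} information: the annotation $\id(\node(w))$ in $v$'s label is written by the untrusted prover, and $w$ need not be a $G$-neighbour of $v$ (bags may contain pairwise non-adjacent vertices). Nothing in your scheme forces the copy of $\id(\node(w))$ stored at $v$ to agree with $w$'s own label, so all your ancestry checks via DFS intervals, and the bag-inclusion check $B_t\setminus\{v_t\}\subseteq B_{\parent(t)}$, are checks against data the prover can fabricate independently at each vertex. The same issue undermines the tree-shape verification: to certify that the DFS intervals form a genuine nested system you would need each $T$-node to compare intervals with its $T$-parent, which is exactly the communication you do not have. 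A global count over a spanning tree of $G$ can pin down a single root but cannot detect pairwise inconsistencies of this kind. This is not a detail to be filled in \say{by induction on depth}; it is why the previous best bound was $O(\log^2 n)$ rather than $O(\log n)$ — the extra $\log n$ factor there pays precisely for propagating bag information along a path in $G$ to where it can be compared.

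The paper's route is genuinely different. It does not try to make the annotations self-certifying; instead it proves a structural theorem (\cref{thm:tw2lvtw}): every graph of treewidth $\le k$ admits an elimination tree of width $f(k)$ together with an \emph{oriented path system in $G$} of congestion $f(k)$ that realises every elimination-tree edge by an actual $G$-path. These paths act as channels carrying the parent's data to the child; bounded congestion means each vertex hosts only $O(1)$ channels, hence $O(\log n)$ bits. Establishing bounded congestion is the hard part and rests on the Bojańczyk--Pilipczuk guidance-system machinery (\cref{lem:globalcapture,lem:localcapture}). Without an analogue of this — some mechanism that lets non-adjacent bag members exchange $O(\log n)$ bits through $G$ with only $O(1)$ overhead per vertex — your argument cannot be completed.
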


Note that $\mathcal{O}(\log n)$ is optimal in both \cref{main} and \cref{thm:treewidthPLS}, even if we allow any fixed \emph{constant} number of rounds. Connected graphs of treewidth one are trees. So any LOCAL certification that a graph has treewidth at most one is a LOCAL certification of acyclicity, which requires $\Omega(\log n)$ bits \cite{GoosSuomela16-locally-checkable-proofs}.

\subsection{Relationship with prior work}

Our main result improves upon \cite{logn-squared-treewidth-2024, treedepth, pathwidth}. 
The main result of \cite{logn-squared-treewidth-2024} is a proof labeling scheme of size $\mathcal{O}(\log^2 n)$ for certifying whether a graph has bounded treewidth and satisfies any \mso-expressible property.
In \cite{treedepth} Bousquet, Feuilloley, and Pierron provide an analog of \cref{main} for graphs of bounded \say{treedepth}, and in a very recent paper Baterisna and Chang provide an analog of \cref{main} for graphs of bounded \say{pathwidth.}
	Graphs with bounded treewidth may have arbitrarily large treedepth or pathwidth.
	However, treewidth is an upper bound for both treedepth and pathwidth.
	Hence, \cref{main} is strictly stronger than both of these results.

Additionally, related work has been done on certifying logical properties in other regimes.
In \cite{p4-cw}, Fraigniaud, Mazoit, Montealegre, Rapaport, and Todinca give proof labeling scheme of size $\mathcal{O}(\log^2(n))$ for any \msoo-expressible on graphs of bounded \say{cliquewidth}. \msoo \ is a  slightly more restrictive variant of \mso \. Cliquewidth is a more general graph parameter and gives an exponential upper bound on the treewidth.
In \cite{GoosSuomela16-locally-checkable-proofs}, G\"o\"os and Suomela show that any property expressible in First-Order Logic can be certified compactly.
Note, that First Order Logic is a much more restrictive framework than \mso.

\optional{1 sentence mention of relationship with self stabilization }

\paragraph{Certifying minor-closed properties} 
Robertson-Seymour Graph Minor Theory \cite{lovasz-graphminor-survey} and Graph Product Structure Theory \cite{product-structure} use bounded treewidth graphs as the \say{building blocks} of more general minor-closed classes. Hence, our main result is an important step towards finding a compact certification for (graph) minor-closed properties in general. 
Minor closed properties play an important role in algorithm design and in structural graph theory (see surveys by Bienstock and Langston \cite{bienstock-algorithm-minors}, Lov\'asz \cite{lovasz-graphminor-survey}). 
In their paper presenting a compact LOCAL certification for planarity (a minor-closed property) 
Feuilloley, Fraigniaud, Montealegre, Rapaport, R\'{e}mila, and Todinca   \cite{feuilloley2020planar} 
asked the following question, which has since been reiterated in  \cite{louis-graphsonsurfaces, esperet-minor-local-talk,  bousquet-small-h, feuilloley2020planar, feuilloley2023boundedgenus}.

\begin{question}\label{question:minorclosed}
	Is it true that every minor-closed property has a  compact proof labeling scheme?
\end{question} 

Question \ref{question:minorclosed} is only known in certain special cases, such as bounded genus,  \cite{feuilloley2023boundedgenus, louis-graphsonsurfaces}, $H$-minor-free for certain graphs $H$ on at most five vertices \cite{bousquet-small-h}.
Note, $H$-minor freeness is expressible in \mso \cite{CourcelleE2012}. 
Baterisna and Chang's  proof labeling scheme of \mso \ properties for graphs of bounded pathwidth implies that there is a compact certification for $F$-minor-freeness for any forest $F$ by way of the \say{Excluding a Forest} theorem of Robertson and Seymour \cite{graphminors1-excludingaforest}. 
Our main result extends this to all planar graphs:
\begin{corollary}\label{minorcorr}
For any planar graph $H$ there is a compact certification of $H$-minor-freeness. 
\end{corollary}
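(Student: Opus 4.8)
The plan is to derive \cref{minorcorr} as a straightforward consequence of \cref{main} by combining two classical facts: the Robertson--Seymour structure theorem for $H$-minor-free graphs when $H$ is planar, and the expressibility of $H$-minor-freeness in \mso. Specifically, since $H$ is planar, the Excluded Grid Theorem of Robertson and Seymour tells us that $H$ is a minor of some $(r \times r)$-grid, and hence every graph that excludes $H$ as a minor has treewidth bounded by a constant $k = k(H)$ (concretely, $k$ can be taken to be any known bound on the treewidth of $(r\times r)$-grid-minor-free graphs). So the class of $H$-minor-free graphs is a subclass of the class of graphs of treewidth at most $k$.

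First I would fix $H$ planar and invoke the above to obtain the constant $k = k(H)$ such that every $H$-minor-free graph has treewidth at most $k$. Second I would recall (citing \cite{CourcelleE2012}, as the excerpt already does) that ``$G$ contains $H$ as a minor'' is expressible as an \mso-sentence $\varphi_H$ on graphs --- the sentence guesses, for each vertex of $H$, a nonempty connected vertex set (a branch set), asserts the branch sets are pairwise disjoint, and asserts that for every edge of $H$ there is an edge of $G$ between the corresponding branch sets. Negating this yields an \mso-sentence $\psi_H := \lnot \varphi_H$ expressing $H$-minor-freeness. Third, I would apply \cref{main} with this value of $k$ and the sentence $\psi := \psi_H$ to get a proof labeling scheme of size $O(\log n)$ for the class of graphs of treewidth at most $k$ that also satisfy $\psi_H$.

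The final step is to observe that on connected graphs this class is exactly the class of $H$-minor-free graphs: if $G$ is $H$-minor-free then by choice of $k$ it has treewidth at most $k$ and it satisfies $\psi_H$, so the scheme accepts; conversely any graph accepted by the scheme satisfies $\psi_H$ and hence is $H$-minor-free. Therefore the scheme from \cref{main} is itself a compact certification of $H$-minor-freeness, which is what we wanted.

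There is essentially no hard obstacle here --- the work is entirely in the two black boxes (the planar case of the Excluded Grid Theorem and the \mso-formulation of minor containment), both of which are standard and already referenced in the paper. The only point requiring a word of care is that \cref{main} certifies the \emph{conjunction} of bounded treewidth with $\psi$, not $\psi$ alone; but since for planar $H$ bounded treewidth is \emph{implied} by $H$-minor-freeness, this conjunction collapses to the property we care about, so no generality is lost. (For nonplanar $H$ this argument breaks down precisely because $H$-minor-free graphs may then have unbounded treewidth, which is why \cref{minorcorr} is stated only for planar $H$ and \cref{question:minorclosed} remains open in general.)
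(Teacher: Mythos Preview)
Your proposal is correct and follows essentially the same approach as the paper: the paper derives \cref{minorcorr} from \cref{main} by citing Robertson and Seymour's result in \cite{graphminorsV} that $H$-minor-free graphs have bounded treewidth whenever $H$ is planar, together with the \mso-expressibility of $H$-minor-freeness from \cite{CourcelleE2012}. Your route through the Excluded Grid Theorem is just an alternative (and equivalent) way of invoking the same treewidth bound, and your explicit remark that the conjunction ``treewidth $\le k$ and $\psi_H$'' collapses to $H$-minor-freeness is a point the paper leaves implicit.
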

\cref{main} implies \cref{minorcorr} by way of a powerful result in Graph Minor Theory. 
In \cite{graphminorsV}, Robertson and Seymour prove that a minor-closed family of graphs $\mathcal{F}$ has bounded treewidth if and only if $\mathcal{F}$ does not contain some planar graph \cite{lovasz-graphminor-survey}.

 In what has since become known as the Robertson-Seymour Graph Minor Theorem, Robertson and Seymour \cite{graphminors-wqo} prove that any minor-closed class can be described by a \emph{finite} list of forbidden minors.
 Hence, \cref{main} also implies that: 
 \begin{corollary}
 	For any minor-closed family of graphs $\mathcal{F}$ there is a compact certification for $\mathcal{F}$ whenever $\mathcal{F}$ does not include \emph{every} planar graph.
\end{corollary}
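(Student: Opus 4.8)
The plan is to obtain this corollary as a direct consequence of \cref{main} together with two classical results from Graph Minor Theory that are already recalled in the introduction. First I would fix a minor-closed family $\mathcal{F}$ that does not contain every planar graph, and pick a planar graph $H \notin \mathcal{F}$. Because $\mathcal{F}$ is minor-closed, no $G \in \mathcal{F}$ can contain $H$ as a minor — otherwise $H$ itself would be in $\mathcal{F}$ — so every graph in $\mathcal{F}$ is $H$-minor-free. By the excluded-planar-minor theorem of Robertson and Seymour \cite{graphminorsV}, a minor-closed family that excludes some planar graph has bounded treewidth; hence there is an integer $k = k(\mathcal{F})$ such that every $G \in \mathcal{F}$ has treewidth at most $k$.

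Next I would invoke the Robertson--Seymour Graph Minor Theorem \cite{graphminors-wqo} to write $\mathcal{F}$ as the class of graphs avoiding a \emph{finite} list $H_1, \dots, H_m$ of forbidden minors, i.e.\ $G \in \mathcal{F}$ if and only if $G$ is $\{H_1, \dots, H_m\}$-minor-free. Since $H_i$-minor-freeness is \mso-expressible for each fixed $H_i$ \cite{CourcelleE2012}, and \mso{} is closed under finite conjunction, there is a single \mso-sentence $\psi_{\mathcal{F}}$ that holds on $G$ exactly when $G \in \mathcal{F}$.

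Finally I would observe that $\mathcal{F}$ coincides with the class of graphs of treewidth at most $k$ that satisfy $\psi_{\mathcal{F}}$: membership in $\mathcal{F}$ forces treewidth at most $k$ (first paragraph) and forces $\psi_{\mathcal{F}}$ (definition of $\psi_{\mathcal{F}}$), while conversely any graph satisfying $\psi_{\mathcal{F}}$ already lies in $\mathcal{F}$, regardless of its treewidth. Applying \cref{main} with this $k$ and this sentence $\psi_{\mathcal{F}}$ then yields a proof labeling scheme of size $\mathcal{O}(\log n)$ — hence a compact certification — for $\mathcal{F}$.

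The only point requiring care, and the step I would treat as the "main obstacle," is the logical bookkeeping in the first paragraph: verifying that \emph{$\mathcal{F}$ omits some planar graph} is genuinely equivalent (via minor-closedness) to \emph{$\mathcal{F}$ is $H$-minor-free for some planar $H$}, so that the excluded-planar-minor theorem applies and supplies a uniform treewidth bound over all of $\mathcal{F}$. Beyond that, the argument is a one-line application of \cref{main} on top of the quoted theorems, with no genuinely hard step.
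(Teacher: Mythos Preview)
Your proposal is correct and follows exactly the approach the paper intends: the corollary is stated immediately after recalling the Robertson--Seymour excluded-planar-minor theorem (bounded treewidth) and the Graph Minor Theorem (finite list of forbidden minors), and is meant to be read as a direct application of \cref{main} via these two facts together with the \mso-expressibility of $H$-minor-freeness. Your write-up simply makes explicit the steps the paper leaves implicit.
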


\section{Preliminaries}\label{sec:prelim}

We use standard terminology for graph theory, e.g.,~\cite{Diestel2005}. For further information on monadic second-order logic on graphs, see~\cite{CourcelleE2012}.

\smallskip

\noindent {\bf Treewidth.} A \emph{tree-decomposition} of a graph $G$ is a pair $(T,\chi)$, consisting of a tree $T$ and a mapping 
$\chi:V(T)\rightarrow 2^{V(G)}$, such that the following conditions are satisfied:
\begin{enumerate}
\item for every vertex $v$ of $G$, there is a node $t$ of $T$ with $v\in \chi(t),$
\item for every edge $uv$ of $G$, there is a node $t$ of $T$ such that $\chi(t)$ contains both $u$ and $v,$ and 
\item for every vertex $v$ of $G$, the set of nodes $t$ of $T$ with $v\in \chi(t)$ is connected in $T.$
\end{enumerate} 

The \emph{width} of a tree-decomposition $(T,\chi)$ is defined as the maximum of $\abs{\chi(t)-1}$ taken over all nodes $t$ of $T$, 
and the \emph{treewidth}, denoted as $\tw(G)$, of a graph $G$ is the minimum width of a tree-decomposition of $G$. 
For a tree-decomposition $(T,\chi)$ of a graph $G$, we call each $t\in V(T)$ a \emph{tree node}, or simply a node, to distinguish it from the vertices of the underlying graph of the tree-decomposition. When $T$ has a distinguished node called the \emph{root}, we say that $T$ is rooted and 
$(T,\chi)$ is said to be a \emph{rooted tree-decomposition}. Throughout this paper, any tree decomposition we consider shall be rooted. 
We also use the standard terminology for a rooted tree such as \emph{parent, child, ancestor, descendant}. 
For a node $t$ of $T$, we denote by $T_t$ the subtree of $T$ rooted at $t$, namely the subtree consisting of tree nodes whose path to the root on $T$ traverses $t.$

We use further notations for  a tree-decomposition $(T,\chi)$ of $G$ from~\cite{Diestel2005,BojanczykP2016}. The set $\chi(t)$ is called the \emph{bag} of $t.$
The \emph{adhesion} of $t$ ($\adh(t)$) is the set $\chi(t)\cap \chi(t')$, where $t'$ is the parent of $t$ if one exists, or defined as $\emptyset$ if $t$ is the root of $T$.
The \emph{margin} of $t$ ($\margin(t)$) is the set $\chi(t)\setminus \adh(t)$.
The \emph{component} of $t$ ($\comp(t)$)is the set $\bigcup_{x\in V(T_t)}\chi(x)\setminus \adh(t)$.
Now, a (rooted) tree-decomposition $(T,\chi)$ is a \emph{sane tree-decomposition} if for every node $t$ of $T$, 
\begin{itemize}
\item the margin of $t$ is nonempty,
\item every vertex in the adhesion of $t$ has a neighbor in the component of $t,$ and
\item the subgraph of $G$ induced by the component of $t$ is connected.
\end{itemize}

It is known that every graph admits an optimal tree decomposition that is sane.

\begin{lemma}[{\cite[Lemma 2.8]{BojanczykP2016}}]\label{lem:optimalsane}\label{BojanczykP2016}
For a tree-decomposition $(T,\chi)$ of $G$ of width at most $k$, 
there is a sane tree-decomposition $(T',\chi')$ of $G$ of width at most $k$ 
such that every bag of $(T',\chi')$ 
is a subset of some bag in $(T,\chi)$. 
\end{lemma}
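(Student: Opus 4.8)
The plan is to turn the given rooted tree-decomposition $(T,\chi)$ into a sane one by repeatedly applying local reduction rules, each of which maintains the invariant that every current bag is a subset of some original bag $\chi(t)$ — so the width never exceeds $k$ and the required subset property holds automatically at the end — while a suitable potential strictly decreases. (We may assume $V(G)\neq\emptyset$; note that $G$ is connected, so the component of the root is all of $V(G)$ and is connected, hence the root never needs to be split.)

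I would use three rules, applying them with the priority ``R1 over R2 over R3'', and when applying R3 doing so at a deepest node with a disconnected component. \emph{(R1) Suppress a redundant node.} If a non-root node $t$ has $\chi'(t)\subseteq\chi'(\parent(t))$, equivalently $\margin(t)=\emptyset$, delete $t$ and re-attach its children to $\parent(t)$; and if the root has an empty bag, re-root at its unique child (unique by connectivity of $G$) and discard the old root. Neither operation changes any bag, and in fact neither changes $\comp(s)$ for any node $s$; one checks the tree-decomposition axioms survive. \emph{(R2) Forget a useless adhesion vertex.} If some $v\in\adh(t)$ has no neighbour in $\comp(t)$, delete $v$ from every bag of the subtree $T'_t$. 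Since the bags containing $v$ form a subtree through $t$ and $v$ has no edge into $\comp(s)\subseteq\comp(t)$ for any $s$ in $T'_t$, this is still a valid decomposition, every bag only shrinks, and the only components that change (those of proper descendants of $t$) merely lose the vertex $v$, which was isolated in them. \emph{(R3) Split off a connected component.} If $G[\comp(t)]$ is disconnected with components $D_1,\dots,D_m$ ($m\ge 2$), replace $t$ by nodes $s_1,\dots,s_m$ hung below $\parent(t)$ with $\chi'(s_j):=\adh(t)\cup(\margin(t)\cap D_j)\subseteq\chi'(t)$, and re-attach each child-subtree $T'_{t_i}$ below the $s_j$ with $\comp(t_i)\subseteq D_j$. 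Because R1 and R2 do not apply and $t$ is deepest, each $\comp(t_i)$ is nonempty, connected, hence inside a single $D_j$, and every vertex of $\margin(t)\cap\adh(t_i)$ has a neighbour in $\comp(t_i)$ and so also lies in that same $D_j$; using this one verifies the result is a valid rooted tree-decomposition, with $\comp(s_j)=D_j$ connected and the component of every other node unchanged.

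For termination I would track the lexicographically-ordered potential
\[
\Phi(T',\chi')=\Big(\ \#\{\,t\in V(T') : G[\comp(t)]\text{ is disconnected}\,\},\ \ \sum_{t\in V(T')}\abs{\chi'(t)},\ \ \abs{V(T')}\ \Big)\in\N^3 .
\]
Rules R1 and the re-rooting step fix the first coordinate and strictly decrease the second or the third; R2 does not increase the first coordinate and strictly decreases the second; R3 strictly decreases the first. Since the lexicographic order on $\N^3$ is well-founded, the process halts, and in the halting decomposition no rule applies: every non-root node has nonempty margin (else R1), the root has nonempty bag (else R1), every adhesion vertex has a neighbour in its component (else R2), and every component induces a connected subgraph (else R3) — that is, the decomposition is sane. \qed

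I expect R3 to be the crux on both counts. First, the delicate point is not the idea of cutting $T'_t$ along the components $D_j$ but checking that the re-glued object is genuinely a tree-decomposition: one must control how each adhesion $\adh(t_i)$ meets $\margin(t)$ and the $D_j$'s, and this is exactly where the exhaustiveness of R2 (``every adhesion vertex sees the component below it'') is used, to force each relevant vertex into the correct $D_j$. Second, the naive potential $\sum_t\abs{\chi'(t)}$ fails, because R3 can \emph{increase} it by $(m-1)\abs{\adh(t)}$; this is why the count of nodes with a disconnected component must sit ahead of it in the lexicographic order.
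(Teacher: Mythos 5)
The paper does not prove this statement; it is cited verbatim as Lemma~2.8 of Bojańczyk and Pilipczuk, so there is no in-paper argument to compare against. Your blind proof is correct and is the natural iterated-local-surgery derivation of the fact, and you correctly isolate both halves of the crux: the exhaustion of R1/R2 strictly below $t$ is what makes each $\comp(t_i)$ nonempty, connected, and contained in a single $D_j$, and what makes every $v\in\adh(t_i)\cap\margin(t)$ land in that same $D_j$; and the naive potential $\sum_t\abs{\chi'(t)}$ must indeed be dominated lexicographically by the count of disconnected-component nodes, since R3 can inflate the bag total by $(m-1)\abs{\adh(t)}$.

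Two small points to tighten. In R2, the components of descendants of $t$ do not ``merely lose the vertex $v$''; they are entirely unchanged: for any $s\in V(T'_t)$, if $v$ appears in a bag of $T'_s$ then the connectivity axiom forces $v\in\chi'(s)\cap\chi'(\parent(s))=\adh(s)$, hence $v\notin\comp(s)$. This only makes the monotonicity of the first coordinate under R2 immediate. And the re-rooting half of R1 is only well-defined once the non-root suppressions have been applied first: when $\chi'(r)=\emptyset$, the sets $\comp(t_i)$ over children $t_i$ of $r$ partition $V(G)$ and are pairwise non-adjacent, so connectivity of $G$ forces exactly one of them to be nonempty, and the remaining children have entirely empty subtrees that must be pruned away by non-root R1 before ``the unique child'' exists. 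With that ordering discipline made explicit, the argument is complete.
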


\noindent {\bf Elimination Tree.} 
Graphs of bounded treewidth can be alternatively witnessed via the notion called an \emph{elimination tree}~\cite{Schreiber82,BojanczykP2017} (also called a \emph{Trémaux tree}), which is also routinely used for defining the treedepth of a graph~\cite{NOdM2012}. 
An \emph{elimination tree} $F$ of a graph $G$ is a rooted tree such that $V(F)=V(G)$ and 
for any edge $uv$ of $G$, either $u$ is an ancestor of $v$ or $v$ is an ancestor of $u$ in $T$.

Let $F_v$ be the subtree of $F$ rooted at $v$. For a vertex $v$ of $F,$ define $\str(v)$ to be the set of ancestors of $v$ (including $v$ itself)
which have a neighbor in $V(F_v).$ 
The \emph{width} of an elimination tree $F$ is defined as 
$$\max_{v\in V(T)} \abs{\str(v)}-1.$$ The \emph{width} of an elimination tree $F$ is the maximum width of 
an elimination tree in $F.$ Given an elimination tree $F$ of width $\width$, 
we can define the set $\chi(v)$ as the union of $\{v\}$ and 
all ancestors of $v$ which have a neighbor in $F_v.$ It is routine to verify that $(F,\chi)$ is a tree-decomposition of width at most $\width.$
It is known~\cite{BodlaenderGHK95, BojanczykP2017} that the treewidth of $G$ equals the minimum width of an elimination forest of $G.$

\bigskip

\noindent {\bf Proof Labeling Scheme.}  
We assume 
that each vertex $v$ of an $n$-vertex  graph is given a unique identifier, written as $\id(v)$, from $\{1,\ldots , n\}$ . %

\medskip

A \emph{labeling} on a graph $G$ is a mapping $\varphi: V(G)\rightarrow \{0,1\}^*$.  
We say that $\varphi(v)$ is a \emph{label} of $v$ of length $\abs{\varphi(v)}$. The \emph{size of a labeling $\varphi$} of $G$ 
is the maximum length of $\varphi(v)$ over all $v\in V(G).$ For an $n$-vertex graph $G$, 
we often express the size of a labeling of $G$ as a function of $n$. That is to say, if $\max_{v\in V(G)} \abs{\varphi(v)}= \Oh(t(n))$, 
we say that $\varphi$ is a labeling of size $t(n)$.

Let $\Pi$ be a graph property and we assume  each graph $G$ of $\Pi$ is equipped with 
an (arbitrary) identifier $\id: V(G)\rightarrow \{1,\ldots, \abs{V(G)}\}$. 
A \emph{prover} $\prover$ for $\Pi$ is an algorithm that produces a labeling on a graph of $\Pi$. 
For an input graph $G\in \Pi$ equipped with $\id$, we denote by $\varphi_{\prover, G,\id}$ 
the labeling $\prover(G)$ produced by the prover $\prover$ run on $G$. 
When the prover $\prover$ and the input graph $G$ with $\id$ are clear in the context, we omit the subscript and simply write as $\varphi.$ 
A \emph{verifier} $\verifier$ is an algorithm which, given a number $i\in \mathbb{N}$ and 
a subset of $\mathbb{N}\times \{0,1\}^*$ as input, outputs \yes\ or \no. 
A graph $G$ equipped with $\id$ and a labeling $\varphi$ on $G$ 
induce  a canonical \emph{local} instance to an arbitrary $\verifier$ at each $v$, defined as 
$$(\id(v), \{(\id(w),\varphi(w) )\}_{w\in N[v]})$$
which we call the \emph{local instance of $G$ and $\varphi$ at $v$}. %

A \emph{proof labeling scheme} (\PLS\ in short) for $\Pi$ 
is  a pair $(\prover,\verifier)$ consisting of a \emph{prover} $\prover$ for $\Pi$ 
and a \emph{verifier} $\verifier$,  satisfy the following two properties together: 

\begin{description}
\item {\bf Completeness.} For every $(G,\id)\in \Pi$, the verifier $\verifier$ run on 
the local instance of $G$ and $\varphi_{\prover,G,\id}$ at $v$ 
outputs \yes\ for each $v\in V(G).$

\item {\bf Soundness.} For every $(G,\id)\notin \Pi$ and for every labeling $\varphi$ on $G$, 
there exists a vertex $u\in V(G)$ such that the verifier $\verifier$ on the local 
instance of $G$ and $\varphi$ at $u$ outputs \no. 
\end{description}

A proof labeling scheme $(\prover,\verifier)$ is said to have size $t(n)$ if the resulting 
labeling $\varphi$ has size at most $t(n)$ for each $n$-vertex graph of $\Pi$.

We make some remarks on the terminology. 
In the context of local/distributed certification, we consider an algorithm that is run at each vertex $v$ based on the information at $v$ (in the form of a label $\varphi(v)$ of $v$) as well as the information locally communicated with $v$. On the other hand, some of the local information may refer to 
a faraway vertex $w$ and neither $v$ nor its neighbors do not even {\sl know} whether $w$ exists in reality or not. On this premise, 
the reference to $w$ at $v$ is made by its identifier. So, $v$ will be aware (by its label) of {\sl some (unknown)} vertex  with identifier $i$, and the vertex $w$ is the very vertex whose identifier is $i.$

\bigskip

\noindent {\bf Monadic Second Order Logic, relational structure, \mso-expressible property.}
A \emph{vocabulary} $\Sigma$ is a set of relation names, or \emph{predicates}. Each predicate $R\in \Sigma$ is associated with a number $\arity(R)\in \N$, called the \emph{arity} of $R$. 

Let $\Sigma$ be a vocabulary. We assume that there is an infinite supply of symbols, all distinct from symbols in $\Sigma$, for \emph{individual variables} and for \emph{unary relation variables} (a.k.a.\ \emph{set variables}). We typically use a lowercase to denote an individual variable and an uppercase letter for a set variable.

A \emph{formula in monadic second-order logic}, simply put \textsf{mso}-formula, over a vocabulary $\Sigma$, is a string that can be built recursively using the logical connectives $\neg, \wedge, \vee, \rightarrow$, the \emph{quantifiers} $\forall, \exists$ as well as the predicates in $\Sigma.$
A string in the form $x=y$, $X(x)$ for some set variable $X$ or $R(x_1,\ldots , x_{\arity(R)})$ for some predicate $R\in \Sigma$ is an \emph{atomic formula}. 
For two formulae $\psi$ and $\phi$, the strings $\neg \psi$, $\psi \wedge \phi$, $\psi \vee \phi$ and $\psi\rightarrow \phi$ are formulae. For a formula $\psi$, 
and for an individual variable $x$ and a set variable $X$, 
$\forall x \psi$, $\exists x \psi$, $\forall X \psi$ and $\exists X \psi$ are all formulae. A variable in a formula $\psi$ is \emph{free} if it does not appear next to a quantifier. We often write the free variables of a formula $\psi$ inside a parenthesis, e.g., $\psi(x,y, Z)$, to highlight the free variables of $\psi$. A formula without a free variable is called a \emph{sentence} and a formula without a quantifier is said to be \emph{quantifier-free}. 
A formula is in \emph{prenex normal form} if it is in the form $Q_1x_1 \cdots Q_\ell x_\ell \psi$ such that $Q_i$ is either the universal quantifier $\forall$ or the existential quantifier $\exists$, $x_i$ is an individual or a set variable, and $\psi$ is quantifier-free.

Let  $\Sigma$ be a vocabulary. For a set $U$, called a \emph{universe of discourse} or simply a \emph{universe}, an \emph{interpretation} of a predicate $R\in \Sigma$ in the universe $U$ is a subset of $U^\arity(R)$.
A \emph{relational structure over $\Sigma$}, or $\Sigma$-structure, is a tuple $\A=(U,(R^\A)_{R\in \Sigma})$ consisting of a \emph{universe} $U$, and an interpretation $R^\A\subseteq U^{\arity(R)}$ for each $R\in \Sigma$. 
 
Fix a vocabulary $\Sigma$ and let $\A$ be a $\Sigma$-structure on the universe $U$. 
An \emph{interpretation} of an \textsf{mso}-formula $\psi$ in $\A$ is an assignment $s$ of each variable in $\psi$ to an element of $U$, in the case of an individual variable, or to a subset of $U$ in the case of a set variable. 
An interpretation $s$ of $\psi$ in $\A$ \emph{evaluates}  to \true, written as $\A\models_s \psi$, or \false, which can be determined as follows.
The atomic formulae $X(x)$, $x=y$ and $R(x_1,\ldots , x_{\arity(R)})$ for some $R\in \Sigma$ will be evaluated \true\ if and only if $s(x)\in s(X)$, $s(x)=s(y)$ 
and $(s(x_1),\ldots , s(x_{\arity(R)})) \in R^\A.$ For a formula such as $\neg \psi$ and $\psi_1\wedge \psi_2$, the evaluation is determined by the evaluation of each subformula (such as $\psi$ in the former, and $\psi_i$ in the latter) in the given interpretation and the logical connective as a boolean operator. For $\psi=\exists x\varphi$, the formula $\psi$ is evaluated to \true\ in an interpretation $s$ 
if and only if there is an interpretation $s'$ with 
$\A \models_{s'} \varphi$ such that $s'$ and $s$ agree on all  variables save $x$.
The formulae of the form $\exists X\varphi, \forall x\varphi, \forall X\varphi$ are evaluated similarly by an interpretation. 
A $\Sigma$-structure $\A$ \emph{models}, or \emph{satisfies}, an \textsf{mso}-sentence $\psi$ if there is an interpretation $s$ of $\psi$ in $\A$ which evaluates to $\true$ and we write $\A \models \psi.$

Of special interest is the vocabulary $\{\edge\}$, consisting of a single predicate  $\edge$ of arity 2.
An undirected graph $G=(V,E)$ can be represented as the $\{\edge\}$-structure $\G=(V, \edge^\G)$ over $\{\edge\}$, where the vertex set $V$ of $G$ is the universe and for every $(u,v)\in V\times V,$
$\edge^\G(u,v)$ if and only if $uv$ is an edge of $G.$ 
Another way to represent a graph $G=(V,E)$ as a relational structure is to represent its incidence graph. We shall denote the associated vocabulary  $\Sigma=\{\vtx,\edg,\inc\}$, where $\vtx$ and $\edg$ are unary predicates and $\inc$ is a binary predicate. 
For a graph $G=(V,E)$, we associate the $\Sigma$-structure $\G=(V\cap E, \vtx^\G, \edg^\G,\inc^\G)$ where $\vtx^G=V, \edg^\G=E$ and 
$\inc^G=\{(v,e)\in V\times E\mid \text{$v$ is an endpoint of $e$ in $G$}\}.$ Such a $\Sigma$-structure is a canonical representation of an incidence graph of $G$ as a relational structure. 

We say that a graph property $\F$ can be expressed in \msoo, or \emph{\msoo-expressible} in short, 
if there is an \textsf{mso}-sentence $\psi$ over $\{\edge\}$ such that 
$G\in \F$ if and only if $\G\models \psi$ where $\G$ is a $\{\edge\}$-structure. 
An \mso-expressible property is defined similarly, for which we take the vocabulary $\Sigma=\{\vtx,\edg,\inc\}$ 
in the place of $\{\edge\}.$ A sentence over $\{\edge\}$, respectively over $\Sigma$, is called an \msoo-sentence, respectively an \mso-sentence, on graphs.

In this paper, we are mostly interested in an \msoo\ or \mso-sentence as a graph property. By abusing the notation, we often write $G\models \psi$ for a graph and an \msoo-sentence, respectively \mso-sentence, $\psi$ as a shortcut to state the following: 
$\G\models \psi$ for an $\Sigma$-structure and a sentence over $\Sigma$, where $\Sigma=\{\edge\}$, respectively $\Sigma=\{\vtx,\edg,\inc\}.$

\section{Key statements and illustrative example}\label{sec:sketch}
In this section, we describe the main ideas of our proof of \cref{main}. 
In the next subsection, we prove a special case of \cref{main}, which illustrates our approach in a toy case. 
In \cref{subsec:keystatements} we explain how to extend this approach to cope with the general case. In particular, we present two key technical statements and show how they imply \cref{main}.
We complete the proof of \cref{main} in Sections 4 and 5 by proving these two key technical statements. 

\subsection{Illustrative example}\label{subsec:subgraphcase}

In this subsection, we present a proof labeling scheme of size $O(\log n)$ for graphs for which there is an optimal elimination tree $T$ for $G$ such that $T$ is a subgraph of $G$. To state formally, let 
$\mathcal G_\width$ be the set of all graphs $G$ such that each component of $G$ admits an elimination tree $T$ of width at most $\width$ as a spanning tree of the component. The key idea developed for this illustrative example will be later generalized for designing a proof labeling scheme for graphs of treewidth at most $\width.$

\begin{proposition}
There is a proof labeling scheme of size $O(\log n)$ for $\mathcal G_\width$.
\end{proposition}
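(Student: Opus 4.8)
The plan is to give an explicit proof labeling scheme (which, since the verifier will only read $N[v]$, is automatically a $\PLS$). Fix the constant $\width$ and let $G$ be the connected input; when $G\in\mathcal G_\width$ it admits an elimination tree $F$ of width at most $\width$ that is a spanning tree of $G$, and we root $F$ at an arbitrary vertex $r$. For $v\in V(G)$ write $d(v)$ for the depth of $v$ in $F$ and $\str(v)$ for its strand, so $v\in\str(v)$ and $\abs{\str(v)}\le\width+1$. The prover assigns to $v$ the label consisting of: the identifier $\id(r)$ of the root; the depth $d(v)$; the identifier of the parent $\parent_F(v)$ (with $\parent_F(r):=r$); and the list of pairs $(\id(u),d(u))$ for $u\in\str(v)$. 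As $\abs{\str(v)}\le\width+1$ and all identifiers and depths are bounded by $n$, each label uses $\Oh(\width\log n)=\Oh(\log n)$ bits.

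The verifier at $v$ inspects the labels on $N[v]$ and accepts iff: \emph{(tree)} all vertices of $N[v]$ carry the same root field, and either $v$ names itself as its parent with $d(v)=0$ and $\id(v)$ equal to the root field, or $\parent_F(v)\in N(v)$ with $d(\parent_F(v))=d(v)-1$; \emph{(bags)} $v\in\str(v)$, $v$ is the unique element of $\str(v)$ of largest recorded depth, $\abs{\str(v)}\le\width+1$, $\str(v)=\{v\}$ whenever $v$ is the root, and $\str(v)\setminus\{v\}\subseteq\str(\parent_F(v))$ with the recorded depths of common elements agreeing; \emph{(edges)} for every $w\in N(v)$ one has $d(w)\ne d(v)$, and $w\in\str(v)$ with recorded depth $d(w)$ whenever $d(w)<d(v)$. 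Completeness is immediate: on the honest labeling each listed condition is a true statement about $F$ — depths strictly decrease along every edge of $G$ since one endpoint is an ancestor of the other; $\str(v)\setminus\{v\}\subseteq\str(\parent_F(v))$ since every element of $\str(v)$ other than $v$ is an ancestor of $\parent_F(v)$ with a neighbor already in $F_v\subseteq F_{\parent_F(v)}$; and a neighbor of $v$ of smaller depth is a proper ancestor of $v$ and hence lies in $\str(v)$.

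For soundness, assume every verifier accepts on $(G,\id)$ under a labeling $\varphi$. The \emph{(tree)} checks force the recorded parent pointers to describe a spanning tree $F$ of $G$: following parents decreases the depth by one each step, so the walk is finite and ends at the unique vertex naming itself as parent, and every parent edge is an edge of $G$. The crux is a propagation claim: \emph{if $u$ belongs to the recorded $\str(q)$, then following parent pointers from $q$ one eventually reaches $u$, so $u$ is an ancestor of $q$.} Indeed, while the current node $x$ on this walk differs from $u$, the \emph{(bags)} inclusion $\str(x)\setminus\{x\}\subseteq\str(\parent_F(x))$ keeps $u$ in the bag of the next node; the recorded depth of $u$ is at most $d(x)$, and equal only when $u=x$ (uniqueness of the largest-depth element), while $d(x)$ strictly decreases, so the walk cannot pass $u$; and it cannot run past the root, where the bag is a singleton. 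Applying this to an arbitrary edge $uv$ with, say, $d(u)<d(v)$: the \emph{(edges)} check gives $u\in\str(v)$, hence $u$ is an ancestor of $v$, so $F$ is an elimination tree of $G$. To bound its width, fix $v$ and let $u$ be any proper ancestor of $v$ with a neighbor $x\in V(F_v)$; since $d(u)<d(x)$, the \emph{(edges)} check at $x$ gives $u\in\str(x)$, and propagating $u$ up the parent-chain from $x$ to $v$ — which stays inside $F_v$ and so never meets $u$ — via the \emph{(bags)} inclusions yields $u\in\str(v)$. Thus the true strand of $v$ is contained in the recorded set $\str(v)$, of size at most $\width+1$, so $F$ has width at most $\width$ and $G\in\mathcal G_\width$.

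The main obstacle is that $F$ can have depth linear in $n$, so the treedepth-style trick of writing each vertex's entire root-path into its label is too expensive; what rescues the $\Oh(\log n)$ bound is that the only ancestors relevant to the elimination property or to the width are those sitting in the bounded-size strands, and these can be made to propagate monotonically up the tree one parent edge at a time. The delicate point is calibrating the list of local checks so that the honest prover satisfies all of them while no labeling encoding a non-elimination tree, or an elimination tree of larger width, can pass.
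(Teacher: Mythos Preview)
Your proof is correct and takes essentially the same approach as the paper's: label each vertex with its depth and the bounded-size set $\str(v)$, then establish soundness via a propagation argument showing that any identifier recorded in some $\str(v)$ must be a genuine $F$-ancestor, which forces $F$ to be an elimination tree of width at most $\width$. The only differences are cosmetic---you store the parent and root identifiers explicitly (the paper instead identifies the parent as the unique neighbor at depth $d_v-1$) and you check the inclusion $\str(v)\setminus\{v\}\subseteq\str(\parent_F(v))$ at the child, whereas the paper checks the dual condition $A_u\setminus\{v\}\subseteq A_v$ at the parent for each child $u$.
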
 

It suffices to design a proof labeling scheme for a connected graph as such \PLS\ can be applied to each connected component of a graph. Henceforth, 
we assume that the given graph $G$ is connected.

\medskip

\noindent {\bf Prover's construction of a labeling on $G\in \mathcal G_\width$.}  
Let $T$ be an optimal elimination tree of $G$ such that 
$T$ is a subgraph of $G$, and let $r$ be the root of $T$. 
On the input  $(G,\id)\in \mathcal G_\width$, the prover $\prover$ produces a labeling $\varphi$ on $G$ 
so that $\varphi(v)$, for each vertex $v\in V(G)$, encodes the following information. We fix $s:=\lceil \log n \rceil.$

\begin{enumerate}
\item $s$ 1's followed by a prescribed delimiter (e.g., 000111000).
\item The binary representation of $\dist_T(v,r)$ in $s$-bits. 
\item The identifier and the distance $\dist_T(x,r)$ of each vertex $x\in \str_T(v)\setminus v,$  each in $s$-bits.
\end{enumerate}
Note that the distance from any vertex to its root in a tree on at most $n$ vertices does not exceed $n-1$, and can be represented in $s$-bits.
We choose  {\sl not} to broadcast the bound $\width$ on the width as a part of the label. Rather, the number $\width$ will be hardwired in the verifier $\verifier$. 
It is easy to see that:
\begin{observation}
	 If $T$ is an elimination tree of width at most $\width$, it holds that $\abs{\varphi(v)}= \lceil \log n \rceil\cdot (2+2\width)+O(1) \in \mathcal{O}(\log n)$
	for each vertex $v$ of $G.$ 
\end{observation}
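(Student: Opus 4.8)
The plan is to verify the claimed bound by simply counting the bits allocated to each of the three components of $\varphi(v)$. First, the preamble in item~1 consists of $s = \lceil \log n\rceil$ ones followed by a delimiter of constant length, so this contributes $s + O(1)$ bits. Second, item~2 is the binary representation of $\dist_T(v,r)$ written in exactly $s$ bits; this is well-defined because $T$ is a tree on at most $n$ vertices, so every root-distance is at most $n-1 < 2^s$, and contributes another $s$ bits. Third, item~3 stores, for each vertex $x \in \str_T(v)\setminus\{v\}$, a pair consisting of $\id(x)$ and $\dist_T(x,r)$, each encoded in $s$ bits; since identifiers lie in $\{1,\dots,n\}$ and distances are again bounded by $n-1$, each such $x$ costs $2s$ bits.

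The one input-dependent quantity is $\abs{\str_T(v)\setminus\{v\}}$. Here I would invoke the hypothesis that $T$ has width at most $\width$: by the definition of the width of an elimination tree given in the Preliminaries, $\abs{\str_T(v)} - 1 \le \width$ for every vertex $v$, hence $\abs{\str_T(v)\setminus\{v\}} \le \width$. Therefore item~3 contributes at most $\width \cdot 2s$ bits. Summing the three contributions gives $\abs{\varphi(v)} \le s + O(1) + s + 2\width s = s\cdot(2 + 2\width) + O(1) = \lceil \log n\rceil\cdot(2+2\width) + O(1)$, which is $\mathcal{O}(\log n)$ since $\width$ is a fixed constant. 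This matches the claimed expression exactly.

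There is essentially no obstacle here: the statement is a routine accounting of label lengths, and the only thing to be careful about is (i) making sure each numeric field genuinely fits in $s$ bits — which follows from $n$-vertex trees having root-distances below $2^{\lceil\log n\rceil}$ and identifiers being drawn from $\{1,\dots,n\}$ — and (ii) correctly reading off the cardinality bound $\abs{\str_T(v)} \le \width + 1$ from the width hypothesis. If one wanted to be fully precise about the delimiter one would note that any self-delimiting encoding of a constant-size separator works, and the constant it contributes is absorbed into the $O(1)$ term; the leading block of $s$ ones in item~1 is what lets the verifier recover the value of $s$ locally, but its length is accounted for above and does not affect the asymptotics.
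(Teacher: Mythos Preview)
Your proof is correct and is exactly the straightforward bit-count the paper has in mind; the paper itself omits the argument entirely, simply prefacing the observation with ``It is easy to see that''. Your accounting of the three components and your use of the width bound $\abs{\str_T(v)}-1\le\width$ are precisely what is needed.
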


\noindent Let us present the verifier's algorithm.

\medskip

\noindent {\bf Verifying the legality of the labels of $v$ and its neighbors.} Let $\varphi$ be a labeling on $G.$
We say that a label $\varphi(v)$ on a vertex $v$ is  \emph{legal} if it is a binary string of the format as what would have been provided by a prover, i.e., 
\begin{center}
[\textsc{\bf Legal label}] a number, say $s_v$, of 1's followed by (a prescribed) delimiter, then followed by a binary string whose length is $2s_v\cdot \width'$ for some $\width'\leq \width.$
\end{center}

First, the verifier $\verifier$ at $v$ examines its own label $\varphi(v)$ and decides if it is legal and outputs \no\ in case of an illegal label. 
For a legal label, $\verifier$ at $v$ parses the string $\varphi(v)$ and interprets each fragment of the string as follows.
\begin{itemize}
\item Let $s_v$ be the value of $\lceil \log \abs{V(G)} \rceil$.
\item Let $d_v\in \mathbb{N}$ be the distance from $v$ to the root on a tree $T'$ presumed by the label $\varphi(v)$.
\item Let $A_v$ be the  set $\str_{T'}(v)\setminus v$, as a set of identifiers, presumed by the label $\varphi(v)$.
\item Let $d_{x,v}$ be the distance assigned to each $x\in A_v$. 
\end{itemize}

During the one-round label communication with its neighbors, $v$ sends its identifier $\id(v)$ and label $\varphi(v)$ 
as a concatenation of the following strings.
\begin{itemize}
\item $s_v$ 1's followed by (a prescribed) delimiter,
\item $\id(v)$ in $s_v$-bits, 
\item $d_v$ in $s_v$-bits,
\item the elements in $A_v$ and $d_{x,v}$ for each $x\in A_v$, each in $s_v$-bits.
\end{itemize}

When the verifier $\verifier$ at vertex $v$ receives a string from its neighbor $u$, 
it first verifies $s_v=s_u$; if not, it outputs \no. Then $\verifier$ at $v$ can parse the string from $u$ 
and extract the tuple $(\id(u),d_u,A_u,\{d_{x,u}\}_{x\in A_u})$ for each neighbor $u$ of $v.$

\bigskip

\noindent {\bf Verifier $\verifier$'s run on a local instance at  $v$.} 
Given the tuples $(\id(u),d_u,A_u,\{d_{x,u}\}_{x\in A_u})$ of all the neighbors $u$ of $v$, $\verifier$ at $v$ 
first identifies \emph{parents} and \emph{children}. A neighbor $u$ of $v$ is a \emph{parent}
(respectively a \emph{child}) of $v$ if $d_u=d_v-1$ (respectively, $d_u=d_v+1$) holds. 
It also identifies a neighbor $u$ as a \emph{strict ancestor} if $d_u<d_v.$

Now $\verifier$ at $v$ verifies the following conditions. 
If any of the conditions (a)-(d) is violated, then it outputs \no\ and outputs \yes\ otherwise.

\begin{enumerate}

\item[(a)] If $d_v>0$, then $v$ has a unique parent $\parent(v)$ of $v$.
\item[(b)] For every child $u$ of $v$, we have $A_u\setminus v \subseteq A_v$. 
Moreover, it holds that $d_{x,u}=d_{x,v}$  for every $x\in A_u\setminus v$ and $d_{v,u}=d_v$.
\item[(c)] For every neighbor $z$ of $v$ with $d_z\leq d_v$, it holds that $z\in A_v$ and $d_z=d_{z,v}.$
\item[(d)] The numbers in the set $\{d_v\}\cup \{d_{x,v}\mid x\in A_v\}$ are all distinct.
\end{enumerate}
This gives a full description of the execution of $\verifier$ at each  $v$ of $G.$
In the remainder of this subsection, we show that the presented \PLS\ is correct.

\bigskip

\noindent {\bf Completeness. }
 One can readily see that for $G\in \mathcal G_t$ with an arbitrary $\id$, the prover $\prover$ generates a labeling on $G$ 
such that $\verifier$ at every $v$ accepts. 

\medskip

\noindent {\bf Soundness.} 
We want to show that if $G\notin \mathcal G_t$, $\verifier$ rejects the local instance at some vertex.
Suppose that for some labeling $\varphi$ on a graph $G$, 
the verifier $\verifier$ at each $v\in V(G)$ accepts its local instance. We will show that $G\in \mathcal G_t$ in this case, thus establishing the soundness of our proof labeling scheme. Notice that there is a single number, say $s$, that all vertices of $G$ agree on, i.e., $s=s_v$ for every $v$ of $G.$ 

Let $F$ be the graph on $V(G)$ obtained by adding an edge between $v$ and the parent $\parent(v)$ of $v$.
By condition (a), there is at most one parent $\parent(v)$ for each vertex $v$, and $\parent(v)$ is a neighbor of $v$ by definition. 
Therefore, $F$ is a subgraph of $G.$  By conditions (a) each vertex with $d_v > 0$ has a unique parent $p(v)$ and the distance $d_{p(v)}$ assigned to $p(v)$ is $d_v - 1$. Hence, $F$ is a spanning forest of $G$ such that each tree of $F$ has a unique vertex $r$, called a root of $F$, with $d_r=0.$ For each vertex $v$ of $G$, let $Q_v$ be the (unique) path from $v$ to the root of the tree in $F$ containing $v.$ We consider $Q_v$ oriented from $v$ toward the root. 

\begin{observation}\label{obs:nosameedge}
For every edge $uv $ of $G$, it holds that $d_u\neq d_v$.
\end{observation}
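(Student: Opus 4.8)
\textbf{Proof proposal for \cref{obs:nosameedge}.}

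The plan is to argue by contradiction using condition (c) together with the distinctness condition (d). Suppose that $uv$ is an edge of $G$ with $d_u = d_v$. Without loss of generality we may look at the local instance at $v$: since $u$ is a neighbor of $v$ with $d_u \le d_v$, condition (c) applied at $v$ forces $u \in A_v$ and $d_u = d_{u,v}$. Now consider the set $\{d_v\} \cup \{d_{x,v} \mid x \in A_v\}$. It contains $d_{u,v} = d_u = d_v$, so the number $d_v$ appears both as the ``own distance'' entry and as the entry $d_{u,v}$ associated to $u \in A_v$. This contradicts condition (d), which asserts that all the numbers in $\{d_v\} \cup \{d_{x,v} \mid x \in A_v\}$ are pairwise distinct.

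To make this airtight I would first note that the hypothesis of the observation is exactly the standing assumption of the soundness argument: the verifier accepts at every vertex, so in particular conditions (c) and (d) hold at $v$. The only subtlety is to confirm that condition (c) is genuinely triggered here — that is, that $d_z \le d_v$ (with $z = u$) is the right inequality and that it is satisfied. Since $d_u = d_v$ we indeed have $d_u \le d_v$, so (c) applies; there is no edge case with $d_u = 0$ or $v = r$ to worry about, because (c) is stated for \emph{every} neighbor $z$ with $d_z \le d_v$ without exception, and $u \in N(v)$ by assumption.

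I do not expect a real obstacle here: the statement is essentially immediate from conditions (c) and (d), and the proof is two lines. If anything, the one thing worth double-checking is the convention on which endpoint to apply (c) at — but since $d_u = d_v$, the situation is symmetric in $u$ and $v$, so either choice works. This observation will then be used downstream to show that the spanning forest $F$ together with the claimed $\str$-sets is a genuine elimination forest of width at most $\width$, witnessing $G \in \mathcal G_\width$.
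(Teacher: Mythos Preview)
Your proposal is correct and follows exactly the paper's approach: the paper's proof is the single line ``If not, the violation of condition (c) or (d) is detected by $\verifier$ at both $u$ and $v$,'' and you have simply spelled out the implicit reasoning that (c) forces $u\in A_v$ with $d_{u,v}=d_v$, whereupon (d) fails. There is nothing to add or correct.
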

\begin{proof}
If not, the violation of condition (c) or (d) is detected by $\verifier$ at both $u$ and $v$. 
\end{proof}

\begin{observation}\label{obs:chainpersistent}
Let $v$ be a vertex of $G$ with a neighbor $x$ such that $d_x<d_v.$ Then for every vertex $q$ on $Q_v$ with $d_q\geq d_x$, 
the set $A_q$ contains $x$ unless $q=x.$
\end{observation}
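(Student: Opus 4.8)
\textbf{Proof plan for Observation \ref{obs:chainpersistent}.}

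The plan is to induct along the path $Q_v$, moving one step at a time from $v$ toward the root, and to show that membership of $x$ in the relevant $A$-set is ``handed down'' from child to parent until we reach the level of $x$ itself. Let me fix notation: write $Q_v = (v = q_0, q_1, q_2, \ldots)$ oriented toward the root, so $q_{i+1} = \parent(q_i)$ and hence $d_{q_{i+1}} = d_{q_i} - 1$ by condition (a). We want: for every $q_i$ on $Q_v$ with $d_{q_i} \ge d_x$ and $q_i \ne x$, we have $x \in A_{q_i}$.

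First I would establish the base case $q_0 = v$. By hypothesis $x$ is a neighbor of $v$ with $d_x < d_v$, so $x$ is a neighbor $z$ of $v$ with $d_z \le d_v$ (in fact strictly less), and condition (c) applied at $v$ forces $x \in A_v = A_{q_0}$. (Note $q_0 = v \ne x$ since $d_x \ne d_v$, e.g.\ by Observation \ref{obs:nosameedge}, or simply because $d_x < d_v$.) For the inductive step, suppose $x \in A_{q_i}$ where $q_i \ne x$ and $d_{q_i} \ge d_x$, and consider $q_{i+1} = \parent(q_i)$, assuming it still satisfies $d_{q_{i+1}} \ge d_x$. Since $q_i$ is a child of $q_{i+1}$, condition (b) at $q_{i+1}$ gives $A_{q_i} \setminus q_{i+1} \subseteq A_{q_{i+1}}$. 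Now $x \in A_{q_i}$, so either $x = q_{i+1}$ — but this is exactly the excluded case $q = x$ in the statement, so we may stop — or $x \ne q_{i+1}$, in which case $x \in A_{q_i} \setminus q_{i+1} \subseteq A_{q_{i+1}}$, completing the induction. Since the path $Q_v$ has strictly decreasing $d$-values along it and $x$ itself may or may not lie on $Q_v$, the process is well-defined and terminates: either we reach $q = x$ (and stop, as permitted), or we pass the level $d_x$ (where the hypothesis $d_q \ge d_x$ fails and there is nothing more to prove), so every $q$ on $Q_v$ with $d_q \ge d_x$ and $q \ne x$ has been covered.

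The only subtlety — and the one place to be careful — is the interaction between the two alternatives in ``$x = q_{i+1}$ or $x \ne q_{i+1}$'': we must make sure that once the chain reaches $x$ (if it ever does), the statement is vacuously fine beyond that point because the only vertices $q$ on $Q_v$ that are both at level $\ge d_x$ and equal to $x$ are excluded by the ``unless $q = x$'' clause, and any vertex strictly between $x$ and $v$ on $Q_v$ has $d$-value $> d_x$, hence is $\ne x$ and is handled by the induction. I expect no real obstacle here; the whole argument is a direct ``telescoping'' of conditions (b) and (c), with Observation \ref{obs:nosameedge} only needed to rule out the degenerate equality $d_x = d_v$. This observation will in turn be used to argue that each vertex's ``strata'' set $A_v$ really does encode $\str_F(v)$, feeding into the soundness proof.
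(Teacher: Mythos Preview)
Your proposal is correct and follows essentially the same approach as the paper: both argue by induction along $Q_v = (q_0, q_1, \ldots)$, using condition (c) at $v$ for the base case $x \in A_{q_0}$ and condition (b) (the containment $A_{q_i}\setminus q_{i+1}\subseteq A_{q_{i+1}}$) for the inductive step, stopping when either $q = x$ or the level drops below $d_x$. Your version is a bit more explicit about the ``unless $q=x$'' clause, but the underlying argument is identical.
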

\begin{proof}
Let $q_0,q_1,\ldots , q_\ell$ be the sequence of vertices on $Q_v$, where $q_0=v$ and $q_\ell$ is the (unique) vertex on $Q_v$ with $d_{q_\ell}=d_x.$
By condition (c), it holds that $x\in A_{q_0}.$ By inductively observing that the condition (b) is verified by $\verifier$ at each vertex $q_i$, we conclude that $x\in A_{q_i}$ for all $i=0, \dots , \ell$ or $q_\ell=x.$
\end{proof}

\begin{lemma}\label{lem:elimyes}
$F$ is an elimination tree of width at most $\width.$
\end{lemma}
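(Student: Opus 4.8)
The plan is to check the two defining requirements of an elimination tree for $F$ — that $F$ is a rooted spanning tree of $G$ and that every edge of $G$ joins a vertex to one of its $F$-ancestors — and then to bound the width of $F$ by $\width$. Throughout we use that the verifier accepts at every vertex, and hence that conditions (a)--(d), \cref{obs:nosameedge} and \cref{obs:chainpersistent} hold at and about every vertex.

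\emph{The ancestor property.} I would show that for every edge $uv$ of $G$ one of $u,v$ is an ancestor of the other in $F$. By \cref{obs:nosameedge} we may assume $d_u<d_v$; put $\ell:=d_v-d_u$ and let $q_0=v,q_1,\dots,q_\ell$ be the first $\ell+1$ vertices of $Q_v$ (oriented toward the root). Since condition (a) forces the assigned distance to drop by exactly one along each parent step, $d_{q_i}=d_v-i$, so in particular $d_{q_\ell}=d_u$ and, for $i<\ell$, $d_{q_i}>d_u$ whence $q_i\neq u$. If $q_\ell=u$, then $u$ lies on $Q_v$, i.e. $u$ is an ancestor of $v$, and we are done. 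Otherwise I claim a contradiction. Applying \cref{obs:chainpersistent} to $v$ and its neighbor $u$ gives $u\in A_{q_i}$ for all $i<\ell$ (as $q_i\neq u$) and $u\in A_{q_\ell}$ (as $q_\ell\neq u$). Now track the distance label attached to $u$ along this path: condition (c) at $v$ gives $d_{u,v}=d_u$, and for each $i$ condition (b) at $q_{i+1}$ applied to its child $q_i$ — legitimate since $u\in A_{q_i}\setminus q_{i+1}$ — yields $d_{u,q_i}=d_{u,q_{i+1}}$. Chaining, $d_{u,q_\ell}=d_u=d_{q_\ell}$, so condition (d) at $q_\ell$ is violated, contradicting acceptance at $q_\ell$. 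This establishes the ancestor property.

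\emph{$F$ is a spanning tree.} We already know $F$ is a spanning forest of $G$. If $F$ were disconnected, connectedness of $G$ would give an edge $uv$ of $G$ joining two distinct components of $F$; but by the ancestor property $u$ and $v$ lie in a common component, a contradiction. Hence $F$ is a rooted spanning tree of $G$, and together with the ancestor property and $V(F)=V(G)$ this shows $F$ is an elimination tree of $G$.

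\emph{Width bound.} It suffices to prove $\str_F(v)\setminus\{v\}\subseteq A_v$ for every vertex $v$: a legal label encodes at most $\width$ identifiers in $A_v$, so $\abs{\str_F(v)}\le\abs{A_v}+1\le\width+1$, giving width at most $\width$. So let $x\in\str_F(v)\setminus\{v\}$, i.e.\ $x$ is a proper ancestor of $v$ in $F$ with a $G$-neighbor $y\in V(F_v)$. Then $v$ lies on $Q_y$ (as $v$ is an ancestor of $y$), $x$ lies strictly above $y$ on $Q_y$, and $d_x<d_v\le d_y$; in particular $d_x<d_y$. Applying \cref{obs:chainpersistent} to $y$ and its neighbor $x$ gives $x\in A_q$ for every $q$ on $Q_y$ with $d_q\ge d_x$ and $q\neq x$, and taking $q=v$ (so $d_v>d_x$ and $v\neq x$) yields $x\in A_v$, as wanted.

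\emph{Main obstacle.} The delicate part is the first step: one must combine conditions (b), (c) and (d) to show the "ancestor-distance" fields $d_{x,\cdot}$ are forced to propagate unchanged up $Q_v$, so that a spurious ancestor $u$ whose assigned distance collides with that of the on-path vertex $q_\ell$ is necessarily detected at $q_\ell$. Once this bookkeeping is carried out, connectedness of $F$ and the width bound are short consequences of the ancestor property and \cref{obs:chainpersistent}.
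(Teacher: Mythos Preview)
Your proposal is correct and follows essentially the same approach as the paper's proof: establish the ancestor property for every $G$-edge via \cref{obs:chainpersistent} and a collision with condition (d) at the on-path vertex of matching distance, deduce connectedness of $F$, and then use \cref{obs:chainpersistent} again to show $\str_F(v)\setminus\{v\}\subseteq A_v$. The only real difference is that you spell out explicitly the propagation of the distance tag $d_{u,\cdot}$ along $Q_v$ via condition (b), whereas the paper compresses this into the single sentence ``the condition (d) is violated at $q$'' --- your version makes clearer why (d) actually fails, but the underlying argument is identical.
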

\begin{proof}
To show that $F$ is indeed an elimination tree of $G$, it suffices to show that there is no edge $uv \in E(G)\setminus E(F)$ 
such that neither vertex is an ancestor of the other in $F.$ In particular, this implies that $F$ consists of a single tree. Indeed, 
if $F$ consists of at least two connected components, then there exist two trees of $F$ as connected components with an edge with exactly one endpoint in each component as $G$ is connected. 
Consider an arbitrary edge $uv$ of $G$. 
We may assume that $d_u>d_v$ by~\cref{obs:nosameedge}. Consider the path $Q_u$ and the vertex $q$ on $Q_u$ with $d_q=d_v.$ Such a vertex $q$ exists on $Q_u$ as $\{d_w\mid w\in V(Q_u)\}=\{0,1,\ldots , d_u\}$.
As the local instance at $u$ meets the condition (c), we have $v\in A_u$. This implies that $v\in A_q$ or $v=q$ by~\cref{obs:chainpersistent}. In the former case, the condition (d) is violated at $q$. Therefore, $v$ is a strict ancestor of $u.$ It follows that $F$ is an elimination tree. 

Now we argue that $\str_F(v)\setminus v\subseteq A_v$ holds for every $v$ of $G$. 
We may assume $\str_F(v)\setminus v\neq \emptyset$, and consider an arbitrary vertex  $x\in \str_F(v)\setminus v.$ 
Note that $x$ has at least one neighbor in $F_v$ and let $u$ be such a descendant of $v$ (possibly $u=v$). 
As $u$ is a strict descendant of $x$ and $v$ lies on the path $Q_u$, we have $x\in A_v$ by~\cref{obs:chainpersistent}.

Finally, observe that $F$ has the width at most $\width$ as $\abs{\str_F(v)}-1=\abs{\str_F(v)\setminus v}\leq \abs{A_v}\leq \width$, where the last inequality holds due to legality condition of the label $\varphi(v)$.
This completes the proof.
\end{proof}

\subsection{Locally verifiable treewidth and key statements}\label{subsec:keystatements}

The result in~\cref{subsec:subgraphcase} holds when $G$ admits 
an elimination forest $F$ of $\width$ as a subgraph of $G$.
In general, however, we cannot expect this condition to hold. In fact, even much weaker versions of this condition may not be satisfiable. In \cite{squirrels-tree-minors-2024} it is shown that for \emph{any} function $f$ there exists a graph $G$ which does not admit tree-decomposition $(T, \beta)$ of width at most $f(\tw(G))$ and $T$ is a minor of $G$. For such graph $G$, any elimination $T$ which is also a minor of $G$ has the width larger than $f(\tw(G))$.

In order to circumvent the issue that the edge of an elimination tree $F$ may be absent in $G$, 
we want to \emph{simulate every edge of $F$ by an actual path in $G$}. 
An  \emph{oriented path system} $\P$ over $G$ is a set of oriented paths between vertices of $G$, where each path in $\P$ is obtained 
by orienting the edges of some path of $G$. The \emph{congestion} of a vertex $v$ of $G$ in the oriented path system $\P$ 
is the number of paths in $\P$ which traverses $v$ as a starting or internal vertex. Note that 
even if $v$ has congestion at most $\width$, there may be arbitrarily many paths in $\P$ having $v$ as a destination vertex. 
The \emph{congestion} of $\P$ 
is defined as the maximum congestion taken over all vertices of $G.$  
An oriented path system $\P$ over $G$ is to said to \emph{witness an elimination tree} $T$ if the following holds.
\begin{center}
  $(\star)$\phantomsection\label{def:star} for every vertex $v$ with a parent $\parent(v)$ in $F$, \\
either $v$ and $\parent(v)$ are adjacent in $G$ or there is a path in $\P$ oriented from $v$ to $\parent(v)$. 
\end{center}
We say that a graph $G$ has \emph{locally verifiable treewidth at most $k$} if $G$ admits an elimination tree $F$ of width at most $k$ and 
an oriented path system $\P$ over $G$ of congestion at most $k$ witnessing $F$ for each connected component. 

The proof labeling scheme for $\mathcal G_t$ presented in the previous subsection is not difficult to generalize for graphs of small locally verifiable treewidth.

\begin{proposition}\label{prop:locallyPLS}
For each $\width$, there is a proof labeling scheme $(\prover, \verifier)$ of size $O(\log n)$ for the class of graphs 
of locally verifiable treewidth at most $\width$.
\end{proposition}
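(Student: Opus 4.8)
The plan is to lift the proof labeling scheme of \cref{subsec:subgraphcase} to the situation where the elimination tree $F$ need not be a subgraph of $G$. The only place the illustrative scheme used $F\subseteq G$ was to recognize the parent of a vertex as a suitable \emph{neighbour}; so the single new ingredient will be that each tree edge $v\parent(v)$ of $F$ that is \emph{not} an edge of $G$ gets simulated by a vertex-by-vertex certificate of the path in $\P$ that witnesses it. Concretely, the prover will fix $F$ of width at most $\width$ and a witnessing oriented path system $\P$ of congestion at most $\width$ (guaranteed, one per component), and on top of the scale marker and the data already used in \cref{subsec:subgraphcase} --- the depth $d_v:=\dist_F(v,r)$ and the set $A_v:=\str_F(v)\setminus\{v\}$ together with the depths of its $\le\width$ elements --- it will store at $v$ the identifier of $\parent(v)$ and, for each of the at most $\width$ paths of $\P$ through $v$ as a starting or internal vertex, a tuple recording: the owner of the path (the vertex $v'$ with the path oriented from $v'$ to $\parent(v')$), its target $\parent(v')$, the position of $v$ along that path, the predecessor and successor of $v$ on it (with a flag for \say{successor is the target}), and --- relayed backward along the path from the target --- the pair $(d_{\parent(v')},A_{\parent(v')})$. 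Since the congestion is $\le\width$ and each relayed set has size $\le\width$, the label length stays $\Oh(\width^2\log n)=\Oh(\log n)$.

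The verifier will perform the usual legality and scale-agreement checks and then run the analogues of conditions (a)--(d) of \cref{subsec:subgraphcase}, with $G$-adjacency to the parent replaced, where needed, by a \emph{locally consistent} witnessing path: $v$ checks that it is the owner of a path of $\P$ whose target equals $\parent(v)$, whose positions start at $0$ at $v$ and increase by one along consecutive adjacent vertices, whose owner/target/flag fields agree between consecutive vertices, and whose last internal vertex is adjacent to $\parent(v)$ in $G$. Condition (b) is read \say{from the child's side}: $v$ checks $A_v\setminus\{\parent(v)\}\subseteq A_{\parent(v)}$ with matching recorded depths, using the relayed copy of $(d_{\parent(v)},A_{\parent(v)})$; conditions (c) and (d) are literally unchanged, being about $G$-neighbours of $v$ only; and $v$ additionally checks that every path entry it holds is locally consistent and that it holds at most $\width$ of them. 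Completeness is then immediate, since the prover's labelling on a genuine $(F,\P)$ satisfies all of these everywhere.

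For soundness I would replay \cref{lem:elimyes} almost verbatim. From accepting labels the parent pointers define a digraph $F$ on $V(G)$ in which depths strictly decrease, so $F$ is a rooted forest with exactly one root (depth $0$) per component, and the analogue of \cref{obs:nosameedge} --- no edge of $G$ joins two equal-depth vertices --- still follows from conditions (c) and (d) at both endpoints. The per-vertex position counters forbid the successor-chain from looping (a revisited vertex would be forced to hold two distinct positions), so the path system $\P$ reconstructed from the path entries is genuinely a set of oriented paths, has congestion $\le\width$, and satisfies $(\star)$. For the width and elimination-tree properties, condition (b) read along a parent chain $Q_v$ of $F$ yields the analogue of \cref{obs:chainpersistent} --- now a statement purely about the $A$-sets and the parent pointers, with no reference to walks in $G$ --- so together with (c) every $G$-neighbour of $v$ of smaller depth lies in $A_v$ with the correct depth; hence for every edge $uv$ of $G$ the vertex of smaller depth is an $F$-ancestor of the other (else (d) is violated at the chain vertex of that depth), so each component of $F$ is a single elimination tree by connectivity, and $\abs{\str_F(v)}-1\le\abs{A_v}\le\width$. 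Thus $G$ has locally verifiable treewidth at most $\width$, as required.

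I expect the main obstacle to be exactly the bookkeeping around the simulated edges: certifying \emph{locally, in one round} that a claimed path really is a path from $v$ to $\parent(v)$ --- which I resolve via the position counters that rule out loops --- and getting the data $(d_{\parent(v)},A_{\parent(v)})$ to the child $v$ even though $\parent(v)$ may be far away --- which I resolve by relaying it backward along the witnessing path at a cost of only $\Oh(\width)$ extra identifiers per path. I would also double-check that relaying backward does not interact badly with congestion (it does not: congestion bounds the number of paths through a vertex, and each such path carries exactly one relayed set) and that a vertex which is merely a path \emph{destination}, hence possibly the target of unboundedly many paths, is never asked to store anything about those paths. Everything else is a routine transcription of \cref{subsec:subgraphcase}.
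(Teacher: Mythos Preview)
Your proposal is correct and follows essentially the same approach as the paper: both lift the scheme of \cref{subsec:subgraphcase} by simulating each missing tree edge $v\parent(v)$ via the corresponding path of $\P$, annotating every vertex along the path (except the destination) with the channel's identity, its local predecessor/successor, and relayed cargo, then rerunning conditions (a)--(d) with the simulated adjacency. The differences are cosmetic: the paper relays \emph{both} the child's and the parent's tuples forward and has the parent check condition (b), whereas you relay only the parent's tuple backward and have the child check it; the paper enforces path structure by the ``precisely two neighbours carry this channel'' test, whereas you use monotone position counters. Both variants work, and your soundness sketch is in fact more explicit than the paper's, which defers the details as ``tedious to verify''.

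One small point worth tightening: your sentence ``a revisited vertex would be forced to hold two distinct positions'' is not by itself a contradiction --- a vertex is allowed to hold several path entries, so a walk that revisits $w$ would simply give $w$ two entries for the same owner. This does not break soundness (you can shortcut the walk to a path, and your $\le\width$-entries check still upper-bounds the congestion), but if you want the reconstructed $\P$ to consist of genuine paths you should additionally have each vertex reject when it holds two entries with the same owner.
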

\begin{proof}
Although it is standard to assume the graph is connected in distributed, we note that here the disconnected case reduces to the connected case. 
Indeed, the prover applies the labeling on each connected component. If the verifier at some vertex of a (not necessarily connected) graph outputs \no, then the corresponding connected component has locally verifiable treewidth larger than $\width$ and so does the entire graph. If $\verifier$ outputs \yes\ at each vertex, each component has locally verifiable treewidth at most $\width$, and so does the entire graph.  Henceforth, we assume that the underlying graph $G$ is connected.

Clearly, any graph of $\mathcal G_\width$ handled in the previous subsection has bounded locally verifiable treewidth. 
We modify the proof labeling scheme of Subsection~\ref{subsec:subgraphcase} so that it works for the general case. 
To present a prover $\prover$ for connected graphs of locally verifiable treewidth at most $\width$, 
we fix an elimination tree $T$ of $G$ of width at most $\width$ and an oriented path system $\P$ over $G$ which witnesses $T$ and has congestion at most $\width.$ 
We may assume that each path in the path system $\P$ connects two vertices that are 
adjacent in $T$ but not in $G.$ A $(u,v)$-path $P\in \P$ is presumed to be  oriented  from $u$ toward $v$, where $v$ 
is the parent of $u$ in $T$.

\medskip

Intuitively, we want to open a `channel' connecting $u$ and $v$, using the $(u,v)$-path $P$ in $\P$, which would 
convey the information that $u$ and $v$ would have exchanged if they were adjacent in $G.$ 
The channel between $u$ and $v$ is implemented by adding to each vertex on $V(P)\setminus \{v\}$ 
the tuple $(\id(u),d_u, A_u, \{d_{x,u}\}_{x\in A_u})$, the (identifiers of the) \emph{child} $u$ and \emph{parent} $v$ together with the \emph{succeeding} and \emph{preceding} vertex on $P$ if one exists.
Each vertex on $P$, except for the destination vertex $v$, will compare what its predecessor conveys with its own information and outputs \no\ when there is a mismatch. As each vertex participates in at most $\width$ channels, the size of the label under this modification will 
be increased by an additive factor of $O(\width^2\cdot \log n)$. 

\bigskip

Formally, for every $(u,v)$-path $P=p_0,\ldots , p_\ell \in \P$, 
the prover $\prover$ adds to the label $\varphi(p_i)$ the following information for all $0\leq i <\ell,$ on top of what is encoded already in the case of $\mathcal G_\width$ in the previous subsection.

\begin{itemize}
\item \textsc{Channel name}: the identifiers of $u$ and $v$, where the second element 
$v$ is interpreted as the parent of the first element $u$ in the presumed elimination tree.  
\item \textsc{Predecessor and successor}: the identifiers of $p_{i-1}$ and $p_{i+1}.$ When $i=0$, 
the former will be omitted.
\item \textsc{Cargo}: $(\id(u),d_u, A_u, \{d_{x,u}\}_{x\in A_u})$ and $(\id(v),d_v, A_v, \{d_{x,v}\}_{x\in A_v})$ 
\end{itemize}

If $p_i$ carries multiple channels, each channel can be separated with a suitable delimiter, e.g., 0's of length $s.$ 
The verifier $V$ at a vertex $v$ will verify the following for each channel it carries (on top of what the verifier does in the case of $\mathcal G_\width$ in the previous subsection).

\begin{enumerate}
\item If it is a starting vertex $p_0$ of the channel with $p_1$ as the successor indicated in this channel, 
$p_1$ must be a neighbor of it and have the same channel. There must be no other neighbor carrying the same channel. Moreover, the cargo in this channel at $p_0$ 
should match the cargo of the same channel at $p_1.$
\item If it is an internal vertex $p_i$ i.e., the identifier of the current vertex 
does not match the starting and destination vertices indicated in the channel name, 
then $p_i$ must have precisely two neighbors, whose identifiers match those of $p_{i-1}$ and $p_{i+1}$ 
indicated in the channel of $p_i$, carrying the same channel. 
Moreover, the cargo in this channel at $p_i$ 
should match the cargo  of the same channel at $p_{i-1}$ and $p_{i+1}.$
\end{enumerate}

Finally, a vertex $v$ receives all the cargo from each of its neighbors $w$ which carries a channel with $v$ as the destination vertex. 
If there is a channel named $(u,v)$ in $\varphi(w)$, and the second part of the cargo in the channel indeed does not match $(\id(v),d_v, A_v, \{d_{x,v}\}_{x\in A_v})$, then $\verifier$ at $v$ outputs \no.
It is tedious to verify that both the completeness and the soundness conditions are satisfied. 
\end{proof}

~\cref{prop:locallyPLS} combined with the combinatorial statement below asserting that any graph of treewidth at most $\width$ has small locally verifiable treewidth, we already obtain an `approximate proof labeling scheme'. To turn it into an exact proof labeling scheme for treewidth, we use a proof labeling scheme that verifies if a (connected) graph has an \mso-expressible property; in particular, the property of having treewidth at most $\width.$ 
We formalize the key technical statements and defer their proofs to subsequent sections.

\begin{proposition}\label{thm:tw2lvtw}
There is a function $f(\width) \in 2^{2^{O(\width^2)}}$ such that for every $\width$, any graph of treewidth at most $\width$ has locally verifiable treewidth at most $f(\width).$ 
\end{proposition}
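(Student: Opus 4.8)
The plan is to start from a sane, rooted tree-decomposition $(T,\chi)$ of $G$ of width at most $\width$, obtained from \cref{lem:optimalsane}, and to manufacture from it an elimination tree $F$ together with an oriented path system $\P$ of bounded congestion witnessing $F$. The natural elimination tree associated to a tree-decomposition is obtained by processing the bags top-down and, within each bag $\chi(t)$, eliminating the vertices of the margin $\margin(t)$ in some fixed order, placing them on a path hanging below the elimination-tree position of $\adh(t)$. Because the decomposition is sane, each $\margin(t)$ is nonempty and $G[\comp(t)]$ is connected, which is exactly what we need to keep the construction well-defined and to route paths inside components. The width of the resulting elimination tree $F$ is at most $\width$ (its $\str$-sets are contained in the bags), so the first requirement of locally verifiable treewidth is met with the \emph{same} bound $\width$; the blow-up to $f(\width)$ will come entirely from the congestion of the path system.

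The core of the argument is the construction of $\P$: for every non-root node $t$, and for every vertex $u$ that gets eliminated ``at $t$'' whose parent $\parent(u)$ in $F$ is \emph{not} adjacent to $u$ in $G$, we must exhibit an oriented $u\to\parent(u)$ path in $G$. I would build these paths locally within each $G[\comp(t)]$, using connectivity and the fact that the vertices of $\adh(t)$ each have a neighbor in $\comp(t)$ (saneness again). A clean way is: fix, inside $G[\comp(t)\cup\adh(t)]$, a BFS/shortest-path structure rooted appropriately, and route each required edge of $F$ along tree paths in that structure. The key point to control is \emph{congestion}: a vertex $v\in\comp(t)$ lies in $\comp(t')$ for every ancestor $t'$ of $t$ as well, so it can be used by path systems built at many different nodes. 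This is where the iterated-exponential bound enters. To tame it, I would (i) make the routing at node $t$ use only vertices of $\margin(t)$ plus $\adh(t)$ as internal vertices wherever possible — but that is generally impossible since $G[\margin(t)]$ need not be connected — so instead (ii) bound, for a fixed vertex $v$, the number of nodes $t$ whose local routing can pass through $v$, and the number of paths through $v$ created at each such $t$. The number of paths created ``at $t$'' is at most $|\margin(t)|\cdot(\text{something})$, i.e.\ bounded in terms of $\width$; the number of relevant ancestors $t$ is where one must be careful, and a charging/recoloring argument reducing to a bounded-depth structure (or a careful choice of which ancestor's routing is ``responsible'' for each segment) should keep the total bounded by a function of $\width$ alone.

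Concretely, I expect the cleanest route is to first pass to a tree-decomposition with additional normalization — for instance one in which the adhesions along any root-to-leaf path change ``slowly'', or one derived from the \emph{lean}/\emph{linked} tree-decompositions of Thomas, or simply a binary decomposition — so that one can argue that each vertex $v$ is an internal vertex of routing paths coming from only $2^{O(\width)}$ many ancestor-nodes, each contributing $2^{O(\width)}$ paths, yielding $f(\width)\le 2^{2^{O(\width^2)}}$ as claimed. Saneness handles the ``$v$ is used as a \emph{starting} vertex'' case painlessly, since at $v$'s own elimination node only $O(\width)$ paths start. Finally I would verify property $(\star)$ holds by construction and assemble the bound.

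\textbf{Main obstacle.} The real difficulty is the congestion bound, not the existence of the paths. Without any normalization, a single low vertex can be forced to carry routing paths created at \emph{all} of its (arbitrarily many) ancestors, giving unbounded congestion. The crux is to find the right normal form of the tree-decomposition — or the right accounting of which ancestor's path system is allowed to touch a given vertex — so that only boundedly many (as a function of $\width$) ancestors ever route through any fixed vertex, and to show each such ancestor contributes only boundedly many paths. Getting precisely the iterated-exponential dependence $2^{2^{O(\width^2)}}$, rather than something worse, will require being somewhat careful in this accounting, but the qualitative statement (some $f(\width)$ works) already hinges entirely on this step.
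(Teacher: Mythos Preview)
Your diagnosis of the difficulty is exactly right: existence of paths is cheap, congestion is the whole problem, and the danger is that a single deep vertex sits in $\comp(t')$ for \emph{every} ancestor $t'$ and could be traversed by routing done at all of them. But the remedy you sketch does not close this gap. Neither lean/linked decompositions nor making the decomposition binary bounds the number of ancestors of a node, and your BFS/shortest-path routing inside $G[\comp(t)\cup\adh(t)]$ offers no mechanism preventing the \emph{same} low vertex from lying on the BFS tree of arbitrarily many ancestors. The sentence ``each vertex $v$ is an internal vertex of routing paths coming from only $2^{O(\width)}$ many ancestor-nodes'' is precisely the statement that needs a proof, and none of the normalizations you list delivers it. As written this is a restatement of the obstacle, not a solution.

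The paper's proof takes a genuinely different route and imports heavy machinery you do not mention: the \emph{guidance systems} of Bojańczyk--Pilipczuk~\cite{BojanczykP2016}. A guidance system is a family of in-trees in $G$ whose intersection graph has bounded chromatic number; this colorability is what replaces your missing accounting argument, because at most $k$ in-trees of a $k$-colorable system can pass through any fixed vertex. The argument is two-level: first one invokes a result of~\cite{BojanczykP2016} giving a sane tree-decomposition whose adhesions are strongly captured by an $O(\width^3)$-colorable guidance system and whose \emph{marginal graphs} $G^\star_t$ have pathwidth $O(\width)$; second, one invokes another result of~\cite{BojanczykP2016} that bounded-pathwidth graphs admit guidance systems capturing all bags with $2^{2^{O(\width^2)}}$ colors (this is where the iterated exponential enters, not from any counting you do). Two combination lemmas (\cref{lem:torsofiedelim} and \cref{lem:combineelim}) then turn guidance systems into elimination trees with witnessing path systems of congestion bounded by the chromatic numbers. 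Note also that the resulting elimination tree does \emph{not} have width $\width$; its width is already $f(\width)$, contrary to your first paragraph.
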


\begin{proposition}\label{thm:msoPLS}
For every integer $k$ and for every \mso-sentence $\psi,$ there is a proof labeling scheme $(\prover, \verifier)$ of size $O(\log n)$ 
for the class of connected graphs which has locally verifiable treewidth at most $k$ 
and satisfies the sentence $\psi.$
\end{proposition}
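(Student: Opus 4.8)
The plan is to certify, with $O(\log n)$ bits, that a connected graph $G$ (i) has locally verifiable treewidth at most $k$ and (ii) satisfies a fixed \mso-sentence $\psi$. Part (i) is already handled by \cref{prop:locallyPLS}: the prover will reuse exactly the labeling from that proposition, which encodes an elimination tree $F$ of width $\le k$ together with a congestion-$\le k$ oriented path system $\P$ witnessing $F$. So the whole difficulty is to add, on top of these $O(\log n)$-bit labels, an $O(\log n)$-bit certificate that lets every vertex locally verify $G \models \psi$, given that the soundness part of \cref{prop:locallyPLS} guarantees that whenever all vertices accept, the structure $(F,\P)$ really is a valid elimination tree of width $\le k$ with a witnessing path system. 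The key point is that once we trust $F$, we have at hand a genuine width-$\le k$ tree-decomposition $(F,\chi)$ of $G$ (via the construction recalled in the Elimination Tree paragraph: $\chi(v) = \{v\} \cup \str_F(v)\setminus v$), and each bag is read off locally from the label $\varphi(v)$ — it is essentially $A_v \cup \{v\}$.

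The core technique is the standard Courcelle-style bottom-up computation of \mso-types over the tree-decomposition, made local. Fix the sentence $\psi$; after passing to prenex normal form and bounding quantifier rank by some constant $q$ depending only on $\psi$, the relevant object at each node is the rank-$q$ \mso-type (Hintikka formula / characteristic) of the subgraph $G[\chi(F_v)]$ "boundaried" at its adhesion $\adh_F(v) = \str_F(v)\setminus v$ together with the (constantly many) interface vertices. There are only finitely many such types — a number $C = C(\psi,k)$ depending only on $\psi$ and $k$, not on $n$ — and the type at $v$ is determined by: the type of the bag $\chi(v)$ itself together with the adjacencies among its vertices (all locally known from $\varphi(v)$ and from the one-round exchange with neighbors, since $F$ is an elimination tree so edges of $G$ go between ancestor–descendant pairs, hence any edge incident to $v$ has its other endpoint either in $\str_F(v)$, recorded in $A_v$, or a descendant whose $A$-set records $v$), and the multiset of types of the children of $v$, suitably "forgotten" down to the adhesion of $v$. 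The prover writes on $\varphi(v)$ the claimed type $\tau_v$ (a string of $O(\log C) = O(1)$ bits). To verify, vertex $v$ collects the claimed types $\tau_u$ of its children $u$ in $F$ — these are neighbors in $G$ if the edge is present, or arrive through the channel of $\P$ from $u$ to $v$ if not, exactly the mechanism already built in \cref{prop:locallyPLS} — recomputes the type that $\chi(F_v)$ should have from $\{\tau_u : u \text{ child of } v\}$ and its local adjacency picture, and checks it equals $\tau_v$. Leaves (childless nodes of $F$) check that $\tau_v$ is the type of the single-bag graph they see. Finally, the root $r$ of $F$ (the unique vertex with $d_r = 0$, which it recognizes locally) checks that $\tau_r$ — now the type of all of $G$, since $\adh_F(r) = \emptyset$ — entails $\psi$; this is a fixed finite check hardwired into $\verifier$. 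Completeness is immediate: the prover computes true types. Soundness: if all vertices accept, an easy bottom-up induction along $F$ shows every $\tau_v$ is the true rank-$q$ type of the boundaried graph at $v$, so $\tau_r$ is the true type of $G$, and the root's check certifies $G\models\psi$; conversely this relies on \cref{prop:locallyPLS}'s soundness to know $F$ is a legitimate elimination tree in the first place.

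The bookkeeping obstacles I expect to be the main work are: (a) making precise which \mso-logic (the incidence encoding $\{\vtx,\edg,\inc\}$) is being typed and confirming edges get assigned to a bag — here saneness-style properties and the elimination-tree structure ensure each edge $uv$ with, say, $u$ a descendant of $v$ lies in $\chi(u)$ since $v\in\str_F(u)$, so every edge is "owned" by a well-defined node and the recursion sees it exactly once; (b) defining the gluing operation on types (the composition/forgetting of boundaried-graph types over a node of bounded-degree-free tree-decomposition — note $F$ may have high degree, so one should either first binarize the tree-decomposition, which is routine and preserves width, or directly use the fact that \mso-types compose under disjoint union with shared boundary and the join is associative-commutative so a multiset of child types suffices) and checking it is computable — it is, by brute force over the finite type set, and crucially $G$ is the object being reasoned about, the prover is centralized and unbounded, while the verifier only does a finite table lookup; and (c) ensuring every piece of data a vertex needs (its bag, the adjacencies inside its bag, its children in $F$, the children's claimed types) is available after one round — the bag and $\str_F$-data are in $\varphi(v)$, child identification is by the distance field $d_u = d_v+1$, and children's types travel either directly or along the already-certified channels of $\P$. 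None of these is conceptually hard; the genuine content is entirely borrowed from \cref{prop:locallyPLS} (to trust $F$) and from Courcelle's theorem (for the finiteness and composability of \mso-types), and the novelty is only that both fit in $O(\log n)$ bits and one communication round.
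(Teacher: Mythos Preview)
Your proposal is correct and follows the same architecture as the paper: layer the model-checking certificate on top of the labeling from \cref{prop:locallyPLS}, store at each vertex a bounded-size bottom-up invariant over the certified elimination tree, ship children's invariants to their parent through the already-certified channels of the path system, have each vertex recompute and compare, and let the root do the final semantic check. The paper's implementation differs only cosmetically: it first reduces \mso\ to \msoo\ via a degeneracy argument (\cref{obs:MSO2toMSO1}) rather than typing the incidence structure directly as you propose, and it uses \emph{compact evaluation trees} (a concrete game-tree encoding, \cref{thm:mcMSO1onEF}) in place of rank-$q$ \mso-types as the per-node invariant---both are standard alternatives and neither changes the shape of the argument.
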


\begin{proofof}{\cref{main}}
The proof labeling scheme $(\prover, \verifier)$ works as follows. For a graph $G$ of treewidth at most $k$, $G$ has locally verifiable treewidth at most $f(k)$ by~\cref{thm:tw2lvtw} for some function $f$. Therefore, the prover $\prover$ can compute the labeling $\varphi_{local, f(k)}$ on $G$ which certifies that $G$ has locally verifiable treewidth at most $f(k)$, where the existence and computability of $\varphi_{local,f(k)}$ due to~\cref{prop:locallyPLS}. 

Fix an integer $k\in \N$ and let $\psi_k$ be the \mso-sentence such that for any graph $G$, the treewidth of $G$ is at most $k$ if and only if $G$ satisfies $\psi_k.$ As the property of having treewidth at most $k$ is closed under taking a minor, there exists a finite set ${\cal H}_k$ of minor obstructions for treewidth at most $k$ by the Robertson-Seymour Graph Minor Theorem \cite{lovasz-graphminor-survey}. That is, any graph $G$ has treewidth at most $k$ if and only $G$ does not contain any graph in ${\cal H}_k$ as a minor. 
For any fixed graph $H$, the property of not having $H$ as a minor is \mso-expressible. Moreover,  ${\cal H}_k$ is computable for each $k$~\cite{Bodlaender96,Lagergren98}. Therefore, there exists an \mso-sentence defining the property of having treewidth at most $k$, and such a sentence is computable. \optional{Find a reference to an algorithm computing this.. i think I came across one recently} 

Then $\prover$ computes the aforementioned sentence $\psi_k$, and computes the labeling $\varphi_{f(k),\psi'}$ on each connected component of $G$ as given in~\cref{thm:msoPLS}. Here, $\psi':=\psi_k\wedge \psi.$
The label $\varphi_{local,f(k)}(v)$ together with $\varphi_{f(k),\psi'_k}(v)$ is given to each vertex $v$ of $G.$ The verifier $\verifier$ simply checks the first part of the labeling as in~\cref{thm:tw2lvtw} and the second part of the labeling as in~\cref{thm:msoPLS}. The completeness and soundness of the constructed $(\prover,\verifier)$ follows immediately from~\cref{prop:locallyPLS} and~\cref{thm:msoPLS}.
\end{proofof}

In the remaining part of this paper, we focus on establishing~\cref{thm:tw2lvtw} and~\cref{thm:msoPLS}, which are presented in~\cref{sec:locallytw} and~\cref{sec:mso}, respectively.

\section{Proof of~\cref{thm:tw2lvtw}.}\label{sec:locallytw}
To prove \cref{thm:tw2lvtw}, we use the notion of a guidance system introduced by Bojańczyk and 
Pilipczuk~\cite{BojanczykP2016}. 
A \emph{guidance system} $\Lambda$ over a graph $G$ is a family of in-trees of $G$, where 
each in-tree of $\Lambda$ is an oriented tree obtained by orienting all the edges of a subtree of $G$ toward a unique distinguished vertex denoted as the \emph{root}. 
For a vertex $x$ of $G$, we define the set $\roots_\Lambda(x)$ as 
$$ \{y\in V(G):\text{$y$ is a root of an in-tree in $\Lambda$ containing $x$}\}.$$
A guidance system $\Lambda$ \emph{captures} a vertex set $A\subseteq V(G)$ if there is a vertex $u$ such that 
$A\subseteq \roots_\Lambda(u).$ We also say that $u$ captures $A$ in the guidance system $\Lambda.$ 
For a family $\mathcal A$ of vertex subsets of $V(G)$, a guidance system $\Lambda$ \emph{captures 
$\mathcal A$} if every set $A\in \mathcal A$ is captured by $\Lambda.$ 

We say that two in-trees of $G$ \emph{intersect} if they share a vertex which is not the root in both in-trees. Notice that if two in-trees share a vertex $v$ which is the root in one in-tree and not the root in another one then they intersect. 
A guidance system $\Lambda$ is \emph{$k$-colorable} if its intersection graph admits a proper $k$-coloring, or equivalently, 
one can assign colors to the in-trees of $\Lambda$ so that two in-trees have the same color only if they are vertex-disjoint 
or the unique common vertex is the root in each in-tree.

\begin{lemma}\label{lem:saneness}
Let $(T,\chi)$ be a sane tree-decomposition of $G$ and 
let $G^\star$ be the graph obtained from $G$ by 
adding an edge between any pair $u,v$ of vertices whenever there is a tree node $t$ with $u,v\in \adh(t).$
The following properties hold for any tree node $t.$
\begin{enumerate}
\item If $t$ is a tree node with a parent $\parent(t)$, then $\adh(t)\setminus \adh(\parent(t))\neq \emptyset.$
\item For any $w_t\in \adh(t)$, there is a vertex $r_t \in \margin(t)\cap N_{G^\star}(w_t)$.
\end{enumerate}
\end{lemma}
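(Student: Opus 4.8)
The plan is to distill one structural observation about sane tree-decompositions and then read off both items, deriving item (2) from item (1).

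\textbf{The workhorse observation.} First I would prove: for every non-root node $t$, every vertex of $\comp(t)$ occurs only in bags of $T_t$. This is the usual argument --- $T_t$ meets the rest of $T$ only across the edge $t\parent(t)$, so a vertex $v\in\comp(t)$ appearing also in some bag outside $T_t$ would, by the connectivity axiom of tree-decompositions (the bags containing $v$ form a subtree), lie in $\chi(t)\cap\chi(\parent(t))=\adh(t)$, contradicting $v\in\comp(t)$. From this I would extract exactly what is needed later: $\comp(t)\subseteq\comp(\parent(t))$; no vertex of $\comp(t)$ lies in any bag outside $T_t$ (in particular not in $\chi(\parent(t))$); and every $G$-edge incident to $\comp(t)$ has both endpoints in $\comp(t)\cup\adh(t)$ (since a covering bag of such an edge lies in $T_t$).

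\textbf{Item (1).} I would argue by contradiction. Assume $\adh(t)\subseteq\adh(\parent(t))$, write $t'=\parent(t)$, and read $\adh(\cdot)=\emptyset$ at the root. By the observation, any $G$-edge with one endpoint in $\comp(t)$ has its other endpoint in $\comp(t)\cup\adh(t)\subseteq\comp(t)\cup\adh(t')$; since $\comp(t')$ is disjoint from $\adh(t')$, there is no $G$-edge between $\comp(t)$ and $\comp(t')\setminus\comp(t)$. But $\comp(t)$ is nonempty (it contains the nonempty $\margin(t)$), $\comp(t)\subseteq\comp(t')$, and $G[\comp(t')]$ is connected by saneness of $t'$; hence $\comp(t')\setminus\comp(t)=\emptyset$, i.e.\ $\comp(t')=\comp(t)$. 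This is absurd: saneness makes $\margin(t')$ nonempty, its vertices lie in $\chi(t')$, a bag outside $T_t$, yet they would be forced into $\comp(t)$, contradicting the observation. (The case where $t'$ is the root is covered by the $\adh(\cdot)=\emptyset$ convention together with connectedness of $G$.)

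\textbf{Item (2).} Fix $w_t\in\adh(t)$ (so $t$ is not the root, else the statement is vacuous). Saneness gives a $G$-neighbor $u\in\comp(t)$ of $w_t$. If $u\in\chi(t)$, then $u\in\chi(t)\setminus\adh(t)=\margin(t)$, and since $G^\star$ contains all edges of $G$ I can take $r_t:=u$. Otherwise the edge $w_tu$ is covered by a bag $\chi(z)$; as $u\in\comp(t)$ with $u\notin\chi(t)$, the observation gives $z\in T_t$ and $z\neq t$, so $z\in T_{t_i}$ for some child $t_i$ of $t$. Since $w_t\in\chi(z)\cap\chi(t)$, $z\in T_{t_i}$, and $t\notin T_{t_i}$, the connectivity axiom forces $w_t\in\chi(t_i)$, hence $w_t\in\chi(t_i)\cap\chi(t)=\adh(t_i)$. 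Now I invoke item (1) for $t_i$: it yields $a\in\adh(t_i)\setminus\adh(\parent(t_i))=\adh(t_i)\setminus\adh(t)$, so $a\in\chi(t)\setminus\adh(t)=\margin(t)$ and $a\neq w_t$; and because $\{w_t,a\}\subseteq\adh(t_i)$, the pair $w_ta$ is an edge of $G^\star$ by construction. So $r_t:=a$ works.

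The hard part is really just getting the workhorse observation right and bookkeeping the subtrees carefully; after that both items are short. The step I would be most careful about is the last move in item (2): the neighbor $u$ supplied by saneness may sit far below $t$ and be useless on its own, and the content of the argument is that item (1) forces the adhesion toward the relevant child to ``poke out'' of $\adh(t)$ into $\margin(t)$, while passing to $G^\star$ --- which turns every adhesion into a clique --- is exactly what promotes such a vertex to a neighbor of $w_t$. I would also double-check the degenerate cases (when $\parent(t)$ is the root, so its adhesion is empty) go through with the stated conventions.
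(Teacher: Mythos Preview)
Your proof is correct and follows essentially the same approach as the paper. The only presentational difference is that you make the ``workhorse observation'' (vertices of $\comp(t)$ appear only in bags of $T_t$, hence edges out of $\comp(t)$ land in $\adh(t)$) explicit, whereas the paper uses it implicitly; for item (1) the paper phrases the contradiction as ``$\adh(\parent(t))$ separates $\margin(t)$ from $\margin(\parent(t))$ inside the connected set $\comp(\parent(t))$,'' while you phrase it as ``no edges from $\comp(t)$ to $\comp(\parent(t))\setminus\comp(t)$ forces $\comp(\parent(t))=\comp(t)$,'' but these are the same separator argument, and item (2) is identical in both.
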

\begin{proof}
For the first property, observe that the saneness of $(T,\chi)$ indicates that both $\margin(t)$ and $\margin(\parent(t))$ are non-empty by the first condition of saneness. If $\adh(t)\subseteq \adh(\parent(t))$, then $\adh(\parent(t))$ separates the vertex sets $\margin(t)$ from $\margin(\parent(t))$. As $\margin(t)\cup \margin (\parent(t))$ is contained in the component of $\parent(t)$, this means that $\comp(\parent(t))$ has at least two connected components, violating the third condition of saneness. 

Let us prove the second property. 
If $w_t$ has a neighbor in $\margin(t)$ in the graph $G$, there is nothing to prove. 
So we may assume that this is not the case. 
By the second condition of saneness, every vertex in the adhesion of $t$ has a neighbor in the component of $t.$ Let $s_t\in \comp(t)$ be a neighbor of $w_t$ in $G$. If $s_t$ is in the margin of $t$, we are done. So suppose $s_t\in \comp(t)\setminus \margin(t).$ By definition $\comp(t)$ can be partitioned into $\margin(t)$ and $\comp(t')$ over all children $t'$ of $t$.
Since $s_t$ and $w_t$ are adjacent, it follows that
there exists a child $b$ of $t$ whose component contains $s_t$ and we have $w_t\in \adh(b)$. By the first property of the lemma statement, there exists a vertex $r_t\in \adh(b)\setminus \adh(t)$ and thus $r_t\in \margin(t)$ . 
It remains to observe that $r_t$ and $w_t$ are adjacent in $G^\star$ since $r_t, w_t \in \adh(b)$.
\end{proof}

For a tree-decomposition $(T,\chi)$ of $G$ and a tree node $t$, the \emph{marginal graph} $G^\star_t$ at $t$ is the graph 
obtained from $G[\margin(t)]$ by adding an edge $uv$ whenever there is a child $b$ of $t$ such that $u,v\in \adh(b)$ (and delete edges if parallel edges are created).

\begin{lemma}\label{lem:torsofiedelim}
Let $(T,\chi)$ be a tree-decomposition of a connected graph $G$
and $\Lambda$ be a $k$-colorable guidance system which captures all bags of $(T,\chi)$.
Then for any vertex $r$ of $G$, there is an elimination tree $F$ of $G$ rooted at $r$ 
whose width is the same as the width of $(T,\chi)$ 
and an oriented path system of congestion $k$ witnessing $F.$
\end{lemma}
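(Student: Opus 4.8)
The plan is to build the elimination tree $F$ by recursively processing the tree-decomposition from the root, and to use the guidance system $\Lambda$ to convert the ``virtual'' edges of $F$ (pairs in a common adhesion that need not be edges of $G$) into actual paths of $G$ with controlled congestion. First I would invoke \cref{lem:optimalsane} to replace $(T,\chi)$ by a sane tree-decomposition of the same width whose bags are subsets of the old ones; since $\Lambda$ captures all bags of the original decomposition, it still captures all bags of the sane one (every new bag is inside an old bag, hence inside some $\roots_\Lambda(u)$). So I may assume $(T,\chi)$ is sane.

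Next I would construct $F$ top-down. Process tree nodes in a BFS/DFS order from the root of $T$. For the root node $t_0$, order the vertices of $\chi(t_0)$ arbitrarily into a path, with $r$ placed appropriately (we want $F$ rooted at $r$; since $G$ is connected we can first reroute so that $r$ lies in the root bag, or handle $r$ separately as in the statement). For a non-root node $t$ with parent $\parent(t)$: the adhesion $\adh(t)$ has already been placed on a root-to-node path of $F$ (this is an invariant — the adhesion of every processed node forms a chain in $F$, which follows from part (1) of \cref{lem:saneness} guaranteeing $\adh(t)\setminus\adh(\parent(t))\neq\emptyset$ so adhesions strictly grow down a branch); now attach the margin $\margin(t)$ below the lowest vertex of $\adh(t)$ in $F$, ordering $\margin(t)$ according to an elimination tree of the marginal graph $G^\star_t$ of width equal to $|\margin(t)|-1$ at worst but actually so that the resulting $\str_F(\cdot)$ stays bounded. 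The key point — this is standard in the Bojańczyk–Pilipczuk framework — is that because $\Lambda$ captures each bag $\chi(t)=\adh(t)\cup\margin(t)$, and the adhesion sits as a chain of ancestors, the resulting elimination tree $F$ has $\str_F(v)\subseteq \chi(t)$ for the node $t$ with $v\in\margin(t)$, giving width at most the width of $(T,\chi)$.

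Then I would build the oriented path system. For each vertex $v$ of $G$ with parent $\parent(v)$ in $F$ such that $v\parent(v)\notin E(G)$, I must produce an oriented $v\to\parent(v)$ path in $G$. Here is where the guidance system does the work: $\parent(v)$ lies in $\str_F(v)\subseteq\chi(t)$ for the appropriate node, and since $\Lambda$ captures $\chi(t)$ there is a capturing vertex $u$ with $\chi(t)\subseteq\roots_\Lambda(u)$; in particular both $v$'s branch and $\parent(v)$ are reachable. Concretely, one routes from $v$ up to the capture vertex $u$ (or to a suitable vertex on $v$'s branch) along one in-tree of $\Lambda$ rooted at $\parent(v)$, or concatenates: go along the in-tree of $\Lambda$ containing $v$ with root $\parent(v)$ — such an in-tree exists because $\parent(v)\in\roots_\Lambda(u)$ for a $u$ capturing the bag, and one can select in-trees so that from any vertex of the bag's ``territory'' there is an in-tree to each root. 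The $k$-colorability of $\Lambda$ is exactly what bounds congestion: in-trees of the same color are vertex-disjoint except possibly at roots, and a vertex $v$ is traversed as start/internal vertex of a path only when it is a non-root vertex of the corresponding in-tree, so $v$ lies in at most $k$ such in-trees, hence at most $k$ paths traverse it non-terminally — congestion at most $k$. I should also clean up the path system so that each path joins vertices adjacent in $F$ but not in $G$ (discard shortcuts, shorten to simple paths), which does not increase congestion.

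The main obstacle I anticipate is the simultaneous bookkeeping in the top-down construction: ensuring the invariant that every adhesion is a chain of ancestors in $F$ and that $\str_F(v)$ never exceeds $\chi(t)$, while at the same time guaranteeing that for the virtual edge $v\parent(v)$ there is genuinely an in-tree of $\Lambda$ available, rooted exactly at $\parent(v)$, passing through a vertex on $v$'s branch close enough that the concatenated route is a path through $G$ (not merely a walk) — and then checking congestion survives the de-shortcutting. Making the interface between ``$\Lambda$ captures $\chi(t)$'' and ``there is an in-tree from (near) $v$ to $\parent(v)$'' fully rigorous — essentially re-deriving the relevant part of \cite{BojanczykP2016} in the path-system language of \cref{def:star} — is the technical heart of the argument; everything else is routine induction on $T$.
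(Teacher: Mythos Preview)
Your overall strategy matches the paper's: pass to a sane decomposition, build $F$ top-down by gluing elimination trees of the marginal graphs below the deepest adhesion vertex, then route each missing $F$-edge through two in-trees of $\Lambda$ meeting at a capturing vertex. Two points, however, are not yet in order.

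\medskip

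\textbf{Construction of $F$.} You say ``attach the margin below the lowest vertex of $\adh(t)$, ordering $\margin(t)$ according to an elimination tree of the marginal graph $G^\star_t$,'' and you justify the chain invariant for adhesions by \cref{lem:saneness}(1). That lemma only says $\adh(t)\setminus\adh(\parent(t))\neq\emptyset$; it does not by itself force $\adh(t)$ to lie on a single root-to-leaf path of $F$. The paper gets this by working in $G^\star$ (add an edge inside every adhesion): then every adhesion is a clique in $G^\star$, and a DFS-tree of each connected $G^\star_t$ makes cliques into ancestor--descendant chains. Without this $G^\star$/DFS step, your invariant is asserted rather than proved, and you also have no argument that the glued $F$ is an elimination tree (i.e.,\ that every edge of $G$ is between comparable vertices of $F$).

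\medskip

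\textbf{Congestion bound.} This is the real gap. You write: ``$v$ lies in at most $k$ such in-trees, hence at most $k$ paths traverse it non-terminally.'' But each path of $\P$ is assembled from \emph{two} in-trees, and a single in-tree rooted at $s$ can participate in many paths (the one from $s$ to $\parent(s)$ \emph{and} every path from a child of $s$ to $s$). So ``$\leq k$ in-trees through $v$'' does not give ``$\leq k$ paths through $v$'' by counting. The paper's fix is an extra structural observation: the endpoints of every path in $\P$ are a parent--child pair in $F$, and both endpoints are roots of in-trees containing $v$; since there are at most $k$ such roots and parent--child edges among $\ell$ vertices of a tree number at most $\ell-1$, there are at most $k-1$ paths through $v$ as an internal vertex, plus at most one path starting at $v$. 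Your write-up needs this step (or an equivalent one) to get the bound $k$ rather than something unbounded.
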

\begin{proof}
Let $T$ be rooted at a node whose bag contains $r$. 
Notice that one can modify a tree-decomposition into a sane tree-decomposition without increasing the width 
so that each bag of the sane tree-decomposition belongs to the bag of some node of the input tree-decomposition by~\cref{lem:optimalsane}.
Therefore, we may assume that $(T,\chi)$ is a sane tree-decomposition while maintaining the assumption 
that $\Lambda$ captures all bags of $(T,\chi).$ In particular $\Lambda$ captures all adhesions of $(T,\chi).$

Let $G^\star$ be the graph obtained by 
adding an edge between any pair $u,v$ of vertices whenever there is a tree node $t$ with $u,v\in \adh(t).$
We claim that for each tree node $t$, $\margin(t)$ is connected in $G^\star$, or equivalently the marginal graph $G^\star_t$ is connected. Indeed, suppose that $G^\star_t$ has at least two connected components, and let $U$ and $W$ 
 be a pair of components of $G^\star_t$, as vertex subsets of $\margin(t)$, whose distance in $G[\comp(t)]$ is minimum. As $\comp(t)$ is connected by saneness of $(T,\chi)$, there is an $(U,W)$-path $P$ in $G[\comp(t)]$ and we choose a shortest such path $P$.
  By the choice of $P$ and since $(U,W)$ are components of $G^\star_t$ minimizing their distance in $G[\comp(t)]$, no internal vertex of $P$ is in $\margin(t)$.
  Since the internal vertices of $P$ are contained in $\comp(t) \setminus \margin(t)$ each internal vertex of $P$ is contained in $\comp(b)$ for some child $b$ of $t$. 
  By definition, there is no edge between $U$ and $W$ so $P$ must contain internal vertices.
  Hence, there is some child $b$ of $t$ such that $P$ internally traverses $\comp(b)$.
  By definition of tree decomposition, $\adh(b)$ separates $\margin(t)$ from $\comp(b)$.
  Hence, $P$ must contain some vertex $a \in \adh(b)$.
  By the choice of $U$ and $W$, the only possible vertices in $V(P) \cap \adh(b)$ are the endpoints of $P$.
 Moreover, since the ends of $P$ are not adjacent it follows that $a$ is the unique vertex of $\adh(b)$ in $V(P)$.
So, $V(P) \setminus \{a\}$ is contained in $\comp(b)$, a contradiction since the two endpoints are both in $\margin(t)$. 

\medskip

Now, we construct a collection of spanning trees of $G^\star_t$ over all tree nodes $t$ as follows. 
When $t$ is the root node $T$, it contains the vertex $r$ (the arbitrarily chosen vertex in the main statement) is in the margin. We construct a DFS-tree $F_t$ of $G^\star_t$ rooted at $r.$
Next, choose a tree node $t$ with a parent $t'$ such that a DFS-tree $F_{t'}$ is already constructed, 
and let $w_t$ be the vertex in $\adh(t)\setminus \adh(t')$ which comes the last in the discovery order of $F_{t'}.$ 
Since all vertices of $\adh(t)\setminus \adh(t')$ belong to the margin of $t'$, the discovery order of $F_{t'}$ 
defines a total order on $\adh(t)\setminus \adh(t')$. Moreover, $\adh(t)\setminus \adh(t')$ is non-empty 
by the first property of Lemma~\ref{lem:saneness}. Therefore, such $w_t$ is well-defined.
Now, select an arbitrary neighbor $r_t$ of $w_t$ in $G^\star_t$ contained in the margin of $t.$ 
Such $r_t$ exists dues to the second property in Lemma~\ref{lem:saneness}. Starting with $r_t$ as the root, we construct a DFS-tree $F_t$ of $G^\star_t$. 

Observe that the DFS-trees form a spanning forest of $G^\star$ as the margins of the tree nodes partition $V(G^\star)$. 
We combine the DFS-trees over all tree nodes of $T$ by adding the set of edges $\{(w_t,r_t):\text{$t$ is a non-root node of $T$}\}$ and let $F$ be the resulting rooted tree. From the construction, it is clear that $F$ is a spanning tree of $G^\star$ rooted at 
$r\in \margin(\mathsf{root})=\chi(\mathsf{root})$ and especially $F$ is a subgraph of $G^\star.$
The next claim is clear from the construction of $F$ by the induction on the depth of node $t.$
\begin{claim}\label{claim:ancestor}
Let $t$ be a tree node of $F.$ Any vertex in $\adh(t)$ is an ancestor of $\comp(t)$ in $F.$
\end{claim}

Now we want to prove that $F$ is an elimination tree of $G^\star$ of width at most $\width(T,\chi),$ 
which implies that $F$ is an elimination tree of $G.$
We first argue that  $F$ is an elimination tree of $G^\star$. Consider an arbitrary edge $uv$ of $G^\star.$
Let $t_u$ and $t_v$ be the tree nodes such that $u\in \margin(t_u)$ and $v\in \margin(t_v)$. 
If $t_u=t_v$, then $F$ induced by the vertex set $\margin(t_u)$ is a DFS-tree of $G^\star_{t_u}$ ensures that $u$ and $v$ are comparable in $F$, that is, one is an ancestor of the other. 

Suppose that $t_u$ is a strict ancestor of $t_v$ in $T$ and let $t$ be the child of $t_u$ 
on the unique path from $t_u$ to $t_v$ (possibly with $t=t_v$). Because $u$ and $v$ are adjacent, 
it holds that  $u\in \adh(t)\cap \adh(t_v)$. As $v$ is in the component of $t$, Claim~\ref{claim:ancestor} 
implies that $u$ is an ancestor of $v$ in $F$. 
It remains to observe that $t_u$ and $t_v$ must be comparable in $T$ since otherwise 
the tree nodes of $T$ containing $u$ and those containing $v$ are disjoint, violating the condition 
of a tree-decomposition.

To see that the width of $F$ is at most $\width(T,\chi)$, consider an arbitrary vertex $v$ of $F$ 
and let $t$ be the tree node of $T$ such that $v\in \margin(t).$ By the construction of $F,$  
all vertices of $F_v$, the subtree of $F$ rooted at $v$, is fully contained in $\comp(t).$ 
Moreover, any ancestor of $v$ in $F$ 
belongs to the margin of a tree node $t'$ which is an ancestor of $t$ in $T.$ Therefore, if $w$ 
is an ancestor of $v$ in $F$ and has a neighbor in $F_v,$ $w$ must be contained in $\chi(t).$ 
It follows that $\str(v)\subseteq \chi(t)$, and the claim follows.

Lastly, we demonstrate how to construct a desired path system $\P$ that witnesses the elimination tree $F$ 
of congestion $k$ from the $k$-colorable guidance system $\Lambda.$ Consider a vertex $v$ which is not a root of $F.$ 
If $v$ and $\parent(v)$ are adjacent in $G$, there is nothing to do. 
If this is not the case, then $v$ and $\parent(v)$ are simultaneously in an adhesion of a tree node, say $t$. 
This is because $F$ is a subgraph of $G^*$ and two vertices are adjacent in $G^*$ but not in $G$ 
only if they belong to some adhesion. 
As $\Lambda$ captures the collection of all adhesions of $(T,\chi)$, there is a vertex $x$ of $G$ 
such that $\adh(t)\subseteq \roots_\Lambda(x).$ 
We choose two in-trees of $\Lambda$ rooted at $v$ and $\parent(v)$, respectively.
We add to $\P$ an oriented path contained in the oriented walk from $v$ to $\parent(v)$, where the oriented walk 
is obtained by reversing the orientation of edges of an oriented walk from $x$ to $v$  in one in-tree  and then by concatenating it with the oriented walk from $x$ to $\parent(v)$ in the other in-tree.
The path system $\P$ is constructed by repeating this procedure for every non-root vertex of $F.$

We argue that $\P$ has congestion at most $k.$ 
For any vertex $y$ of $G$, there is at most one path in $\P$ 
which has $y$ as the starting vertex, namely the path connecting $y$ to its parent $\parent(y)$ in $F.$ 
Let $\P_y\subseteq \P$ be the collection of paths in $\P$ traversing $y$ as an internal vertex. 
Clearly, any path $P$ of $\P_y$ is contained in 
the union of all in-trees of $\Lambda$ traversing $y$ and not taking $y$ as any of the roots. 
Therefore, such in-trees have pairwise distinct colors in a $k$-coloring of $\Lambda$ 
and thus there are at most $k$ such in-trees.
Let $\lambda_1,\ldots , \lambda_\ell\in \Lambda$ be a minimal set of in-trees 
whose union contains all paths of $\P_y$, and
let $s_i$ be the root of in-tree $\lambda_i$ for all $i\leq \ell\leq k.$
By construction of $\P$, we add a new oriented path to $\P$ 
connecting  one vertex in $\{s_1,\ldots, s_\ell\}$ to another vertex in $\{s_1,\ldots, s_\ell\}$ 
only when the latter vertex is the parent of the former vertex in $F.$ Therefore, at most $\ell-1\leq k-1$ 
paths are created from the in-trees $\lambda_1,\ldots , \lambda_\ell$ and added to $\P.$ 
This completes the proof that the congestion of $\P$ is at most $k.$
\end{proof}

There is a limitation in applying Lemma~\ref{lem:torsofiedelim} to an arbitrary tree-decomposition 
as such a guidance system might not exist for every tree-decomposition. 
For example, consider a cycle $C_n$ of length $n$ and a path-decomposition of $C_n$ in which each bag is a `triangle' so that the resulting triangulation makes a triangulated ladder. One can quickly  convince oneself that any guidance system capturing all adhesions of this tree-decomposition requires $\Omega(n)$ colors. In contrast, a path-decomposition of $C_n$ in which every bag is in the form $\{v, x_i, x_{i+1}\}$ for some vertex $v$ and an edge $x_ix_{i+1}$ of  the path $C_n-v$, on the vertices $x_1,x_2,\cdots ,x_{n-1}$ in order, admits a guidance system of low chromatic number as observed also in~\cite{BojanczykP2016}. Indeed, let $\vec{e_i}$ be the trivial in-tree obtained by orienting the edge $x_ix_{i+1}$ from $x_i$ and $x_{i+1}$ and let $\vec{P}$ be the in-tree (directed path) obtained from the path $C_n-vx_1$ by taking $v$ as the root. Then the guidance system $\Gamma=\{\vec{P}\}\cup \{\vec{e_i}\mid i\in [n-2]\}$ can be colored using three colors and it certainly captures all adhesions. 
This indicates that one may need to locally change the given tree-decomposition in order to construct a guidance system with a low chromatic number.  

The next lemma presents how to combine elimination trees when such a `local surgery' is required.
A stronger form of guidance system capturing the adhesions of a tree-decomposition is useful in this context. 
Given a (rooted) tree-decomposition $(T,\chi)$ and a guidance system $\Lambda$ 
which captures the adhesions of $(T,\chi)$, we say that $\Lambda$ \emph{strongly captures} the adhesions of 
$(T,\chi)$ if for every non-root tree node $t$ of $T$, there is a vertex $s_t$ in $\margin(t)$ 
which 
\begin{enumerate}
\item[(i)] captures $\adh(t)$, i.e., $\adh(t)\subseteq \roots_\Lambda(s_t)$, and 
\item[(ii)] for every vertex $v\in \adh(t)$, there is an $(s_t,v)$-path in an in-tree of $\Lambda$ rooted at $v$ 
whose vertices are fully contained in $\comp(t)$ except for the destination vertex $v.$
\end{enumerate}

 \begin{lemma}\label{lem:combineelim}
 Let $(T,\chi)$ be a sane tree-decomposition of a connected graph $G$ such that the following holds for some $k_1, k_2$ and $\width$:
 \begin{itemize}
 \item for every node $t$ and an arbitrary vertex $r_t\in \margin(t)$, the marginal graph $G^\star_t$ at $t$ 
 admits an elimination tree $F_t$ of width $\width$ rooted $r_t$
 and an oriented path system $\P_t$ over $G^\star_t$ of congestion $k_1$ witnessing $F_t$, and
 \item the family of adhesions of $(T,\chi)$ can be strongly captured by $k_2$-colorable guidance system $\Lambda$ 
 over $G.$
 \end{itemize}
 Then $G$ admits an elimination tree $F$ of width at most $\width+\max_{t\in V(T)} \abs{\adh(t)}$ 
 and a path system $\P$ over $G$ witnessing $F$ with congestion $3k_1\cdot k_2^2.$
 \end{lemma}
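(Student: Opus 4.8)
The goal is to build a single elimination tree $F$ of $G$ by gluing together the per-node elimination trees $F_t$ of the marginal graphs $G^\star_t$, in the same spirit as the construction in \cref{lem:torsofiedelim}, but now accounting for the fact that the marginal graphs themselves need a nontrivial (path-system-witnessed) elimination tree rather than a DFS tree. The overall structure I would follow: (1) fix a root and pass to a sane tree-decomposition; (2) for each tree node $t$, use the hypothesis to choose, for each non-root $t$, the distinguished vertex $r_t\in\margin(t)$ supplied by the strong-capture property applied at $t$ itself (so that $r_t$ is a vertex of $\margin(t)$ adjacent in $G^\star$ to the vertex $w_t\in\adh(t)\setminus\adh(\parent(t))$ that comes last in $F_{\parent(t)}$ — just as in \cref{lem:torsofiedelim}), and take $F_t$ to be the given width-$\width$ elimination tree of $G^\star_t$ rooted at $r_t$; (3) form $F$ by taking the disjoint union of all the $F_t$ and adding an edge from $w_t$ to $r_t$ for each non-root $t$; (4) verify $F$ is an elimination tree of $G$ of the claimed width; (5) build the witnessing path system $\P$ of the claimed congestion by concatenating three kinds of paths.

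\textbf{Steps (3)–(4): the tree and its width.} For the width bound I would prove the analogue of \cref{claim:ancestor}: every vertex of $\adh(t)$ is an ancestor in $F$ of every vertex of $\comp(t)$. This follows by induction on the depth of $t$, using that in $F$ the subtree rooted at any $v\in\margin(t)$ stays inside $\comp(t)$, and that $r_t$ hangs below $w_t\in\adh(t)$. Then for $v\in\margin(t)$, the strict ancestors of $v$ in $F$ that have a neighbor in $F_v\subseteq\comp(t)$ are: its ancestors inside $F_t$ (there are at most $\width$ such contributing to $\str$, by the width of $F_t$ — here one must check that an edge of $G^\star_t$, i.e. an adhesion-clique edge of a child of $t$, that is "seen" by $v$ corresponds to an actual $G^\star_t$-edge, which it does by definition of the marginal graph), plus possibly vertices of $\adh(t)$ (of which there are at most $\max_t|\adh(t)|$). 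That gives $\abs{\str_F(v)}\le \width+1+\max_t|\adh(t)|$, hence width $\le\width+\max_t|\adh(t)|$. That an edge $uv$ of $G$ with $u,v$ in different margins is an ancestor–descendant pair in $F$ is argued exactly as in \cref{lem:torsofiedelim} (the tree nodes containing $u$ and $v$ must be $T$-comparable, and then Claim~\ref{claim:ancestor} applies).

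\textbf{Step (5): the path system and the congestion bookkeeping — the main obstacle.} For each non-root vertex $v$ of $F$ with parent $p=\parent_F(v)$ such that $vp\notin E(G)$, I need an oriented $v\to p$ path in $G$. There are two cases. \emph{Case A: $v,p$ lie in the same margin $\margin(t)$.} Then $vp$ is an edge of $G^\star_t$; if it is a genuine $G$-edge we are done, otherwise it comes from some child $b$ of $t$ with $v,p\in\adh(b)$, and since $\Lambda$ (restricted to vertices of $\comp(b)$, or rather via property (ii) of strong capture at $b$) captures $\adh(b)$, we route $v\to p$ through the capture vertex $s_b$ using in-trees of $\Lambda$ — plus we must replace, inside $G^\star_t$, the $v p$ edge of $\P_t$'s witness, i.e. use $\P_t$ to route any $G^\star_t$-edges that are not $G$-edges. \emph{Case B: $v=r_t$ and $p=w_t\in\adh(t)$ for a non-root $t$.} Then $v,p\in$ (the $G^\star$-closure of) $\adh(t)$-related vertices; more precisely $r_t$ was chosen adjacent to $w_t$ in $G^\star_t$, so again either this is a $G$-edge or it is an adhesion-clique edge of a child, handled as in Case A. So altogether every simulated $F$-edge is built by composing: at most one path inside some $\P_t$ (the marginal path systems), and at most two half-paths inside in-trees of $\Lambda$ (for the $G^\star$- but not $G$-edges). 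The congestion count is then the delicate part: a vertex $y$ of $G$ lies in a unique $\margin(t_0)$ and can be an internal vertex only of (i) paths from $\P_{t_0}$, contributing $\le k_1$, and (ii) $\Lambda$-routed subpaths. For (ii), the $\Lambda$-routed subpaths through $y$ live in in-trees of $\Lambda$ that use $y$ as a non-root vertex; by $k_2$-colorability at most $k_2$ such in-trees exist, and — mimicking the final paragraph of \cref{lem:torsofiedelim} — each contributes a bounded number of the created paths; but now each "created path" of type Case A itself may pass through $y$ via a $\P_t$-segment \emph{and} then get composed, so the multiplicative factors stack. Tracking this carefully is where the $3k_1 k_2^2$ comes from: roughly, $k_2$ in-trees of $\Lambda$ can meet $y$; for each we get $O(k_2)$ composed $\Lambda$-paths (the $\ell-1\le k_2-1$ argument of \cref{lem:torsofiedelim}); and each such composed path, when it additionally re-routes a $G^\star_t$-edge through $y$'s marginal path system, picks up a factor $k_1$; finally the bare $\P_{t_0}$-paths add another $k_1$, and summing $k_1 + k_1\cdot O(k_2^2)$ and being generous with constants gives $3k_1k_2^2$. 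I expect the real work — and the part most likely to hide an off-by-constant — to be exactly this congestion accounting: keeping straight, for a fixed vertex $y$, which of the three path-types can route through it, and ensuring the count of each type is correctly bounded by $k_1$, by the $k_2$-coloring of $\Lambda$, and by the "at most $\ell-1$ new edges among $\ell$ roots" trick, without double-counting across the compositions.
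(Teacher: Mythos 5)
Your overall plan matches the paper's proof closely: fix a root, glue the marginal elimination trees $F_t$ together by adding edges $(w_t,r_t)$ with the same top-down choice of $w_t$ and $r_t$, prove an analogue of Claim~\ref{claim:ancestor} to obtain the width bound $\width+\max_t\abs{\adh(t)}$, and build the path system by distinguishing the $(w_t,r_t)$-type (ex-type) edges from the edges internal to some $F_t$ (in-type), routing the latter through $\P_t$ and then substituting each residual $G^\star_t$-but-not-$G$ edge by a detour through the capture vertex $s_{t'}$ of the relevant child adhesion. All of that is the right construction, and you correctly identify the congestion accounting as the crux.

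The genuine gap is in that accounting. For a vertex $y\in\margin(t)$, you must count the in-type replacement paths coming from strict ancestors $t'$ of $t$. The paper does this in two steps: (a) at most $k_2/2$ ancestors $t_1,\ldots,t_\ell$ of $t$ contribute replacement paths through $y$, because the in-trees of $\Lambda$ involved are rooted in distinct margins $\margin(t_i)$ and all meet at $y$, so they get pairwise distinct colours; (b) for a fixed ancestor $t'$, let $t''$ be the child of $t'$ on the $(t',t)$-path; since strong capture confines each substituted $\Lambda$-detour to a single child component, any replacement of a $\P_{t'}$-path that reaches $y\in\comp(t'')$ must enter through $\adh(t'')$, so the original $\P_{t'}$-path must traverse some vertex of $\adh(t'')$ as a start or internal vertex, and since $\abs{\adh(t'')}\le k_2$ and $\P_{t'}$ has congestion $k_1$, at most $k_1k_2$ such $\P_{t'}$-paths exist. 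Your sketch lands at the right order $k_1k_2^2$, but the stated justification for the factor $k_1$ (``each such composed path, when it additionally re-routes a $G^\star_t$-edge through $y$'s marginal path system, picks up a factor $k_1$'') is not a valid step: the replacement detours are pieces of $\P_{t'}$-paths for ancestor nodes $t'$, whose congestion bound applies to vertices of $\margin(t')$, not to $y$, and they do not pass through $\P_t$. The missing ingredient is the pigeonhole through $\adh(t'')$ forced by the strong-capture locality, without which the per-ancestor $k_1k_2$ bound cannot be derived and the claimed congestion bound does not follow.
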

 \begin{proof}
Let $G^\star$ be the graph obtained from $G$ by 
adding an edge between any pair $u,v$ of vertices whenever there is a tree node $t$ with $u,v\in \adh(t).$
We note that each $G_t^\star$ is an induced subgraph of $G^\star.$

First, we construct an elimination tree for each marginal graph and a path system witnessing the elimination tree 
in a top-to-down manner. 
Due to the first precondition in the main statement, it suffices to choose the root of each elimination tree.
When $t$ is the root node of $T,$ we choose an arbitrary vertex of $\margin(t)=\chi(t)$ as the root of $F_t^\star.$ 

Consider a node $t$ such that for its parent $\parent(t)$ a pair $(F_{\parent(t)},\P_{\parent(t)})$ of 
an elimination tree $F_{\parent(t)}$ of $G_t^\star$ and a path system over $G_t^\star$ has been constructed. 
Recall that $\adh(t)\setminus \adh(\parent(t))\neq \emptyset$ by Lemma~\ref{lem:saneness}.
Moreover, it is easy to see that $\adh(t)\setminus \adh(\parent(t))\subseteq \margin(\parent(t))$ holds, 
which implies that $\adh(t)\setminus \adh(\parent(t))$ forms a clique in the marginal graph $G_{\parent(t)}^\star$. 
Therefore, the vertices of $\adh(t)\setminus \adh(\parent(t))$ are totally ordered by the elimination tree $F_{\parent(t)}$ 
at the parent $\parent(t)$ of $t.$ Let $w_t$ be the vertex which comes the last in the total order, or equivalently, 
the lowermost one in $F_{\parent(t)}$ amongst the vertices of $\adh(t)\setminus \adh(\parent(t))$. 
Let $r_t$ be a neighbor of $w_t$ in the graph $G^\star_t$ that belongs to $\margin(t)$. By Lemma~\ref{lem:saneness},
such $r_t$ exists and this vertex will be the root of an elimination tree $F_t.$

Now the elimination tree $F$ of $G^\star$ is obtained from the (disjoint) union of $F_t$ over all tree nodes $t$ of $T$ 
by adding an edge between $w_t$ and $r_t$ for every non-root node $t.$ It is routine to verify that $F$ is indeed an elimination tree of $G^\star$, and thus an elimination tree of $G.$ To verify that the width of $F$ is within the claimed bound, consider an arbitrary vertex $v$ of $G$ and let $t$ be the tree node whose margin contains $v.$ Any ancestor $v'$ 
of $v$ in $F$ 
which is not taken into account in the width of $F_t$ and has a neighbor in the subtree $F_v$ of $F$ rooted at $v$,
is contained in the margin of a tree node $t'$ which is a strict ancestor of $t$ in $T.$ As all vertices of $F_v$ are 
entirely contained in $\comp(t)$, $v'$ must be in $\adh(t)$ were it to be adjacent with any vertex of $\comp(t)$ and 
thus $\str_F(v)\setminus \str_{F_t}(v)\subseteq \adh(t)$. By the precondition  
$\abs{\str_{F_t}(v)}\leq \width$, it follows that the width of $F$ is at most $\width+\max_{t\in V(T)} \abs{\adh(t)}$.

Second, let us construct a path system  over $G$ which witnesses $F$ using $\P_t$ and $\Lambda.$
For each node $t$ of $T$, we denote by $A_t$ the set of edges which are present in the marginal graph 
$G^\star_t$ and not in $G[\margin(t)].$ We distinguish the tree edges of $F$ depending on whether it is an edge of $F_t$ 
for some tree node $t$, or it is of the type $(w_t,r_t)$ for some tree node $t$ which connects the root of $F_t$ 
with some vertex of the elimination tree $F_{\parent(t)}$ of the marginal graph $G_{\parent(t)}^\star$, where $\parent(t)$ is the parent of $t$ in $T.$ We call the former an in-type and the latter ex-type. 

\medskip

\noindent {\bf ex-type edge of $F$.} For an edge $(w_t,r_t)$ of $F$ which is not present in $G$, recall that $(w_t,r_t)$ is present as an edge of $G^\star$ as $F$ is a spanning tree of $G^\star$. In particular, there is a child $t'$ of $t$ such that $w_t,r_t\in \adh(t')$ by construction of $G^\star.$ As $\Lambda$ captures all adhesions of $(T,\chi)$, one can find an oriented path from $r_t$ to $w_t$ in the union of two in-trees of $\Lambda$, rooted at $r_t$ and $w_t$ respectively, having a common vertex by reorienting some edges if necessary. The existence of such in-trees is ensured by the fact that $\Lambda$ captures all adhesions of $(T,\chi).$

\medskip

\noindent {\bf in-type edge of $F$.} Consider an edge $(x,y)$ of $F$, with $y=\parent_F(x)$, which is not of the type $(w_t,r_t)$, and let $t$ be the tree node of $T$ such that 
$(x,y)$ is an edge of $F_t.$ As $\P_t$ witnesses $F_t$, there is an $(x,y)$-path $P_{xy}$ in $\P_t.$ 
As $\P_t$ is a path system over $G_t^\star$, some edges of $P_{xy}$ may not be present in $G.$ 
Now, whenever an oriented edge $(u,v)$ appears in $P_{xy}$ with $uv\in A_t$, 
we know that there is a child $t'$ of $t$ in $T$ with $u,v\in \adh(t').$ We replace $(u,v)$ by a suitable oriented path 
contained in the union of two in-trees of $\Lambda$ rooted at $u$ and $v$ respectively. Especially, we replace $(u,v)$ 
by the concatenation of two paths, namely $(s_{t'},u)$-path and $(s_{t'},v)$-path which are fully contained in 
$\comp(t')$ except for $u$ and $v$. Such paths exist due to the assumption that $\Lambda$ strongly captures the adhesions of $(T,\chi)$. 
After replacing all occurrences 
of edges of $G_t^\star$ not present in $G$ with a suitable oriented path obtained from $\Lambda$, we end up 
in an oriented walk from $x$ to $y$ in $G.$ Within this walk, one can find an oriented $(x,y)$-path. 

\bigskip

Let $\P_{ex}$ be the  path system obtained for the ex-type edges $(r_t,w_t)$ over all non-root tree nodes $t$ of $T$, 
and let  $\P_{in}$ be the path system obtained for the in-type edges. 
The output path system $\P$ is the union of $\P_{ex}$ and $\P_{in}.$ It is clear from the construction that 
$\P$ is a path system over $G$ which witnesses $F.$ It can be easily verified using the same argument as in Lemma~\ref{lem:torsofiedelim} that the congestion of $\P_{ex}$ is 
at most $k_2$. 
 
\bigskip

Therefore, it remains to bound the congestion of $\P_{in}.$ Consider an arbitrary vertex $v$ of $G$ and let 
$t$ be the tree node such that $v\in \margin(t)$. We shall examine for which in-type edges $(x,y)$ of $F$ 
the associated replacement $(x,y)$-path added to $\P_{in}$ traverses $v.$ 
Consider an $(x,y)$-path $P_x\in \P_{in}$, where $y$ is the parent of $x$ in $F$, which traverses $v$ as an internal or start vertex and let $t'$ be the tree node such that $(x,y)$ is an edge of the elimination tree $F_{t'}.$ 

It is not difficult to see that $t'$ is an ancestor of $t$, possibly with $t'=t.$ Indeed, we construct $P_x$ 
from an $(x,y)$-path $P^\star_x$ of $G_{t'}^\star$, then replace some edge of $P^\star_x$ from $A_{t'}$ 
by substituting it with a path from $\Lambda.$ As $\Lambda$ strongly captures $\adh(t')$, the path used from 
substituting an edge of $A_{t'}$ is fully contained in the component of a child $t''$ of $t'$, save the two endpoints 
(which are in $\adh(t'')$). 
Therefore, every vertex of $P_x$ is in $\comp(t').$ In particular, if $P_x$ traverses $v$ as a start or internal vertex,
$v$ must belong to $\comp(t')$ and $t$ is a descendant of $t'.$

There are two possibilities. If $t'=t$, then $P^\star_x$ (that is, before the replacement of edges of $A_t$) traverses $v$ 
as an internal or start vertex already. Therefore, there are at most $k_1$ such paths in $\P_{in}$ 
as the congestion of $\P_t$ is upper bounded by $k_1$. 

Now, consider the case when $t'$ is a strict ancestor of $t.$ Let $t_1,\ldots , t_\ell$ be the set of strict ancestors of $t$ 
such that for each $i\in[\ell]$, there is an in-type edge of $F_t$ whose replacement path traverses $v.$ We want to argue that  $\ell\leq k_2/2$ holds. For this, each $i\in [\ell],$ 
choose a path $R_i$ traversing $v$ which is obtained from two in-trees of $\Lambda$ 
whose roots form an edge $(a_i,b_i)$
of $A_{t_i}.$ As $a_i,b_i$ is contained in $\margin(t_i)$, all the in-trees involved in $\{R_i\mid i\in[\ell]\}$ have distinct roots. 
This, together with the fact that all $R_i$'s share a common vertex $v$, implies that all these in-trees have distinct colors, and thus $\ell\leq k_2/2.$ 

Now choose a strict ancestor $t'$ of $t$ for which the  largest number of replacement paths $P_x$ traversing $v$ 
has been added to $\P_{in}.$ Let $t''$ be the child of $t'$ on the unique $(t',t)$-path of $T$ and note that 
$\abs{\adh(t'')}\leq \abs{\roots_\Lambda(s_{t''})}\leq k_2$, where $s_{t''}$ is a vertex in $\comp(t'')$ such that 
$\adh(t'')\subseteq \roots_\Lambda(s_{t''})$. Therefore, there are at most $k_1\cdot k_2$ paths in the 
path system $\P_{t'}$ over $G^\star_{t'}$ which traverses some vertex of $\adh(t'').$ Clearly, 
if any replacement path $P_x$ for some in-type edge of $F_{t'}$ traverses $v$, such $P_x$ must traverse 
a vertex of $\adh(t'')$ as an internal vertex or as a starting vertex. Hence, we conclude that $t'$ contains at most $k_1\cdot k_2$ 
replacement path traversing $v.$

To sum up, the congestion of $\P=\P_{ex}\cup \P_{in}$ is bounded by $k_2+k_1 + \frac{k_1\cdot k_2^2}{2}\leq 3k_1\cdot k_2^2.$
\end{proof}

\bigskip

Finally, a tree-decomposition as in~\cref{lem:combineelim} is presented in the work of Bojańczyk and 
Pilipczuk~\cite{BojanczykP2016}, summarized in the next two statements\footnote{~\cref{lem:globalcapture} 
was originally in Lemma 5.1 of \cite{BojanczykP2016} without the phrase `strong'. The presented version
can be easily obtained using Lemma 5.9 of \cite{BojanczykP2016} though.}.

\begin{lemma}\cite[Lemma 5.1+Lemma 5.9]{BojanczykP2016}\label{lem:globalcapture}
Let $G$ be a graph of treewidth at most $\width.$ 
Then $G$ admits a sane tree-decomposition $(T,\chi)$ such that the following holds.
\begin{enumerate}
  \item every marginal graph $G^\star_t$ has path-width at most $2\width+1$, and
\item the family of adhesions of $(T,\chi)$ can be strongly captured by a guidance system colorable with $4\width^3+2\width.$
\end{enumerate}
\end{lemma}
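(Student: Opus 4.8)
The plan is to derive this lemma directly from the machinery developed by Bojańczyk and Pilipczuk in \cite{BojanczykP2016}. Concretely, I would invoke their Lemma~5.1, which already provides, for any graph $G$ of treewidth at most $\width$, a sane tree-decomposition $(T,\chi)$ whose every marginal (torso) graph $G_t^\star$ has pathwidth at most $2\width+1$, together with a guidance system $\Lambda$ that captures all adhesions of $(T,\chi)$ and is colorable with $4\width^3+2\width$ colors. The first bullet of our statement is then immediate. The only remaining work is to upgrade the conclusion ``$\Lambda$ captures the adhesions'' to ``$\Lambda$ \emph{strongly} captures the adhesions,'' i.e.\ to verify conditions (i) and (ii) in the definition of strong capture: for every non-root node $t$ there is a vertex $s_t\in\margin(t)$ with $\adh(t)\subseteq\roots_\Lambda(s_t)$, and for each $v\in\adh(t)$ an $(s_t,v)$-path inside an in-tree of $\Lambda$ rooted at $v$ whose internal vertices all lie in $\comp(t)$.

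For this upgrade I would appeal to Lemma~5.9 of \cite{BojanczykP2016}, which (as the footnote indicates) supplies exactly the locality property needed: the in-trees of the guidance system produced there can be taken so that the portion of each in-tree relevant to capturing $\adh(t)$ is confined to $\comp(t)$. The key structural fact to use is that $\adh(t)\subseteq\chi(t)$ and that, by saneness, $\comp(t)$ is connected and every vertex of $\adh(t)$ has a neighbor in $\comp(t)$; combined with the way the guidance system in \cite{BojanczykP2016} is built by recursing on the subtree $T_t$, one sees that the capturing vertex $s_t$ can be chosen inside $\margin(t)$ and that the in-trees realizing the captures of the individual vertices of $\adh(t)$ stay within $\comp(t)$ until they reach their roots. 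Thus conditions (i) and (ii) hold with the same color bound, since we have not modified $\Lambda$ — only observed an additional property of the construction.

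I expect the main (and essentially the only) obstacle to be bookkeeping: carefully matching the notation and the precise guarantees of Lemmas~5.1 and~5.9 of \cite{BojanczykP2016} to the definitions of \emph{sane} tree-decomposition, \emph{marginal graph}, and \emph{strong capture} as set up in this paper, and checking that no deterioration of the pathwidth bound $2\width+1$ or the chromatic bound $4\width^3+2\width$ is incurred when passing from ``capture'' to ``strong capture.'' Since this is a faithful restatement of results already proved in \cite{BojanczykP2016}, no genuinely new combinatorial argument is required; the contribution here is only to package those results in the form consumed by \cref{lem:combineelim}.
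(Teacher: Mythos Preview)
Your proposal is correct and matches the paper's approach exactly: the lemma is not proved in the paper but is imported directly from \cite{BojanczykP2016}, with the footnote explicitly noting that Lemma~5.1 there gives everything except the word ``strongly,'' and that this upgrade follows from Lemma~5.9. Your sketch of how the strong-capture conditions (i) and (ii) come out of the locality in the construction of~\cite{BojanczykP2016} is precisely the bookkeeping the footnote alludes to, and no new argument is needed.
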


\begin{lemma}\cite[Lemma 4.4]{BojanczykP2016}\label{lem:localcapture}
Let $G$ be a graph of path-width at most $\width.$ 
Then there is a tree-decomposition $(T,\chi)$ of $G$ and a guidance system $\Lambda$ of $G$ 
such that
\begin{itemize}
\item $\Lambda$ captures $\{\chi(t):\text{$t$ is a node of $T$}\}$, and 
\item $\Lambda$ is $f(\width)$-colorable for some function $f\in 2^{2^{O(\omega^2)}}$.
\end{itemize}
\end{lemma}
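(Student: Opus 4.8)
The plan is to prove this directly for path-decompositions, in the spirit of the cycle example discussed above: the guidance system will consist of a few ``long-range'' in-trees running along the decomposition together with many ``short-range'' ones, and the point will be to reuse colours aggressively between parts of $G$ that are far apart in the decomposition.

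\textbf{Normalisation.} First I would fix a path-decomposition $(B_1,\dots,B_M)$ of $G$ of width at most $\omega$ and make it \emph{nice}, so that consecutive bags differ by a single introduced or forgotten vertex; this changes neither the width nor the underlying vertex set and is routine. Set $s_i:=B_i\cap B_{i+1}$, so $\abs{s_i}\le\omega$ for every $i$. The tree-decomposition $(T,\chi)$ we output will just be (a mild modification of) this path-decomposition --- recall that we are free to pick \emph{any} tree-decomposition, which is essential here --- so every bag we must capture has at most $\omega+1$ vertices.

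\textbf{Core construction: recursive binary split with boundary control.} I would process the decomposition recursively. Pick the middle bag $D:=B_{\lceil M/2\rceil}$; since $D$ separates $G$ into the left part $G_L=G[B_1\cup\dots\cup B_{\lceil M/2\rceil}]$ and the right part $G_R=G[B_{\lceil M/2\rceil}\cup\dots\cup B_M]$ (with $V(G_L)\cap V(G_R)=D$), the sub-path-decompositions of $G_L$ and $G_R$ again have width at most $\omega$, so I recurse to obtain guidance systems $\Lambda_L,\Lambda_R$ capturing all their bags, and then add a bounded (in $\omega$) number of new in-trees that capture the bags lying near $D$ and make $D$ reachable from a common anchor. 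The key device for keeping the colour count under control is to \emph{strengthen the inductive hypothesis}: I require that the vertices on the ``boundary'' of the current piece --- the middle bag $D$, and any separators inherited from higher recursion levels --- appear in the guidance system \emph{only as roots} of in-trees. Under this invariant, any vertex shared by an in-tree of $\Lambda_L$ and an in-tree of $\Lambda_R$ lies in $V(G_L)\cap V(G_R)=D$ and is therefore a root on both sides, so $\Lambda_L$ and $\Lambda_R$ may be given the \emph{same} palette of colours without conflict. Thus each recursive ``scale'' contributes only a bounded-in-$\omega$ number of genuinely new in-trees.

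\textbf{The hard part.} The main obstacle --- and the source of the $2^{2^{O(\omega^2)}}$ bound --- is bounding the number of colours by a function of $\omega$ \emph{alone}, not by something growing with $M$ (hence with $n$): the naive recursion uses a fresh palette per level and gives only $O(\log n)$ colours. To remove the dependence on $M$, I would argue that the in-trees created at a given scale live in a short window around the corresponding split point, so that in-trees attached to well-separated windows can reuse colours, leaving only interactions among a bounded number of overlapping scales near any fixed bag. Making this precise requires controlling, for each scale, exactly which bags are charged to it and how long the associated in-trees are; here one classifies the local pieces of the path-decomposition by a bounded-index \emph{type}, governed by the at most $\binom{\omega+1}{2}$ potential edges inside a bag and the interface with its neighbours (so structurally identical pieces are handled by identical gadgets), and carrying out the colour-reuse accounting over the $2^{O(\omega^2)}$ types is what produces a bound of the shape $2^{2^{O(\omega^2)}}$. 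Once the in-tree collection and its colouring are in place, verifying that $\Lambda$ captures every bag of the output tree-decomposition is a straightforward induction on the recursion. I expect this colour-accounting step to be by far the most delicate part; everything else (niceness, the separator property, the boundary-only-as-roots invariant) is bookkeeping.
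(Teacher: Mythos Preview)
The paper does not prove this lemma at all: it is quoted verbatim as \cite[Lemma~4.4]{BojanczykP2016} and used as a black box. So there is no ``paper's own proof'' to compare your attempt against; the relevant benchmark is the argument in Boja\'nczyk--Pilipczuk.

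That said, your sketch is in the right neighbourhood but has a real gap precisely where you flag the ``hard part''. The Boja\'nczyk--Pilipczuk proof does \emph{not} do a balanced binary split of the path-decomposition. Instead it views the path-decomposition as a word over a finite monoid of \emph{interface types} (of size $2^{O(\omega^2)}$, essentially the edge-pattern information you allude to) and applies \emph{Simon's Factorisation Forest Theorem}: every word over a finite monoid $M$ admits a factorisation tree of depth bounded by a function of $\abs{M}$, where internal nodes are either binary or have all children of the same \emph{idempotent} type. The guidance system is then built by induction on this bounded-depth tree; binary nodes are easy, and the crux is the idempotent case, where arbitrarily many siblings of identical type let one reuse the same bounded palette across all of them. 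The double exponential $2^{2^{O(\omega^2)}}$ comes from the depth bound in Simon's theorem being linear in $\abs{M}=2^{O(\omega^2)}$.

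Your plan, by contrast, starts from a balanced binary recursion (depth $\Theta(\log n)$) and then hopes to recover a bound independent of $n$ by an unspecified ``colour-reuse accounting over the $2^{O(\omega^2)}$ types''. That accounting \emph{is} the whole difficulty, and making it go through is tantamount to reproving (a version of) Simon's theorem: you need a structural reason why, among the $\Theta(\log n)$ scales touching a given bag, only boundedly many are ``genuinely different'', and that reason is exactly the idempotent factorisation. Your boundary-only-as-roots invariant is a good idea and does appear in the actual proof, but it only lets siblings share a palette, not ancestors; without the bounded-depth factorisation you still accumulate $\Theta(\log n)$ palettes along a root-to-leaf path. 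So as written the proposal does not close, and the missing ingredient is the Factorisation Forest Theorem (or an ad hoc substitute for it).
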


Observe that $f(\width)$-colorability of the guidence system $\Lambda$ given by \cref{lem:localcapture} is an upper-bound on the width of $(T,\chi)$.
Indeed, each bag of $(T,\chi)$ is captured by some vertex $u$, and therefore, the vertices in a bag are the roots of  in-trees in $\Lambda$ all of which have pairwise distinct colors. 
Now we are ready to complete the proof of~\cref{thm:tw2lvtw}.

\bigskip

\begin{proofof}{~\cref{thm:tw2lvtw}}
We may assume that $G$ is connected. For a graph $G$ of treewidth at most $\width,$ 
let $(T,\chi)$ be a tree-decomposition of width satisfying the properties described in Lemma~\ref{lem:globalcapture}, 
and let $\Lambda$ be a $(4\width^3+2\width)$-colorable guidance system over $G$ which strongly captures all adhesions of $(T,\chi)$. 
By~\cref{lem:globalcapture}, each marginal graph $G^\star_t$ has pathwidth is at most $2\width+1$.
Hence, one can apply Lemma~\ref{lem:localcapture} to $G^\star_t$ to each tree node $t$, which yields a tree-decomposition of $G^\star_t$ whose bags can be captured  by $f(\width)$-colorable guidance system over $G^\star_t.$
Notice that this decomposition has width at most $f(\width)$. 
By Lemma~\ref{lem:torsofiedelim}, for any vertex $r_t$ of $\margin(t)$, there is an elimination tree $F_t$ of $G^\star_t$ rooted at $r_t$ 
of width $f(\width)$ as well as a path system $\P_t$ over $G^\star_t$ which witnesses $F$ with  congestion $f(\width)$.
Now applying Lemma~\ref{lem:combineelim} leads to an elimination tree $F$ of $G$ of width at most $f(\width)+\max_{t\in V(T)} \abs{\adh(t)}$ 
and a path system $\P$ over $G$ which represents $F$ with congestion $3f(\width)\cdot (4\width^3+2\width)^2.$

To bound $\max_{t\in V(T)} \abs{\adh(t)}$, we claim that $\abs{\adh(t)}$ cannot exceed the chromatic number of $\Lambda$.
Indeed, for each bag $\chi(t)$ of a tree node $t$, there is a vertex $u$ such that
$\adh(t)\subseteq \roots_\Lambda(u)$. Notice that we may assume there is no pair of in-tree in $\Lambda$ 
which share a root and are vertex-disjoint save the root because 
if such a pair, they can be replaced by a single in-tree without increasing the chromatic number of $\Lambda.$ 
Therefore, all the in-trees containing $u$ pairwise intersect and thus have different colors in 
a coloring of $\Lambda$. As $\Lambda$ is $(4\width^3+2\width)$-colorable, we conclude that 
the width of $F$ is at most $f(\width)+\max_{t\in V(T)} \abs{\adh(t)}\leq f(\width)+ (4\width^3+2\width)$.
This completes the proof.
\end{proofof}

\section{\mso \ model-checking}\label{sec:mso}
\newcommand{\Pp}{\mathcal P}

To construct the model-checking certification in \cref{thm:msoPLS}, we build on the certification presented in \cref{prop:locallyPLS}.

We show how to reduce \cref{thm:msoPLS} into an analogous statement for \msoo\ model-checking. %
This follows a standard approach of converting \mso\ to \msoo\ by extending the universe to contain vertices as well as edges, subject to the appropriate labeling that matches the origin of elements in the universe.
Then it is relatively straightforward to convert a given \mso-formula to an equivalent  \msoo-formula.
The typical final nail to this argument is that the subdivision of a graph does not change its treewidth.
In general, such a conversion works for sparse graph classes; see \cite{CourcelleE2012} for more information on the topic.
However, we cannot use this as a final argument right away in the case of local certification.
Still, we will mimic the suggested scheme as we distribute supposed certificates for edges to incident vertices based on the bounded degeneracy of the graph.
We formalize this argument as follows.

\begin{observation}\label{obs:MSO2toMSO1}
  Let $d$ be a fixed integer and $G$ be a $d$-degenerate graph.
  Then we can translate an instance of local certification on $G$, where both the certification and the verification take place also on edges of $G$ into an instance of local certification where certificates of the edges of $G$ are distributed to vertices of $G$ in such a way that each vertex receives additionally at most $d$ certificate while the verification occurs only at vertices of $G$. 
\end{observation}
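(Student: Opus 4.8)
The plan is to exploit the standard fact that a $d$-degenerate graph admits an acyclic orientation in which every vertex has out-degree at most $d$, and to use this orientation to assign each edge to one of its endpoints. First I would recall that $d$-degeneracy means every subgraph has a vertex of degree at most $d$; peeling off such vertices one at a time yields a linear \emph{degeneracy ordering} $v_1, v_2, \ldots, v_n$ of $V(G)$ such that each $v_i$ has at most $d$ neighbors among $\{v_{i+1}, \ldots, v_n\}$ (equivalently, orient each edge towards its later endpoint in the ordering; each vertex then has out-degree at most $d$). The key point for local certification is that such an ordering can itself be certified compactly and verified locally: the prover writes on each vertex $v$ a number $\pi(v) \in \{1, \ldots, n\}$ (its position in the ordering), and the verifier at $v$ checks that $\pi$ is injective among $N[v]$ (consistency of broadcast positions) and that $v$ has at most $d$ neighbors $u$ with $\pi(u) > \pi(v)$. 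Since injectivity of a global labeling into $\{1,\ldots,n\}$ cannot be enforced purely locally, I would instead have the prover reuse the vertex identifiers: because $\id$ is already a bijection $V(G) \to \{1,\ldots,n\}$, the relation ``$\id(u) > \id(v)$'' gives a canonical acyclic orientation, and for that orientation the verifier at $v$ only needs to check that the number of neighbors $u$ with $\id(u) > \id(v)$ is at most $d$; if this holds everywhere then the orientation is acyclic with out-degree $\le d$, hence $G$ is (a witness of being) $d$-degenerate in the relevant sense. This needs no extra certificate bits at all beyond what is already present.

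Given this orientation, the reduction itself is a bookkeeping step. An edge $e = uv$ with $\id(u) < \id(v)$ is \emph{owned} by $u$ (the tail); each vertex owns at most $d$ edges. The prover, starting from any certification $(\varphi_V, \varphi_E)$ in which both vertices and edges carry certificates and the verification clause is evaluated at vertices \emph{and} at edges, produces a new vertex-only certification: vertex $v$ receives $\varphi_V(v)$ together with $\varphi_E(e)$ for each of the (at most $d$) edges $e$ it owns, each tagged by the identifier of the other endpoint so that incident edges can be told apart. When $e=uv$ is owned by $u$, the certificate $\varphi_E(e)$ is visible to $u$ directly and to $v$ through one round of communication with its neighbor $u$; so every vertex still sees, within its closed neighborhood, the certificate of every incident edge. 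The new verifier at $v$ first checks the degeneracy condition above, then, using the gathered edge-certificates of all incident edges and the vertex-certificates of $N[v]$, it simulates the original vertex-check at $v$ and, for each owned edge $e$, simulates the original edge-check at $e$ (all the data the edge-check at $e=uv$ needed — namely the certificates of $e$, of $u$, of $v$, and whatever local neighborhood information around $e$ — is available in $N[v]$ since $v$ owns $e$ and is adjacent to $u$). It outputs \yes{} iff all these simulated checks pass.

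For correctness: completeness is immediate, since from a valid $(\varphi_V,\varphi_E)$ the prover builds a valid vertex-labeling and, on a graph of the class (which is $d$-degenerate), the degeneracy check passes everywhere and every simulated check passes. For soundness, suppose $G$ is not in the class. If $G$ is not $d$-degenerate, then for \emph{every} acyclic orientation some vertex has out-degree $> d$; in particular under the $\id$-induced orientation some $v$ has more than $d$ neighbors of larger identifier and rejects. If $G$ is $d$-degenerate, then an accepting vertex-labeling would, by reading off the owned edge-certificates, reconstruct a certification $(\varphi_V,\varphi_E)$ passing every original vertex-check and every original edge-check — contradicting soundness of the original scheme. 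The certificate-size overhead is an additive $O(d \cdot (\text{edge-certificate size} + \log n))$, which is $O(\log n)$ when $d$ is a constant and edge-certificates are of size $O(\log n)$.

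The main obstacle I anticipate is the subtlety already flagged in the text: one cannot simply assume a globally consistent, locally certifiable degeneracy ordering exists as a separate gadget — attempting to certify ``this labeling is a bijection onto $\{1,\ldots,n\}$'' runs into the same $\Omega(\log n)$-or-worse phenomena that plague acyclicity certification. The resolution is precisely to piggyback on the identifiers, which are \emph{given} to be a bijection onto $\{1,\ldots,n\}$, turning the ordering into free data; the only nontrivial local check is the bounded out-degree condition, and the only nontrivial soundness observation is that bounded out-degree of the $\id$-orientation is itself a legitimate $d$-degeneracy witness. A secondary point requiring care is exactly which data the edge-verifier at $e = uv$ consults in the original ``certification on $G$ where verification takes place also on edges'' model — one must fix that model so that an edge's local view is contained in $N[u] \cup N[v]$ (e.g., $e$ sees the certificates and identifiers of $u$, $v$, and of edges/vertices incident to $u$ or $v$), since then it is contained in $N[v]$ for the owner $v$ and the simulation goes through; if the edge-view were larger the reduction would not be local.
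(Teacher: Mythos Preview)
Your reduction idea is the same as the paper's --- distribute each edge's certificate to one endpoint via a degeneracy ordering --- but your concrete implementation has a completeness bug. You propose to orient each edge toward the endpoint of larger $\id$ and check locally that every vertex has at most $d$ out-neighbours. The problem is that $\id$ is an \emph{arbitrary} bijection fixed in advance (see the paper's definition of a proof labeling scheme: completeness must hold for \emph{every} $(G,\id)$), not something the prover may choose. A $d$-degenerate graph need not have bounded out-degree in the $\id$-induced orientation: take the star $K_{1,n}$ (which is $1$-degenerate) with the center assigned identifier $1$; then the center has $n$ neighbours of larger identifier, so your degeneracy check rejects a yes-instance. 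Thus your scheme is not complete.

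The fix is exactly what you talked yourself out of: have the prover write a degeneracy ordering $\pi(v)$ in the certificate and orient toward the larger $(\pi,\id)$ in lexicographic order. Your worry that ``injectivity cannot be enforced locally'' is a red herring --- you do not need $\pi$ to be injective. Any labeling $\pi:V(G)\to\mathbb{N}$ induces an acyclic orientation via the lex order $(\pi,\id)$, and the only local check you need is the out-degree bound; acyclicity is automatic. The paper's (two-line) proof does not even bother with the check: the prover simply uses a degeneracy order to distribute edge certificates (so each vertex carries at most $d$ of them), and the verifier at $v$ gathers, for every incident edge, its certificate from either its own label or the neighbour's, and then simulates the original checks. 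Your discussion of what the edge-verifier's local view must contain is a valid point and goes beyond what the paper spells out.
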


\begin{proof}
  Distribute edge certificates based on a degeneracy order of $G$.
  Then each vertex receives at most $d$ edge certificates and can verify them as each vertex receives or has the certificates the respective edge would have received and verified on its own previously.
\end{proof}

Hence, we will derive the proof of \cref{thm:msoPLS} as a direct combination of \Cref{obs:MSO2toMSO1} and the following theorem for \msoo (\Cref{thm:MSO1mc}) via the discussion above. 
We provide a formal proof of the discussion above after the statement of the main theorem.

\begin{theorem}[Local certification of \msoo\ model-checking]\label{thm:MSO1mc}
For every integer $k$ and for every \msoo-sentence $\psi,$ there is a proof labeling scheme $(\prover, \verifier)$ of size $O(\log n)$ 
for the class of connected graphs that has locally verifiable treewidth at most $\width$ 
and satisfies the sentence $\psi.$
\end{theorem}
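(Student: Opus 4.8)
The goal is to certify, with $O(\log n)$-size labels, that a connected graph $G$ has locally verifiable treewidth at most $\width$ \emph{and} satisfies a fixed \msoo-sentence $\psi$. The plan is to have the prover first include all the information from the \PLS{} of \cref{prop:locallyPLS}, so that the verifier can reconstruct locally the certified elimination tree $F$ of width at most $\width$ together with the witnessing oriented path system $\P$ of congestion at most $\width$. Once $F$ is available, it induces a rooted tree-decomposition $(F,\chi)$ of width at most $\width$ where $\chi(v)=\str_F(v)$, and this is exactly the object on which Courcelle-style dynamic programming runs. The substance of the proof is therefore to convert the centralized bottom-up DP over $(F,\chi)$ into a distributed certificate and a local verification step.

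\textbf{Key steps.} First, I would fix, via the classical Feferman--Vaught / Courcelle machinery, the finite set of \emph{types} (equivalence classes of the Hintikka game up to quantifier rank $q=\mathrm{qr}(\psi)$) of \emph{boundaried graphs} whose boundary size is at most $\width+1$; call this finite set $\mathcal{T}_q$, and fix the finite transition function that, given the type of the subtree-graph at each child of a node $v$ together with the isomorphism type of the bag $\chi(v)$ and the way the child boundaries glue into $\chi(v)$, computes the type of the subtree-graph at $v$. The prover, knowing $G\models\psi$, computes for every vertex $v$ the type $\tau(v)\in\mathcal{T}_q$ of the boundaried graph $(G[V(F_v)\cup\str_F(v)],\str_F(v))$, and writes $\tau(v)$ into $v$'s label — this is $O(1)$ extra bits since $|\mathcal{T}_q|$ depends only on $\psi$ and $\width$. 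Crucially, the prover must also encode, locally, how the children's boundaries sit inside $v$'s bag; since $|\str_F(v)|\le \width+1$ and all these vertices are ancestors of $v$, their identifiers are already broadcast through the channel mechanism of \cref{prop:locallyPLS}, so each vertex can see its bag, its children's bags, and which ancestor-vertex each $\str$-element refers to. The verifier at $v$ then (i) checks that $\tau(v)$ equals the value obtained by applying the fixed transition function to the claimed types $\tau(u)$ of the children $u$ of $v$ (leaves get the base type of a single-vertex boundaried graph), and (ii) checks, at the root $r$, that $\tau(r)$ is one of the finitely many accepting types for $\psi$. A vertex can see all its children in $F$ and, through the path system, the relevant bag information, so every check is local.

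\textbf{Soundness and completeness.} Completeness is immediate: the prover supplies the true elimination tree, true path system, and true DP types, so every check passes. For soundness, suppose every vertex accepts. By \cref{prop:locallyPLS}'s soundness argument, the locally-reconstructed $F$ is a genuine elimination tree of width $\le\width$ and $\P$ a genuine witnessing path system, so $G$ has locally verifiable treewidth $\le\width$ and $(F,\chi)$ is a genuine tree-decomposition. Then an induction on $F$ from the leaves shows that the accepted label $\tau(v)$ is \emph{forced} to equal the true type of $(G[V(F_v)\cup\str_F(v)],\str_F(v))$: the base case is a single vertex, and the inductive step is exactly the correctness of the transition function (a Feferman--Vaught-type composition lemma for \msoo{} over the gluing of bounded-boundary graphs). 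Hence $\tau(r)$ is the true type of $G$, and since the verifier checked it is accepting, $G\models\psi$.

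\textbf{Main obstacle.} The delicate point is not the algebra of types but the \emph{locality of the gluing data}: the verifier at $v$ must know, for each child $u$, the precise injection of $u$'s boundary $\adh$-set into $v$'s bag, and which pairs of boundary vertices are adjacent in $G$, in order to apply the right transition. Vertices of $\str_F(v)$ can be far from $v$ in $G$, so $v$ cannot directly probe adjacencies among them; this is exactly what the oriented path system / channel mechanism of \cref{prop:locallyPLS} was designed to transport, and I would reuse it to also ferry the $O(1)$-size type annotations and the (already-certified) bag membership along each channel. Making sure that the adjacency pattern inside each bag $\str_F(v)$ is itself \emph{certified} (not merely claimed) — so that a dishonest prover cannot lie about which boundary vertices are connected — is the one place requiring care; but every such adjacency is an edge of $G$ with both endpoints ancestors of $v$, hence visible on the root-path of $v$, so $v$ can cross-check it against what its ancestors report through their channels. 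Once this bookkeeping is set up, everything else is the routine finite-automaton-on-a-tree argument, and the label size remains $O(\log n)$ because only $O(1)$ bits of type information are added on top of the \cref{prop:locallyPLS} certificate.
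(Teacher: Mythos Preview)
Your proposal is correct and follows essentially the same architecture as the paper: layer a bottom-up dynamic program over the certified elimination tree from \cref{prop:locallyPLS}, store the DP state at each vertex, ship each child's state to its $F$-parent via the cargo mechanism of the witnessing path system, have every vertex recompute its own state from its children's and compare, and have the root check acceptance. The one technical divergence is in the encoding of the DP state. You use the classical Feferman--Vaught\,/\,Hintikka type of the boundaried graph $(G[V(F_v)\cup\str_F(v)],\str_F(v))$; the paper instead introduces a bespoke \emph{compact evaluation tree} (\cref{thm:mcMSO1onEF}) whose configurations deliberately leave the adjacency between any two vertices of $\str_F(v)\setminus\{v\}$ \emph{undefined}, resolving it only when one of them later becomes the current vertex. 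Consequently, Property~2) of \cref{thm:mcMSO1onEF} requires from $v$ merely the adjacencies between $v$ itself and the rest of $\str_F(v)$ --- information $v$ sees directly as $N_G(v)$. This design choice makes your ``main obstacle'' evaporate: the paper never needs to certify or transport the adjacency pattern \emph{inside} a bag, whereas your route must add (and verify) a claimed adjacency matrix on $\str_F(v)$, propagated along the elimination tree and cross-checked between each vertex and its $F$-parent. Your sketch of that cross-check is a bit loose (channels in \cref{prop:locallyPLS} run from child to parent, so ``what its ancestors report through their channels'' should really be ``what $\parent_F(v)$'s label claims, visible to $v$ through the cargo of the $(v,\parent_F(v))$-channel, combined with $v$'s locally verifiable own row''), but once stated this way it is sound. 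Both routes give $O(\log n)$ labels; the paper's encoding just buys a cleaner local check by deferring the intra-bag adjacencies.
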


\begin{proof}[Proof of \Cref{thm:msoPLS}]
We convert an \mso-formula into \msoo-formula using two additional predicates $L_V$ and $L_E$ representing whether those elements were originally vertices or edges, respectively.
Then, we take one relation representing the incidence.
Then, whenever there was an edge in the original formula, we now require an element of the universe with the label $L_E$ and analogically for the case of former vertices.
It is clear that the two formulas are equivalent, but this conversion extended the universe to contain edges as well as vertices.
However, we now use \cref{obs:MSO2toMSO1} to simulate certification on edges in the respective vertices as a graph of bounded treewidth has bounded degeneracy.
\end{proof}

Before we dive into the proof of the core argument of this section we need to introduce some model-checking notation and explain the main tool we will be using for the local certification.
Our approach is based on a proof published in the lecture notes\footnote{This approach originates in some unpublished work of Hliněný, Král’, and Obdržálek.}~\cite[Section 5.4]{Fiala} describing how to obtain classical \msoo\ model-checking for graphs of bounded treewidth reproving the seminal Courcelle's theorem~\cite{Courcelle90twDP}.
The approach uses objects called \say{evaluation trees} which play the role of remembering a structure of all possible assignments of vertices of a graph to variables of a formula.
The advantage of the evaluation trees is that in each node of the elimination tree, we can deal with a single object, the evaluation tree, that we modify bottom up on the elimination tree.
We only evaluate it in the root {\sl after} we have seen the whole graph to obtain the result of the model-checking.
The evaluation tree technique was recently formalized for the cliquewidth~\cite{DM24} and we will follow the notation established in this paper.
Due to the robustness of this technique, we are able to separate the model-checking part into a self-contained statement and use it as a black box for the local certification procedure.
Therefore, we present at this point only the necessary terminology before showing the proof of the local certification part.
For ease of discussion and without loss of generality, we assume for the whole paper that the formula $\psi$ in question is the prenex normal form. 
\begin{definition}[Evaluation tree]\label{def:evalTree}
  The  \emph{(full) evaluation tree} for a sentence $\Phi$ with $q$ quantifiers and a graph $G$ is a rooted tree $Q$ with $q + 1$ levels constructed as follows.
  We define the first level to be a single node.
	For each $1 \geq \ell \geq q$ we define the $(\ell +1)$-th level of the evaluation tree recursively as follows:
  Let $z_{\ell +1}$ denote the $\ell$-th variable in the formula $\Phi$. 
  If $z_{\ell + 1}$ is an individual variable, then $w$ has $n$ children, representing all possible choices of vertices for $z_{\ell + 1}$. 
  If $z_{\ell + 1}$ is a set variable, then $w$ has $2^{\abs{V(G)}}$ children, representing all possible choices of subsets of vertices for $z_{\ell + 1}$. 
\end{definition}
The answer whether $\Phi$ is satisfied on $G$ can be obtained by traversing $Q$ bottom-up. 
  Indeed, each leaf of $Q$ is associated with an interpretation $s$ as an assignment of the variables to vertices and vertex sets. 
  Hence, one can evaluate the quantifier-free formula in the suffix of $\Phi$ and determine the \true\ or \false\ value.
The value of each internal node depends on the values of its children only.
In case a node corresponds to a universally quantified variable, it must have all its children satisfied.
At least one child satisfied is enough for an existentially quantified variable.
We refer to the described procedure as \emph{evaluation} of the evaluation tree.

We can observe that the evaluation tree contains all the information needed to successfully evaluate a given formula on a given graph (as a relational structure), but a big drawback is that it is astronomically large.
However, the plan is to reduce the size of the tree so that will contain information about only $\width+q$ vertices. 
The main idea for such a reduction comes from the easy observation that in order to evaluate a leaf of $Q$ we only need to know the adjacency of vertices assigned by at least one vertex (individual) variable as well as the {\sl labels} of those vertices.
Here, a set variable whose image under the interpretation $s$ contains a vertex $v$ is treated as a label of $v$ and a vertex may have multiple labels. 
Obviously, this is great as the number of such vertices is at most $q$.
However, we need to store a bit more information to be able to maintain such information throughout the induction in the elimination tree. 
For that, we store additional information about vertices in $\str(v)$ where $v$ is the vertex representing a node in the elimination tree.
Moreover, we need to remember which vertex variables will be assigned in a subgraph corresponding to a different branch of the elimination tree.
Altogether, this results in the promised bound on the size of the evaluation tree as a function of depending solely on $\tw(G)+q$.
We call such a reduced object a \emph{compact evaluation tree} and postpone the formal definition (\cref{def:compactTree}) to \cref{sec:mcProof}.

We now state the main model-checking theorem with specific properties that will be useful for the local certification proof.
We dedicate \cref{sec:mcProof} to its proof.

\begin{theorem}\label{thm:mcMSO1onEF}
  Let $\phi$ be a fixed \msoo-sentence using $q$ quantifiers and let $\width$ be an integer.

  Then there exist a computable function $f$ such that for any graph $G$ together with its elimination tree $F$ of width $\width$, and for each node $v$ of $F$, there is a compact evaluation tree $Q_v$ that satisfies all the following properties:
\begin{itemize}
  \item[0)] for the root $r$ of $F$, the compact evaluation tree $Q_r$ evaluates to true if and only if $G\models \phi$, \phantomsection\label{def:0}
\item[1)]\phantomsection\label{def:1} for each $v\in V(F)$, the size of $Q_v$ is bounded by $f(q+\width)\cdot \log(n)$.
 More precisely,
 $Q_v$ has at most $f(q+\width)$ many nodes while in the leaves of $Q_v$ we used at most $\log n$ long labels to encode the name of vertices in $\str(v)$\footnote{The bound on the bit size of $Q_v$ could be improved to just $f(q+\width)$. However, it would require more complicated proof and such an improvement does not improve the overall complexity of the main statement of this paper, so we decided to present the simpler weaker version of this bound.} whose number is bounded by $|\str(v)|\le\width$,
  \item[2)] for each $v\in V(F)$, $Q_v$ can be computed uniquely from: 
    \phantomsection\label{def:2}
    \begin{itemize}
      \item $Q_{u_1}, Q_{u_2},\ldots$ where $u_1,u_2\ldots$ are all children of $v$ in $F$ in some fixed order, and
      \item the full adjacency information between $v$ and $\str(v)$. %
    \end{itemize}
\end{itemize}
\end{theorem}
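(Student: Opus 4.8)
The plan is to construct the compact evaluation trees $Q_v$ by a bottom-up induction on the elimination tree $F$, mirroring the classical Courcelle-style dynamic programming but recording, at each node $v$, only the bounded amount of information that survives when we forget everything outside $\str(v)$. First I would fix the prenex normal form $\phi = Q_1 z_1 \cdots Q_q z_q\, \phi_0$ and recall the notion of a \emph{full} evaluation tree from \cref{def:evalTree}: its leaves correspond to interpretations $s$ of the $q$ variables, and a leaf evaluates according to the quantifier-free matrix $\phi_0$. The key structural observation --- which I would isolate as a lemma --- is that two leaves of the full tree produce the same truth value whenever the interpretations agree on (i) the set of vertices hit by at least one individual variable, (ii) the membership of each such vertex in each set variable ("labels"), and (iii) the adjacency relation restricted to those vertices. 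So the compact evaluation tree $Q_v$ (formally \cref{def:compactTree}, used here as a black box) is the quotient of the full tree by this equivalence, further annotated with: the restriction of each set variable to $\str(v)$, the adjacency information inside $\str(v)$, and a flag for which individual variables have \emph{not yet} been assigned inside $F_v$ (these will be assigned in other branches or above $v$). Since $|\str(v)| \le \width$ and there are $q$ variables, the number of equivalence classes, and hence the node count of $Q_v$, is bounded by a computable function $f(q+\width)$; the only unbounded data are the at most $\width$ identifiers naming the vertices of $\str(v)$, each of $O(\log n)$ bits --- this gives Property 1.

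For Property 2 (the local recurrence), I would describe the merge operation: given the children's trees $Q_{u_1}, Q_{u_2}, \dots$ together with the full adjacency between $v$ and $\str(v)$, one reconstructs $Q_v$ level by level. At the level of a set variable $Z_\ell$, a class in $Q_v$ is obtained by choosing, for each vertex of $\str(v) \cup \{v\}$, whether it lies in $Z_\ell$, and then combining with a consistent choice of classes in each $Q_{u_i}$ that agrees on the shared vertices $\str(u_i) \cap (\str(v) \cup \{v\})$ --- here one uses that $\str(u_i) \subseteq \str(v) \cup \{v\}$ by the definition of an elimination tree. At the level of an individual variable $z_\ell$, the variable is either placed on $v$ itself (updating adjacency via the given adjacency of $v$ to $\str(v)$ and via the bookkeeping stored in the children), or "deferred" to exactly one subtree $F_{u_i}$ (selecting a child class where that variable is assigned), or deferred upward (kept as an unassigned flag). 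One must check that the adjacency requirement (iii) can always be certified with the stored data: any edge of $G$ incident to an assigned vertex $w \in F_v$ has its other endpoint either in $F_w \subseteq F_v$ (recorded when the variable was placed) or an ancestor of $w$, hence in $\str(w) \subseteq \str(v)\cup\{v\}$ at the moment of placement, so it is tracked. Propagating and updating this "ancestor-adjacency" annotation is exactly what forces us to keep the $\str(v)$-block of data rather than just the $q$ assigned vertices. Property 0 then follows because at the root $r$ we have $\str(r) = \{r\}$, all variables must have been assigned (nothing left to defer), every adjacency query has been resolved, so $Q_r$ is just the truth value of $\phi$ on $G$ computed by the standard alternating evaluation over the quotiented tree.

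The main obstacle I anticipate is making the merge step \emph{deterministic and well-defined} --- i.e.\ genuinely a function of the listed inputs and nothing else --- while carrying the adjacency bookkeeping correctly across the "deferred variable" cases. Concretely: when an individual variable is assigned deep in some $F_{u_i}$, the vertex it names is invisible at $v$, yet a \emph{later} quantifier might place another variable on an ancestor of that hidden vertex, at which point the edge between them must be decidable. The fix is to record in each child class, for every deferred-downward individual variable, its full adjacency pattern to $\str(u_i)$ (a bounded amount of data), and to show by the elimination-tree property that $\str(u_i) \subseteq \str(v) \cup \{v\}$ so this pattern can be re-expressed relative to $\str(v)$ when merging --- and symmetrically, the pattern to vertices that have since left $\str(\cdot)$ is irrelevant because no future variable can be adjacent to them without being their descendant. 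Verifying that this invariant is preserved by every case of the recurrence, and that the result does not depend on the (arbitrary) ordering $u_1, u_2, \dots$ of the children, is the technical heart of the argument; everything else is routine bookkeeping, and the size bound is immediate once the invariant is in place.
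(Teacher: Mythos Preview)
Your plan is correct and follows essentially the same route as the paper: a bottom-up induction on the elimination tree, storing at each node a bounded-size quotient of the partial evaluation tree (what the paper calls a \emph{compact evaluation tree}, built from \emph{configurations}), and combining children's data via a merge operation (the paper's \emph{tree product} $\otimes$, \cref{def:treeProduct}). The paper isolates the correctness of the merge-then-reduce step into a separate \emph{Combining Lemma} (\cref{cl:combining-lemma}, imported from~\cite{DM24}) stating that $\otimes$ respects isomorphism of configuration trees; your proposal folds this into the invariant discussion, which is fine.

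Two small remarks. First, your containment $\str(u_i)\subseteq \str(v)\cup\{v\}$ is not literally true, since $u_i\in\str(u_i)$ whenever $u_i$ has a neighbour in $F_{u_i}$, yet $u_i\notin\str(v)\cup\{v\}$; the correct statement is $\str(u_i)\setminus\{u_i\}\subseteq \str(v)$, and this is all you actually need, because adjacencies to $u_i$ itself are resolved when you process $v$. Second, you need not verify order-independence of the children: the statement only asks that $Q_v$ be computable from the children \emph{in some fixed order}, and the paper simply fixes one and merges iteratively.
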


Note that in the statement above Property \hyperref[def:0]{0)} alone is enough to prove the above-mentioned Courcelle's theorem~\cite{Courcelle90twDP}.

\subsection{Proof of \Cref{thm:MSO1mc}}

Here we prove the core of local certification model-checking result using \cref{thm:mcMSO1onEF}.

\begin{proof}[Proof of \Cref{thm:MSO1mc}]
First, we invoke \cref{prop:locallyPLS} to obtain the certified structure of elimination tree $F$ together with the respective path system $\Pp$ that we will use to certify the formula locally.

  We describe certificates that prover $\prover$ adds to certificates described by \cref{prop:locallyPLS}.
  For each vertex $v$, we store an $O(\log n)$-sized certificate $C_v$ that is supposed to play a role of the evaluation tree at node $v$ of the elimination tree.
  Additionally, we add more certificates to the cargo.
  For each $(u,v)$-path $P$ in $\Pp$, we add $C_u$ as cargo to each vertex of $P$ except for the last one.
  Due to properties of $\Pp$, in particular the congestion bound on $\Pp$,  we store $\width$ times $O(\log n)$-sized certificates. 
Therefore, we can ensure that $N_G(v)$ contains certificates with the evaluation trees of all children of $v$ in $F$ and due to previous certification, they are all marked to be used at vertex $v$. 
  We describe the algorithm verifier  $\verifier$ performs at each vertex $v$ of $G$: 
  \begin{itemize}
    \item \textbf{Main verification.}\phantomsection\label{check:main}
 As discussed, we know that $N_G(v)$ contains correct information about who are children of $v$ in the elimination tree together with certificates of their respective evaluation trees all marked to be used at vertex $v$ because that was the information $\prover$ stored in the cargo.
 Hence, $v$ received the communication $\id{}$s of all its children and their proposed evaluation trees in one round of communication.
      By Property \hyperref[def:2]{2)} of \cref{thm:mcMSO1onEF}, each $v \in V(G)$ computes $Q_v$ an evaluation tree for $v$ from the evaluation trees of all the children ordered by the $\id$s of the respective vertices.
      In order to do that, we also need to know full adjacency information between $v$ and $\str(v)$, but as the ancestor information was certified correctly (due to \cref{prop:locallyPLS}); exactly vertices in $\str(v)\cap N_G(v)$ are adjacent to $v$ and others are not.
      At this point, we compare computed $Q_v$ with certificate $C_v$ and reject if they differ.
   
 \item \textbf{Auxiliary verifications.}\phantomsection\label{check:aux}
   Those verifications will be the same as in \cref{prop:locallyPLS}.
   The only goal is to ensure that the whole path $P\in\Pp$ contains identical information.
   Consider an $(u,v)$-path $P\in\Pp$.
   Vertex $u$ verifies that its cargo is identical to $Q_v$ and whether it is marked to be used at $v$.
Any other vertex on $P$ except $v$ verifies whether its cargo is exactly the same as the cargo of its predecessor and successor.
If any such check fails, the corresponding vertex rejects the instance.

 \item \textbf{Model-checking verification.}\phantomsection\label{check:mc}
   In case $v$ is a root of $F$, the verifier $\verifier$ evaluates the evaluation tree $Q_v$ and rejects if $Q_v$ is evaluated to false.
  \end{itemize}

  \paragraph{Completeness.}
  Assume $G\models \phi$ and has locally bounded treewidth.
  We combine certification of locally bounded treewidth done in \cref{prop:locallyPLS} with the following.
  We assess that we can use evaluation trees given by \cref{thm:mcMSO1onEF} and use them as certificates $C_v$ for $v\in V(G)$.
  We distribute a copy of $C_v$ along paths in $\mathcal P$ as a cargo.
  By Property \hyperref[def:1]{1)}, the size of evaluation trees is $O(\log n)$.
By Property \hyperref[def:2]{2)}, we know that neither the \hyperref[check:main]{Main verification} nor the \hyperref[check:aux]{Auxiliary verifications} will reject.
  The \hyperref[check:mc]{Model-checking check} will also not reject thanks to Property \hyperref[def:0]{0)} as we assumed graph $G\models \phi$.
 
  \paragraph{Soundness.}
  Assume $G\not \models \phi$, but still no vertex rejected the instance.
  By the combination of the \hyperref[check:aux]{Auxiliary verifications} and the \hyperref[check:main]{Main verification}, we know that $C_v$ is a correct evaluation tree as given by \cref{thm:mcMSO1onEF} for each $v\in V(G)$.
  However, it means that the evaluation tree in the root should evaluate to true in the \hyperref[check:mc]{Model-checking verification} which is a contradiction with Property~\hyperref[def:0]{0)} of \cref{thm:mcMSO1onEF}.
\end{proof}

\subsection{Proof of \msoo\ model Checking on Graphs with an Elimination Forest of Bounded Width}\label{sec:mcProof}

This subsection is devoted to the proof of \cref{thm:mcMSO1onEF}.
Before we dive into the proof we need to formalize some tools.
We start with definitions of various evaluation trees.
A partial evaluation tree allows us an option of saying that some vertex variable is defined in a different subgraph.
We show how to adapt \cref{def:evalTree} to accommodate that.
In what follows $G_t$ is a graph induced by all vertices in the subtree of the elimination tree rooted at $t$.

\begin{definition}[Partial evaluation tree]
  The \emph{partial evaluation tree} for a formula $\Phi$ and a node $t$ of the elimination tree $F$ of $G$ is a tree obtained from the evaluation tree for $\Phi$ and $G_t$ by adding to each node of level $\ell$ where $z_{\ell + 1}$ is an individual variable an extra child labeled by $ext$ representing the case where $z_{\ell + 1}$ does not belong to $G_t$. 
  We say the value of $z_{\ell+1}$ is \emph{external} in that case.
\end{definition}

The goal will be to identify the isomorphism classes of subtrees of the partial evaluation tree, so we can reduce its size while keeping all the necessary information that we need to evaluate the formula.
In order to do it, we formulate what are the properties we can express about the leaves which allows us to forget some information while preserving the logical value of the formula. %
The aim is to create definitions such that the logical value of isomorphic subtrees is identical.
\begin{definition} [Configuration]\label{def:config}
Let $F$ be an elimination tree of width $\width$, $Q_t$ be a partial evaluation tree for a formula $\Phi$ with $q=q_v+q_S$ variables, and $t \in V(F)$.
A~\emph{configuration} of a leaf is a tuple consisting of: %
\begin{itemize}
  \item a subgraph $G_t'$ of $G[V(G_t)\cup \str(t)]$ induced by at most $q_v$ vertices where some pair of vertices might have their edge relation undefined %
  \footnote{Note that we care only about the shape of such a subgraph and not about any specific vertex names.},
\item an assignment of values from $V(G_t') \cup \{ext\}$ to the individual variables such that each vertex of $V(G_t')$ is assigned to some individual variable at least once,
  \item an assignment of subsets of $V(G_t')$ to set variables, %
\end{itemize}
This tuple satisfies the following properties:
\begin{itemize}
  \item non-external vertices of $G_t'$ could be additionally marked by labels corresponding to vertices in $\str(t)$, but those markings are disjoint (i.e., no two vertices of $V(G_t')$ obtain the same mark and no vertex of $V(G_t')$ obtains two such marks.),
  \item for each pair of vertices in $G_t'$, we know whether there is an edge between them or not unless both are marked by a vertex in $\str(t)\setminus\{t\}$.
\end{itemize}  
\end{definition}
Note that we use labels of size $\mathcal{O}(\log n)$ to describe the markers by vertices in $\str(v)$, but at all times there will be at most $\width$ many such distinct markers.
Observe that for any node $t$ of the elimination tree, configurations of leaves in a partial evaluation tree can be determined.
However, we would like to go further and store in the leafs the configuration and keep on the edges only labels $ext$ and $\str(t)$ in case of vertex variables.
In order to do that, we define a configuration evaluation tree.
 
A \emph{Configuration evaluation tree} of an evaluation tree $F$ rooted at $t$ is a tree where all leaves store a configuration and such that no edges are labeled except for the case of vertex variables and labels $ext$ or $\str(t)$ such that it satisfies the following consistency requirements:
For each subtree $T'$ of depth $\ell<q$ the value of the first $\ell$ variables are consistent in configurations stored among all the leaves of $T'$.
A vertex variable $z$ is \emph{consistent} in the set of configurations with the respective set of assignments $A$ if 
\begin{itemize}
  \item for all $a\in A$, $a(z)=ext$, or
  \item for all $a\in A$, mark of $a(z)$ is the same vertex in $\str(t)$, or
  \item for all $a\in A$, $a(z)$ is not marked and $a(z)\neq ext$.
  \item Additionally, for all variables $z_1,\ldots,z_{\ell'}$ before $z=z_{\ell'+1}$ (in the order given by the prenex normal form), for all $a\in A$, $a(z)$ contains the same subset of labels among $z_1,\ldots z_{\ell'}$.
\end{itemize}
A set variable $Z$ is \emph{consistent} in the set of configurations with the respective set of assignments $A$ if, for all $a\in A$, there is a vertex in $a(Z)$ that is marked by the same vertex in $\str(t)$.
  Additionally, for all variables $z_1,\ldots,z_{\ell'}$ before $Z=z_{\ell'+1}$ (in the order given by the prenex normal form), for all $a\in A$, the set of vertices $a(Z)$ with labels restricted to labels among $z_1,\ldots z_{\ell'}$ is identical.
Note that a configuration evaluation tree can be easily obtained from a (partial) evaluation tree.
That means there is an algorithm converting a partial evaluation tree to a configuration tree.

Now, we can define the isomorphism.
We say that two configurations are \emph{isomorphic} (denoted using $\sim$) if the corresponding graphs are isomorphic including all the labels and markings.
Building on that we define the isomorphism of configuration trees.

\begin{definition}[Isomorphism between configuration evaluation trees]
  Two configuration evaluation (sub)trees $Q$ and $Q'$ of height $h\in[q]\cup \{0\}$ for the same formula $\psi$ and the same graph $G$, %
  are \emph{isomorphic} (also denoted using $\sim$) if 
  \begin{itemize}
\item If $h = 0$, the configuration of $Q$ is isomorphic to the configuration of $Q'$
\item and if $h > 0$, for all children of $Q$ there is an isomorphic child of $Q'$ and vice-versa.
\end{itemize}
\end{definition}

Now, we define our main object which is the reduced evaluation tree because a reduced evaluation tree does not need to be partial.
Hence, we define a compact evaluation tree as a shortcut to capture both properties together.

\begin{definition}[Compact evaluation tree]\label{def:compactTree}
  \emph{Reduced evaluation tree} is a tree where each leaf contains a configuration, and such that no two of its siblings are isomorphic.
For brevity, we say that we \emph{reduced} a configuration evaluation tree.
We denote a configuration evaluation tree that is both partial and reduced as \emph{compact evaluation tree}.
\end{definition}
Note that whenever there is a configuration evaluation tree that is not reduced, we can perform a series of reductions exhaustively, so that the resulting tree is a reduced evaluation tree.

Observe that the number of non-isomorphic compact evaluation trees is bounded solely by a function of $q+\width$.
In particular, the number of non-isomorphic leaves is at most $3^{q_v \choose 2} \cdot  (q_v + 1) ^ {q_v} \cdot (2^{q_v})^{q_s} \cdot (q_v+\width)^{\underline{\width-1}}$, where $q=q_v+q_S$ is the sum of number of vertex quantifiers $q_v$ and set quantifiers $q_S$.
We formalize this observation in the following lemma.

\begin{lemma}\label{lem:ETsizeBound}
  Let $\phi$ be an \msoo-sentence using $q=q_v+q_S$ quantifiers. 
  There is a computable function $f$ such that for each node $v$ of the elimination tree $F$ of width $\width$, the size of the compact evaluation tree of $v$ is bounded by $f(q+\width)\cdot \log(n)$.
\end{lemma}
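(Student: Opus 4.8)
The plan is to bound the number of non-isomorphic compact evaluation trees level by level, starting from the leaves, since a compact evaluation tree is a rooted tree of height exactly $q$ whose structure at each level is determined entirely (up to isomorphism) by the multiset of isomorphism classes of its children, together with the edge labels ($ext$ or a marker from $\str(v)$) in the case of an individual-variable level.

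First I would bound the number of non-isomorphic leaf configurations. A leaf configuration (Definition~\ref{def:config}) consists of: a graph $G_t'$ on at most $q_v$ vertices with edges possibly undefined (so $3^{\binom{q_v}{2}}$ choices for the adjacency pattern, as each pair is present, absent, or undefined); an assignment of each of the $q_v$ individual variables to $V(G_t')\cup\{ext\}$ that is surjective onto $V(G_t')$, giving at most $(q_v+1)^{q_v}$ choices; an assignment of a subset of $V(G_t')$ to each of the $q_S$ set variables, giving at most $(2^{q_v})^{q_S}$ choices; and a partial injective marking of the non-external vertices by elements of $\str(t)$ (of which there are at most $\width$), giving at most $(q_v+\width)^{\underline{\,\width-1\,}}$ choices (since $t$ itself always marks the root vertex and the remaining at most $\width-1$ markers are placed injectively, though a coarser $(q_v+\width+1)^{\width}$ bound suffices). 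Multiplying, the number of isomorphism classes of leaves is bounded by a constant $M_0 := M_0(q,\width)$ depending only on $q$ and $\width$; importantly this count is of \emph{isomorphism classes}, so the actual bit-cost of writing down a marked vertex name is deferred and accounts for the $\log n$ factor.

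Next I would propagate this bound up the tree. Define $M_q := M_0$ for the leaf level, and for a node at the level whose children correspond to variable $z_{\ell+1}$, its isomorphism class is determined by which subset of child-isomorphism-classes appears among its children (for a set-variable level, at most $2^{M_\ell}$ possibilities) and, for an individual-variable level, additionally by which edge labels from $\{ext\}\cup\str(t)$ decorate the children, but since the reduced tree retains at most one child per isomorphism class and per relevant label pattern, one still gets a bound of the form $M_{\ell} \le 2^{M_{\ell+1}\cdot(\width+2)}$ or similar. Iterating this $q$ times from the leaves yields a bound $M_0^{(q)} =: f_0(q,\width)$ on the number of isomorphism classes of the whole compact evaluation tree, and hence on its number of nodes, which is a computable function of $q+\width$ only. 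Finally, each of the at most $f_0(q,\width)$ nodes carries a configuration in which the at most $\width$ markers reference actual vertices of $\str(v)$, whose identifiers require $\lceil\log n\rceil$ bits each; thus the total bit-size of the compact evaluation tree is at most $f_0(q,\width)\cdot\width\cdot\lceil\log n\rceil \le f(q+\width)\cdot\log n$ for a suitable computable $f$, as claimed.

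The main obstacle I anticipate is getting the level-by-level recurrence clean: one must argue carefully that in a \emph{reduced} tree the number of children of any node is bounded (no two siblings isomorphic, so the number of children is at most the number of isomorphism classes of the child level, times the number of relevant edge-label patterns on the connecting edges), and that this boundedness, combined with the consistency requirements of a configuration evaluation tree, propagates correctly. The consistency conditions (Definition of configuration evaluation tree) are what guarantee that the label data attached along a root-to-leaf branch is coherent and hence does not blow up; one should check that these conditions are preserved under reduction so that the reduced tree is still well-defined and its size genuinely controlled. Everything else is the routine multiplicative counting sketched above, and separating the combinatorial node-count (a function of $q+\width$) from the $O(\log n)$ cost of naming $\str(v)$-vertices is exactly the footnoted trade-off the statement already anticipates.
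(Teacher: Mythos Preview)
Your proposal is correct and follows essentially the same approach as the paper: bound the number of non-isomorphic leaf configurations by a function of $q+\width$ (the paper uses the very same product $3^{\binom{q_v}{2}}\cdot(q_v+1)^{q_v}\cdot(2^{q_v})^{q_S}\cdot(q_v+\width)^{\underline{\width-1}}$), then propagate level by level via a power-set bound $b\mapsto 2^b$, yielding a tower function in $q$, with the $\log n$ factor coming from encoding the $\str(v)$-markers. Your treatment is in fact slightly more careful than the paper's in that you explicitly account for the edge labels at individual-variable levels, which the paper absorbs silently into the $2^b$ step.
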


\begin{proof}
  We proceed by induction on $q$ levels of the compact evaluation tree.
  We already observed that the number of non-isomorphic configurations of leaves is bounded by $f'(q+\omega)$ for some computable function $f'$.
  Now, assume that such number is bounded by $b$ for $q'<q$ level compact evaluation subtree we derive that it is bounded for $q'+1$ level compact evaluation subtree by $2^b$.
  Indeed, the node $x$ can have $b$ non-isomorphic types of children, so the type of $x$ is determined by the subset of the isomorphism types present as its children.
  Hence, the final bound $f$ is the tower function of $q$ levels of the bound $f'$ imposed on the number of non-isomorphic leaves.

  The factor $\log n$ comes from the markers used for vertices in $\str(v)$ where $v$ is the root of $F$ while their number is bounded by $\width$ which was used to determine the number of different configurations in the leaves.
\end{proof}

Now, we get to the core of this technique.
We show how we can combine two configuration evaluation trees.

\begin{definition}[Tree product]\label{def:treeProduct}
  Let $\Phi(z_1, \ldots, z_\ell)$ be an \msoo-formula in a prenex normal form, where $z_1, \ldots, z_\ell$ are variables already assigned and $z_{\ell + 1}, \ldots, z_q$ are quantified variables.
Let $G$ be a graph and let $F$ be an elimination tree for $G$. Let $t \in V(F)$ and let $t_1, t_2$ be distinct children of $t$.
Let $Q_1, Q_2$ be evaluation trees for $\Phi(z_1, \ldots, z_\ell)$ and $G_{t_1}$ and $G_{t_2}$ respectively.
Let $u,w$ denote the roots of $Q_1, Q_2$ respectively.
  The \emph{product} of $Q_1, Q_2$ is an evaluation tree $Q \coloneq Q_1 \otimes Q_2$ defined recursively.
  \begin{itemize}
    \item If the height of both $Q_1, Q_2$ is greater than $0$ we split the definition into two cases.
    \begin{itemize}
      \item \textbf{Case: Variable $z_{\ell + 1}$ is a set variable.} 
        Let be $Q_1^1, \ldots, Q_1^n$ children of $Q_1$ and $Q_2^1, \ldots, Q_2^m$ be children of $Q_2$.  
        Then $Q$ is a tree with children $Q_1^i \otimes Q_2^j$ for $i \in [n], j \in [m]$.  
        (Observe that $Q_1^i$ and $Q_2^j$ are evaluation trees for $\Phi(z_1, \ldots, z_\ell, z_{\ell + 1})$ of height one smaller that $Q_1, Q_2$, so their product is well-defined.)
      \item \textbf{Case: Variable $z_{\ell + 1}$ is an individual variable.}
        Let be $Q_1^0,Q_1^1, \ldots, Q_1^n$ children of $Q_1$ and  let $Q_2^0, Q_2^1, \ldots, Q_2^m$ be children of $Q_2$, where $Q_1^0$ and $Q_2^0$ correspond to the edge labeled by $ext$ in $Q_1,Q_2$, respectively.
        We split into two more subcases:
        \begin{itemize}
          \item For each $u\in\str(t)$, whenever there is $i\in n$ and $j\in[m]$ such that both $Q_1^i$ and $Q_2^j$ correspond to an edge marked by $u$ then we append $Q_1^i \otimes Q_2^j$ to set of children of $Q$ marking the corresponding edge by $u$.
            We remove the mark in case $u=t$.
            We remove both $Q_1^i$ and $Q_2^j$ from consideration for the second subcase.
          \item  For each $j\in[m]$ not excluded in the previous case we append $Q_1^0 \otimes Q_2^j$ to the set of children of $Q$.
          For each $i\in[n]$ not excluded in the previous case we append $Q_1^i \otimes Q_2^0$ to the set of children of $Q$.
          We append $Q_1^0 \otimes Q_2^0$ labeled by $ext$ to set of children of $Q$
    \end{itemize}
    \end{itemize}
    \item If the height of both $Q_1, Q_2$ is $0$ (that means $\ell = q$) and both $Q_1$ and $Q_2$ are represented by a configuration with underlying graphs $G_1$, $G_2$, respectively, then $Q$ is a tree of height $0$ with the following configuration:
      \begin{itemize}
        \item The configuration graph identifies vertices marked by the same vertex in $\str(t)$.
          We denote the set of such vertices as $X$.
          Then it performs the disjoint union of the remaining vertices in $V(G_1)$ and $V(G_2)$.
          All edges and non-edges of $G_1$ and $G_2$ are kept.
          We mark the relations between $V(G_1)\setminus X$ and $V(G_2)\setminus$ as non-edges.
          Therefore, we maintain the property that all relations are known except between vertices in $\str(t)$.
          Knowing $N_G(t)$ we specify all the relations between $t$ and $\str(t)$ to be either edge or non-edge.
          We denote the resulting graph as $G$.
        \item The new assignment of values from $V(G)\cup \{ext\}$ to the individual variables is a composition of the previous assignments where those are conflicting (due to the recursive construction) only when one of the original assignments assigned $ext$ and the other assigned non-$ext$ value. 
          In such a case, we keep only the non-$ext$ value.
        \item  The new set assignment will be a composition of the previous assignments.
      \end{itemize}
  \end{itemize}
\end{definition}

Observe that the two properties of configuration will be maintained in the base case of the recursion.
Indeed, we unified vertices in $\str(t)$, so the first property is still maintained in graph $G$.
The second property also follows easily from the construction of $G$.
It is easy to verify that the tree product preserves the consistency requirements in the definition of a configuration evaluation tree.
Therefore the product of two configuration evaluation trees is a configuration evaluation tree.

Now, we are ready to prove \cref{thm:mcMSO1onEF}.

\begin{proof}[Proof of \cref{thm:mcMSO1onEF}]
The proof is done by induction bottom-up on the elimination tree $F$.
When $t$ is a leaf of $F$ we construct a partial evaluation tree on vertices $\str(t)$ and then we reduce the corresponding configuration tree.
Therefore, we obtain a compact evaluation tree $Q_t$.
This easily satisfies Properties \hyperref[def:1]{1)} (due to \cref{lem:ETsizeBound}) and \hyperref[def:2]{2)}

When $t$ is an inner node of $F$ with children $t_1,t_2,\ldots$ the situation is a bit more complicated.
We first create an auxiliary evaluation tree containing vertices in $\str(t)\setminus\bigcup_{i\in \{1,2,\ldots\}} \str(t_i)$.
We denote this tree as $A_t$.
Then we compute $Q_{t_1} \otimes Q_{t_2}$ and reduce the result to obtain $Q'_{t_2}$.
In general, if $Q_{t_i}$ is the smallest $i$ such that $Q_{t_i}$ was not merged yet then we perform $Q'_{t_{i-1}} \otimes Q_{t_i}$ and then reduce to create $Q'_{t_i}$.
As a final step, once $Q_{t_i}$ is merged for all $i$, we take largest $j$ such that $Q'_{t_j}$ exists and perform $Q'_{t_j}\otimes A_t$.
Then we reduce it to obtain $Q'_t$.

Observe that as we reduced after each step, Property \hyperref[def:1]{1)} is satisfied by \cref{lem:ETsizeBound}.
It follows from the construction and it is easy to verify that $Q_t$ satisfies Property \hyperref[def:2]{2)}.
To show Property \hyperref[def:0]{0)}, we utilize the following claim. 

\begin{claim}[Combining lemma; analog of {\cite[Lemma 5.20]{DM24}}] \label{cl:combining-lemma}
 Let  $t_1, t_2$ be distinct nodes in $F$ with the same parent, $Q_1 \sim Q_1'$ 
 be configuration evaluation trees for $\Phi$ and $t_1$, $Q_2 \sim Q_2' $ be configuration evaluation trees for $\Phi$ and $t_2$.
  Then $Q_1 \otimes Q_2 \sim Q_1' \otimes Q_2'$.
\end{claim}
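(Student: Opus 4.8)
The plan is to prove the claim by induction on the common height $h$ of the four configuration evaluation trees, mirroring the recursive structure of \cref{def:treeProduct}. The statement to prove is symmetric and local in the tree structure, so the induction should go through cleanly provided we are careful about the case split on the type of the $(\ell+1)$-st variable.

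\textbf{Base case $h=0$.} Here $Q_1, Q_1'$ are single leaves carrying configurations, and likewise $Q_2, Q_2'$. Since $Q_1 \sim Q_1'$ means their configuration graphs are isomorphic (including all labels and $\str(t)$-markings), and similarly for $Q_2 \sim Q_2'$, I would argue that the product construction in the base case of \cref{def:treeProduct} depends only on the isomorphism types of the two input configurations together with the fixed adjacency information $N_G(t)$ between $t$ and $\str(t)$. Concretely: the identification of vertices sharing a $\str(t)$-marker, the disjoint union of the unmarked parts, the designation of cross relations as non-edges, the resolution of $t$'s incidences via $N_G(t)$, and the (possibly conflicting) composition of the individual- and set-variable assignments are all functorial with respect to isomorphisms of the inputs. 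Hence an isomorphism $Q_1 \to Q_1'$ together with one $Q_2 \to Q_2'$ induces an isomorphism of the resulting configuration graphs $Q_1 \otimes Q_2 \to Q_1' \otimes Q_2'$, i.e.\ $Q_1 \otimes Q_2 \sim Q_1' \otimes Q_2'$.

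\textbf{Inductive step $h>0$.} Write $z := z_{\ell+1}$ for the variable at this level. By definition of $\sim$ for trees of positive height, for every child of $Q_1$ there is an $\sim$-isomorphic child of $Q_1'$ and vice versa, and similarly for $Q_2, Q_2'$. I need to show the same for $Q_1 \otimes Q_2$ and $Q_1' \otimes Q_2'$, whose children are certain products of children. I would handle the two cases of \cref{def:treeProduct} separately. If $z$ is a set variable, the children of $Q_1 \otimes Q_2$ are exactly $\{Q_1^i \otimes Q_2^j\}_{i,j}$; given a child $Q_1^i \otimes Q_2^j$, pick (by the child-matching for $Q_1 \sim Q_1'$) a child $Q_1'^{i'}$ with $Q_1^i \sim Q_1'^{i'}$ and similarly $Q_2'^{j'}$ with $Q_2^j \sim Q_2'^{j'}$; by the induction hypothesis $Q_1^i \otimes Q_2^j \sim Q_1'^{i'} \otimes Q_2'^{j'}$, and the latter is a child of $Q_1' \otimes Q_2'$. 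The reverse direction is symmetric. If $z$ is an individual variable, the children are formed by (a) matching $\str(t)$-marked children of $Q_1$ with those of $Q_2$ carrying the same marker (transferring the marker, deleting it if the marker is $t$), (b) pairing the leftover children of one side with the $ext$-child of the other, and (c) the $ext\otimes ext$ child. The key point is that a child-level isomorphism $Q_1 \sim Q_1'$ must respect the $ext$-label and the $\str(t)$-markers on the edges to children — this is part of what it means to be an isomorphism of configuration evaluation trees, since those edge labels are the only labels present. So a marked child of $Q_1$ with marker $u$ corresponds to a marked child of $Q_1'$ with the same marker $u$, the $ext$-child corresponds to the $ext$-child, and the leftover children correspond to leftover children. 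Combining these correspondences on both sides with the induction hypothesis yields the required bijection-up-to-$\sim$ between the children of $Q_1 \otimes Q_2$ and those of $Q_1' \otimes Q_2'$, which is exactly $Q_1 \otimes Q_2 \sim Q_1' \otimes Q_2'$.

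\textbf{Main obstacle.} I expect the delicate part to be the individual-variable case of the inductive step, specifically bookkeeping the interaction between the $\str(t)$-marker matching (step (a)) and the leftover/$ext$-pairing (steps (b)–(c)): one must check that the set of children "excluded in the previous case" is preserved under the child-isomorphisms, so that the leftover children of $Q_1$ and of $Q_1'$ correspond and likewise on the $Q_2$ side, and that no child gets double-counted or dropped. This amounts to verifying that "being a child whose connecting edge carries marker $u$" (resp.\ "$ext$", resp.\ "no label") is an isomorphism-invariant property of a child, which follows from the fact that these are the only edge labels in a configuration evaluation tree and isomorphisms preserve labels by definition. Once that invariance is pinned down, the case analysis is routine and the induction closes. (The consistency requirements of a configuration evaluation tree are preserved by $\otimes$, as already noted after \cref{def:treeProduct}, so the objects we manipulate remain legitimate configuration evaluation trees throughout.)
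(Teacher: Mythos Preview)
Your approach is correct and is essentially what the paper intends: the paper's own proof merely defers to \cite[Lemma~5.20]{DM24}, and the argument there is exactly this structural induction on the height of the configuration evaluation trees following the recursive definition of $\otimes$. The one subtlety you correctly flag---that the child-matching in the individual-variable case must respect the $ext$ and $\str(t)$ edge labels---is indeed the only delicate point; note that although the isomorphism definition does not literally mention edge labels, each such label is recoverable from the isomorphism type of the child subtree via the consistency requirement on its leaf configurations, so the invariance you need holds.
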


\begin{proof}\renewcommand\qedsymbol{$\diamondsuit$}
The proof of combining lemma follows almost identically the proof of \cite[Lemma 5.20]{DM24}.
\end{proof}

In other words, \cref{cl:combining-lemma} states that the product of two compact evaluation trees is isomorphic to the product of the respective partial evaluation trees.
Let $Q_1,Q_2$ be two partial evaluation trees (with specified vertices in $\str(v)$) such that $Q_1,Q_2$ describe disjoint graphs (except configurations in leaves of both $Q_1$ and $Q_2$ can be marked by $\str(v)$ for some $v\in G$).
Then observe that the $Q_1\otimes Q_2$ results in the partial evaluation tree (with specified vertices marked by $\str(v)$) of the disjoint union of the corresponding graphs.
Also, observe that the case when the adjacencies between $v$ and $\str(v)$ are determined and subsequently the markings by $v$ are removed also preserves the definition of partial evaluation trees.
Hence, we conclude that we can build a partial evaluation tree of $G$ with induction bottom-up on the elimination tree using the tree product $\otimes$.
However, the partial evaluation tree in the root is the full evaluation tree (we can omit any branches containing external vertices).
Therefore, it can be evaluated to determine the truthfulness of the formula $\phi$.
Using \cref{cl:combining-lemma} we can replace the partial evaluation tree at each step of the elimination tree with its isomorphic compact evaluation tree and perform the tree product $\otimes$ on the compact evaluation trees instead.
By induction on the tree decomposition bottom up, we obtain a compact evaluation tree of the root which is isomorphic by \cref{cl:combining-lemma} to the full evaluation tree of the root.
But again, a compact evaluation tree of the root can be evaluated as we can ignore external vertices and there are no vertices in $\str(v)\setminus\{v\}$ where $v$ is a root of elimination tree $F$.
Therefore we conclude the proof as we have shown Property \hyperref[def:0]{0)} holds.
\end{proof}

\bibliographystyle{alphaurl}
\bibliography{biblioLOCAL, biblio}

\end{document}